\newtheorem{theorem}{Theorem}
\newtheorem{proposition}[theorem]{Proposition}%
\newtheorem{lemma}[theorem]{Lemma}
\newtheorem{example}{Example}%
\newtheorem{remark}{Remark}%
\newtheorem{definition}{Definition}
\newtheorem{model}{Model}
\newcommand{\ceil}[1]{\lceil {#1} \rceil}
\newcommand{\floor}[1]{\lfloor {#1} \rfloor}
\newcommand\norm[1]{\lVert#1\rVert}
\DeclarePairedDelimiter{\abs}{\lvert}{\rvert}
\newcommand{\R}{\mathbb{R}}
\newcommand{\ncp}{K}
\newcommand{\E}[1]{\mathbb{E}\left[#1\right]}
\renewcommand{\hat}{\widehat}
\newenvironment{unlist}{%
  \begin{list}{}%
    {\setlength{\labelwidth}{0pt}%
     \setlength{\labelsep}{0pt}%
     \setlength{\topsep}{\medskipamount}%
     \setlength{\itemsep}{3pt}%
     \setlength{\leftmargin}{2em}%
     \setlength{\itemindent}{-2em}}}
{\end{list}}
\providecommand{\keywords}[1]
{
  \small	
  \textbf{\textbf{Keywords: }} #1
}
\begin{document}
\title{Multiscale Quantile Regression with Local Error Control}
\author{Zhi Liu \thanks{Email: zhi.liu@uni-goettingen.de}}
\author{Housen Li \thanks{Email: housen.li@mathematik.uni-goettingen.de}}
\date{27. March 2025}
\affil{\small{Institute for Mathematical Stochastics, Georg August University of G\"ottingen \protect\\
Goldschmidtstraße 7, 37077, G\"ottingen, Germany}}
\maketitle

\begin{abstract}
For robust and efficient detection of change points, we introduce a novel methodology MUSCLE (\underline{m}ultiscale q\underline{u}antile \underline{s}egmentation \underline{c}ontrolling \underline{l}ocal \underline{e}rror) that partitions serial data into multiple segments, each sharing a common quantile. It leverages multiple tests for quantile changes over different scales and locations, and variational estimation. Unlike the often adopted global error control, MUSCLE focuses on local errors defined on individual segments, significantly improving detection power in finding change points. Meanwhile, due to the built-in model complexity penalty, it enjoys the finite sample guarantee that its false discovery rate (or the expected proportion of falsely detected change points) is upper bounded by its unique tuning parameter. Further, we obtain the consistency and the localization error rates in estimating change points, under mild signal-to-noise-ratio conditions. Both match (up to log factors) the minimax optimality results in the Gaussian setup. All theories hold under the only distributional assumption of serial independence. Incorporating the wavelet tree data structure, we develop an efficient dynamic programming algorithm for computing MUSCLE. Extensive simulations as well as real data applications in electrophysiology and geophysics demonstrate its competitiveness and effectiveness. An implementation via R package \texttt{muscle} is available on GitHub.
\end{abstract}

\keywords{Change points, false discovery rate, minimax optimality,  robust segmentation, wavelet tree, ion channel.}

\tableofcontents
\section{Introduction}\label{s:intro}
Detecting and localizing abrupt changes in time series data has been a vibrant area of research in statistics and related fields since its early study (e.g.\ \citealp{wald1945sequential,page1955test}). See \citet{NHZ16} and \citet{TOV20} for a comprehensive overview.  These methods find applications across a wide range of fields, such as bioinformatics \citep{Futschik14},  ecosystem ecology \citep{Mae20}, electrophysiology \citep{pein2018fully,IDC25}, epidemiology \citep{Pries20},  economics and finance \citep{russell2019breaks}, molecular biophysics \citep{fan2015identifying} and social sciences \citep{liu2013change}. The extensive applications have sparked an explosion of methods for detecting changes in the recent literature. Notable examples include the regularised maximum likelihood estimation (e.g.\ \citealp{boysen2009consistencies,HaLe10,killick2012optimal,du2016stepwise,maidstone2017optimal}), the multiscale approaches (e.g.\ \citealp{frick2014multiscale,li2016fdr,pein2017heterogeneous,mosum18}) and binary segmentation type methods (e.g.\ \citealp{vostrikova1981detecting,fryzlewicz2014wild,baranowski2019narrowest,fang2020segmentation,kovacs2023seeded}), to name only a few. 

There is, however, a relative lack of understanding on whether detected change points truly reflect genuine structural changes, arise from random fluctuations, or result from statistical modelling errors. The distinction poses a delicate challenge in practical applications, necessitating an in-depth grasp of the data as well as the employed change point detection methods. In pursuit of an automatic and versatile solution, we introduce here a \emph{robust} change point estimator, named MUSCLE (\underline{m}ultiscale q\underline{u}antile \underline{s}egmentation \underline{c}ontrolling \underline{l}ocal \underline{e}rror; see~\cref{S: 2}), that focuses on the recovery of genuine structural changes while maintaining high detection power as if a strong statistical model were faithfully available. Importantly, we adopt a quantile segmentation perspective and focus on local error control instead of often considered global error control, as explained in detail below. 

\subsection{Quantile segmentation model}

Following \citet{jula2022multiscale} and \citet{fryzlewicz2024robust}, we inspect the underlying structural changes in terms of changes in quantiles. 

\begin{model}[Quantile segmentation]\label{QSR}
Let $\beta \in (0,1)$ and ${f}:[0,1) \to \R$ be a piecewise constant function
\[
{f}  = \sum_{k=0}^{\ncp} \theta_k \mathbbm{1}_{[\tau_k,\tau_{k+1})}\quad\text{with } 0=\tau_0 <\cdots <\tau_{\ncp+1} = 1
\]
where the number $\ncp$ and the locations $\tau_k$ of change points, and the {segment values} $\theta_k$ are all \emph{unknown}. Assume that  $\theta_k\neq \theta_{k+1}$ and the observations $Z = (Z_1,\dots,Z_{n})$ are \emph{independent} random variables with their $\beta$-quantiles equal to ${f}(x_i)$, i.e.,
\begin{equation*}
\mathbb{P}\bigl(Z_i\leq {f}(x_i)\bigr) = \beta \quad\text{ or }\quad F_{i}\bigl({f}(x_i)\bigr) = \beta
\end{equation*}
where $x_i = (i-1)/n$, $i = 1,\ldots, n$, are equidistant sampling points, and $F_{i}$ the distribution function of $Z_i$.
\end{model}

Equivalently, Model~\ref{QSR} can be written into a typical form of additive noise model
\begin{equation}\label{e:noise_signal}
Z_i = {f}\left(x_i\right) +\varepsilon_i,\quad \quad i = 1, \dots, n,   
\end{equation}
where noise terms $\varepsilon_i$ are independent with their $\beta$-quantiles equal to zero. For instance,  the choice of $\beta = 0.5$ corresponds to the median segmentation regression. The subscript $\beta$ will be suppressed whenever there is no ambiguity. Since our aim is to provide a $\beta$-quantile segmentation, with an arbitrary yet fixed $\beta\in(0,1)$, we do not consider $\beta$ as a {tuning} parameter.  

We emphasize that no distributional assumptions are imposed on the observations~$Z_i$, or equivalently on the noise~$\varepsilon_i$, beyond statistical independence. 
This framework accommodates arbitrarily heavy-tailed, skewed, or heterogeneous noise distributions, as well as outliers or mixtures thereof (see~\cref{S: 5.1.1} for examples). 
Moreover, the noise~$\varepsilon_i$ may follow discrete distributions, provided that $\mathbb{P}(\varepsilon_i \le 0) = \beta$.  When multiple quantile levels~$\beta$ are considered in Model~\ref{QSR}, the proposed method can further detect distributional changes in the observations (see~\cref{ss:cp_box} and~\cref{f:MMUSCLE}). 
Finally, if we focus on a single quantile level, the assumption of independence among the observations~$Z_i$ can be relaxed to the independence of the truncated random variables~$\mathbbm{1}_{\{Z_i \le f(x_i)\}} \equiv \mathbbm{1}_{\{\varepsilon_i \le 0\}}$.

\subsection{Local error control}\label{S: local error control}
As a motivating example, we consider a specific instance of the additive noise model~\eqref{e:noise_signal} with sample size $n = 400$. The signal is defined as 
$$
{f} = 2\cdot\mathbbm{1}_{[0.05,0.1)} -2 \cdot \mathbbm{1}_{[0.6,0.725)}{+ 1\cdot \mathbbm{1}_{[0.725,1)}}
$$ 
and additive errors  $\varepsilon_i$ are centered (i.e.\ $\beta = 0.5$) with $\varepsilon_i \sim \mathcal{N}(0,1)$ if $1\leq i\leq 100$ and $\varepsilon_i \sim t_3/(2\sqrt{3})$ if $101\leq i\leq 400$. Here and below $\mathcal{N}(\mu, \sigma^2)$ denotes the normal distribution with mean $\mu$ and variance $\sigma^2$, and $t_m$ the Student's~$t$ distribution of $m$ degrees of freedom. To illustrate the impact of data windowing, we apply the proposed MUSCLE method and several recent robust or nonparametric segmentation approaches (\citealp{haynes2017computationally,baranowski2019narrowest,fearnhead2019changepoint,madrid2021optimal,jula2022multiscale,fryzlewicz2024robust}) to two subsets of the data: the first 80 observations and the full set of 400 observations, see~\cref{f: E1}. All methods, except MUSCLE, produce markedly different estimates (shown as solid and dashed lines) of the signal $f$ on the interval $[0,\, 80/n)$ when applied to these two data windows. This reflects the often unspoken fact that the choice of data window acts as a hyper-parameter in real world data applications (cf.\ \citealp{ESL23}). The reason behind is that many data analysis methods are globally tuned to the specific data window they are applied to.

\begin{figure}[H]
\centering
\includegraphics[width=\linewidth]{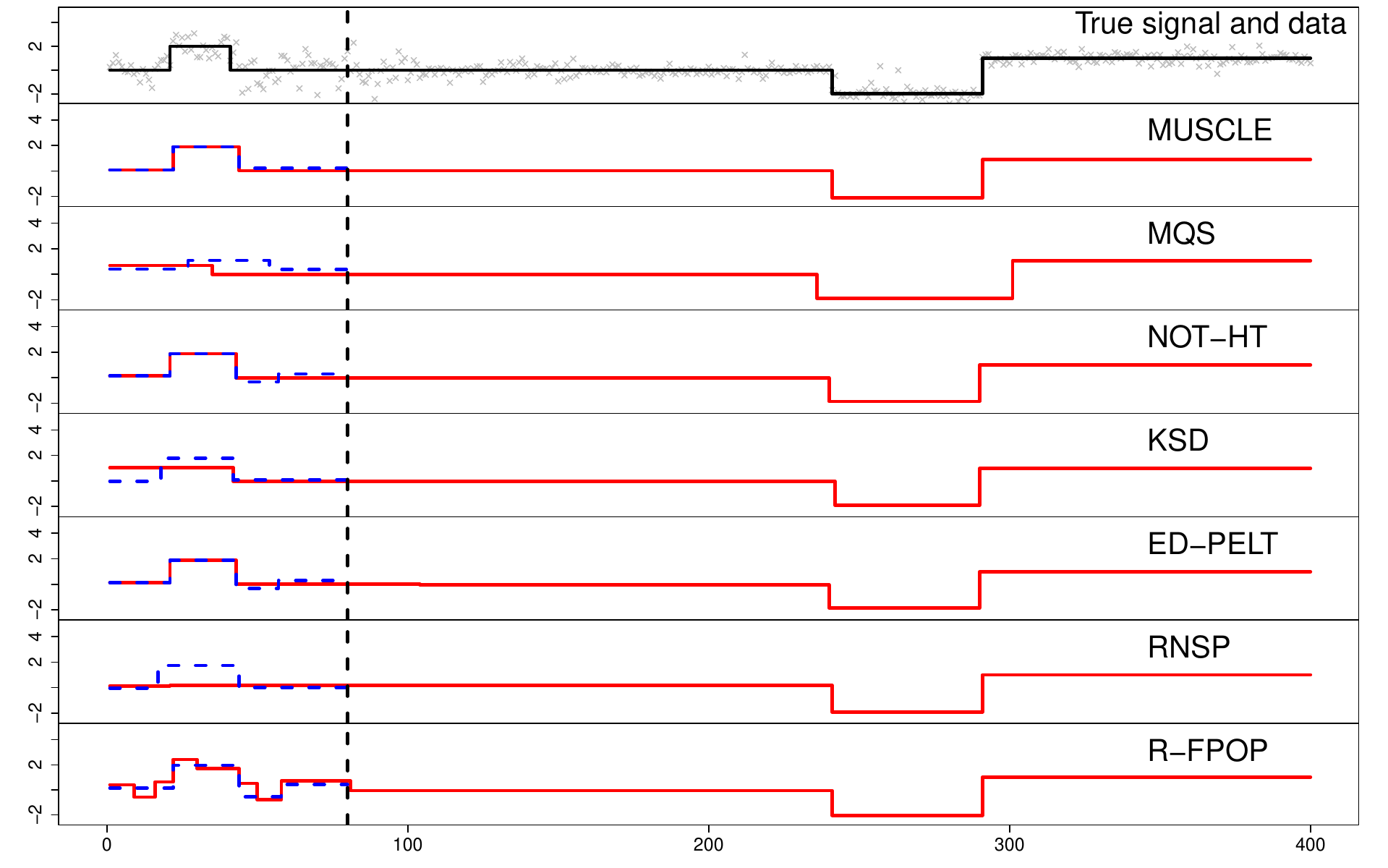}
\caption{Influence of data windowing on segmentation methods. The top panel shows the true signal (line) and data (crosses). The lower ones display the segmentation results using the first 80 observations (dashed line) and using the whole 400 observations (solid line) for the proposed MUSCLE, MQS~\citep{jula2022multiscale}, NOT-HT~\citep[the heavy tailed version]{baranowski2019narrowest}, KSD~\citep{madrid2021optimal}, ED-PELT~\citep{haynes2017computationally}, RNSP~\citep{fryzlewicz2024robust} and R-FPOP~\citep{fearnhead2019changepoint}. The vertical dashed line marks the 80th observation.}\label{f: E1}
\end{figure}

The distinctive feature of MUSCLE is its stability across different data windows, which primarily stems from  its control of local errors on individual segments. Specifically, it controls the probability of falsely detecting change points across all local intervals within every estimated segment. This stability not only simplifies real-data analysis in practical applications, but also has important implications for computational efficiency and statistical power. For instance, the data can be split into equally sized segments, MUSCLE can be applied to each segment independently, and the results can then be merged to obtain a final estimate. This splitting-merging approach yields a computation time that scales linearly with the sample size $n$, and enables a linear speedup through parallelization with multiple computing units (see~\cref{S: implementation} and, in particular, \cref{r: compleixity}). Besides, compared to global error control, local error control, being a weaker criterion, can enhance the power to detect change points, although it may lead to an increased risk of false positives. MUSCLE addresses this by incorporating a built-in penalty on model complexity, with false positive control governed by a single tuning parameter. Further, local error control is especially well suited for change point detection, as a change point is inherently a local feature occurring between two adjacent segments.

\subsection{Our contribution and related work}
Under the quantile segmentation model (i.e., Model~\ref{QSR}), the proposed MUSCLE scans over intervals of varying scales and locations in search for  changes in quantile, and estimates the underlying quantile function in the framework of variational multiscale estimation \citep{HaLiMu22}. It penalizes the model complexity in terms of the number of change points, under a convex polytope constraint that builds upon a meticulously crafted multiscale statistics towards local error control on each individual segment. Due to this construction, MUSCLE is robust and has a high detection power at the same time, as supported by numerous simulations and real data analyses (see~\cref{S: 4}). Moreover, we showcase two extensions of MUSCLE to the detection of changes in distribution (M-MUSCLE; \cref{ss:cp_box}) and to a dependent scenario (D-MUSCLE; \cref{S: 5.2.2}).

Let $\hat{f}$ be the MUSCLE estimator and $\hat{\ncp}$ its number of estimated change points, and assume Model~\ref{QSR}. We have obtained the finite sample guarantee (\cref{t: FDR}) on falsely detected change points
$$
\E{\tfrac{\max\{\hat{\ncp} - \ncp, 0\}}{\max\{\hat{\ncp},1\}}}\; \le\; \alpha
$$
where $\alpha \in (0,1)$ is the only tuning parameter of MUSCLE. This bound, together with the fact that a larger $\alpha$ often leads to a larger $\hat{\ncp}$, allows data analysts to tune MUSCLE according to their needs in real data applications. 

Besides, we have derived consistency (\cref{t: consistency 1}) and bounds on localization errors (\cref{t: Hausdorff}) for MUSCLE in estimating change points, under the detectability condition: 
\begin{equation}\label{e:ms_cond}
 \min_{i \in\{k,k+1\}}\{\tau_i - \tau_{i-1}\}\cdot \min_{i \in[\tau_{k-1}, \tau_{k+1})} \left|F_{i}\left(\tfrac{\theta_k + \theta_{k-1}}{2}\right) - \beta \right|^2
 \ge \frac{C\log n}{n},\qquad k = 1,\ldots, K,    
\end{equation}
for some large enough constant~$C$. These results match, up to log factors, the results of minimax optimality in the Gaussian setup \citep{verzelen2023optimal}. Further, our consistency result follows from the stronger results of exponential bounds on under- and overestimation errors (\cref{t: under,t: over}).  

Computationally, we have developed an efficient dynamic programming algorithm for MUSCLE, which has a worst-case computational complexity of $O(n^3 \log^2 n)$ and requires $O(n \log n)$ memory (see~\cref{S: implementation}). The algorithm's key components include the wavelet tree \citep{Nav14} data structure and the use of a sparse collection of intervals at different scales and locations \citep{Walther2022Calibrating}. In contrast, the commonly used linear data structure would increase the computational complexity by a factor of $O(n)$, and the double heap structure, as used in \citet{jula2022multiscale}, would result in a worst-case memory requirement of $O(n^2)$. The use of all intervals would also lead to an additional factor of $O(n)$ in the computational complexity.  

The proposed MUSCLE belongs to the family of multiscale approaches (initiated by \citealp{frick2014multiscale}), among which the most closely related methods are FDRSeg~\citep{li2016fdr} and MQS~\citep{jula2022multiscale}. MUSCLE extends FDRSeg, designed for the standard Gaussian noise, to handle general independent noise, and significantly enhances the detection power of MQS by switching to local error control. Further, the statistical results in this paper are sharper and more general compared to the aforementioned two papers. More precisely, we have improved the upper bound of false discovery rate in \citet{li2016fdr} from $2\alpha/(1 - \alpha)$ to $\alpha$, while also admitting of general distributions beyond Gaussian, by employing a completely different proof technique (see~\cref{S: 3.2}). In comparison with~\citet{jula2022multiscale}, we have refined the overestimation bound on the number of change points (\cref{t: over}), derived a quantitative estimate of the tail probability of multiscale statistics (Lemma~\ref{l: ub q_n} in Appendix~\ref{A1}), and reformulated the detection condition into a weaker form~\eqref{e:ms_cond}, which allows frequent large jumps and rare small jumps over long segments to occur at the same time. These improvements are mainly due to the use of strengthened technical inequalities (e.g., Talagrand's concentration inequality for convex distances) and a subtle adjustment of multiscale statistics (see footnote~\ref{f:tech} on page~\pageref{f:tech}). In addition, the optimization problem defining MUSCLE poses notable challenges compared to FDRSeg and MQS. However, by incorporating the wavelet tree data structure, we are able to achieve nearly (up to a log factor) the same worst-case computational complexity as FDRSeg. 

Further related works include robust segmentation methods. \citet{baranowski2019narrowest}, though primarily considered Gaussian setups, suggested a heavy tail extension (NOT-HT) that truncates the data into $\pm 1$ by a certain sign transform and then treats the truncated data as if they were Gaussian distributed. \citet{fryzlewicz2024robust} further developed this idea, and introduced Robust Narrowest Significance Pursuit (RNSP), which utilizes a sign-multiresolution norm, and focuses on changes in medians. From a penalized M-estimation perspective, \citet{fearnhead2019changepoint} proposed R-FPOP employing Tukey's biweight loss, and derived asymptotic properties for a restrictive scenario with fixed change points. Another line of related works are nonparametric change point methods, which assume that the observations are piecewise i.i.d.\ (a  stronger model than Model~\ref{QSR}; cf.\ \citealp{dumbgen91}). Recently, \citet{MaJa14} introduced a hierarchical clustering-type method based on the energy distance between probability measures, and proved consistency results (with no rates) under additional moment conditions. Utilizing a weighted Kullback--Leibler divergence, \citet{zou2014nonparametric} proposed Nonparametric Multiple Change-point Detection (NMCD), together with consistency and localization rates of change points. As a computational speedup variant of NMCD, \citet{haynes2017computationally} introduced ED-PELT by simplifying the involved cost function through discrete approximation. Besides, \cite{madrid2021optimal} developed a CUSUM procedure (KSD) based on the Kolmogorov--Smirnov distance, and established better statistical guarantees than~\citet{zou2014nonparametric}, but requiring prior knowledge on the minimum segment length. We include a large selection of the above methods in our comparison study on simulated and real world data (see \cref{S: 4}).

\subsection{Outline and notation}
\cref{S: 2} introduces MUSCLE, a novel methodology for multiple change point segmentation under Model~\ref{QSR}. \cref{S: 3} delves into the statistical theories of MUSCLE, including consistency, localization error rates of change points and finite sample guarantees on falsely detected change points. \cref{S: implementation} discusses the computation of MUSCLE. \cref{S: 4} presents simulation studies and real data applications. The paper concludes with a discussion in \cref{S: 6}. Additional material and proofs are given in the Appendix. The {R} package \texttt{muscle} implementing MUSCLE and its variants is available on GitHub via \url{https://github.com/liuzhi1993/muscle}. 

By convention, all intervals in this paper are of form $[a,b)$, with $a,b \in \R$ and $a < b$. For sequences $\{a_m\}$ and $\{b_m\}$ of positive numbers, we write $a_m \lesssim b_m$  if $a_m \le C b_m$ for some finite constant $C > 0$. If $a_m \lesssim b_m$ and $b_m \lesssim a_m$, we write $a_m \asymp b_m$. We write $X \overset{\mathcal{D}}{=}Y$ to indicate that random variables $X$ and $Y$ have the same distribution. We write $(x)_+ = \max\{x, 0\}$ for $x\in\R$. By $\bigsqcup$ we denote the disjoint union. By $\# S$ we denote the cardinality of a set $S$.

\section{Methodology}\label{S: 2}

\subsection{Multiscale test on a segment}\label{ss:mtest}
Assume that observations $Z_1,\ldots,Z_n $ are from Model~\ref{QSR} {with $\beta \in (0,1)$}, and let interval $I \in \{[\tau_{k}, \tau_{k+1}) : k = 0, \ldots, \ncp\} $ be an arbitrary segment of the quantile function $f$. It follows that $\{Z_i : x_i \equiv (i-1)/n \in I \}$ are independent and have the common $\beta$-quantile, whereas their distributions $F_i$ may not be identical. Towards a uniform treatment, we transform the data into simple binary samples as 
\begin{equation*}
W_i \coloneqq W_{i}(Z_i,\theta) \coloneqq \mathbbm{1}_{\{Z_i\leq  \theta\}}, \quad i=1,\dots,n.
\end{equation*}
If $\theta$ is equal to the common $\beta$-quantile of $\{Z_i : x_i \in I \}$, the transformed samples $\{W_i : x_i \in I \}$ are i.i.d.~Bernoulli random variables with mean $\beta$. The condition of $\theta$ equal to the common quantile can be verified using the data, when viewed as a hypothesis testing problem. We tackle the testing problem by multiscale procedures, which build on a delicate combination of  multiple simple tests, and are known to be optimal in the minimax sense (see e.g.\ \citealp{dumbgen2001multiscale,frick2014multiscale}). Specifically, we consider the testing problems, for intervals $J \subseteq I$, 
\begin{align*}
    & H_{J}: \mathbb{P}(Z_i\leq \theta) = \beta,\quad  x_i \in J, \\
    \text{versus}\quad & A_{J}: \mathbb{P}(Z_i\leq \theta) \neq \beta, \quad x_i \in J.
\end{align*}
For each of them, we employ the likelihood ratio test statistic
\begin{equation*}
    L_J(Z,\theta) =  \abs{J}\left[\overline{W}_J \log \left(\tfrac{\overline{W}_J}{\beta}\right)+(1-\overline{W}_J)\log \left(\tfrac{1-\overline{W}_J}{1-\beta}\right)\right],
\end{equation*}
where $\abs{J} = \#\{i : x_i \in J\}$ and $\overline{W}_J = \left(\sum_{i\in J}W_i\right)/\abs{J}$. {Here, we use the convention $0\log 0 = 0$.} The null hypothesis $H_J$ is {rejected} if $L_J$ exceeds a certain critical value. As in~\citet{frick2014multiscale}, we incorporate all of such tests with the control of family-wise error, and balance the detection powers over different scales of $\abs{J}$. This leads to the \emph{multiscale statistic} with scale penalty 
\begin{equation}\label{e: T_I}
T_{I}(Z,\theta)\; = \; \underset{J\subseteq {I}}{\max}\sqrt{2L_J (Z,\theta)} - \sqrt{2\log\left(\tfrac{e\abs{I} }{\abs{J}}\right)}.
\end{equation}
To calibrate the family-wise error rate, define the critical value
\begin{equation}\label{e: q_m}
    q_{\alpha}(\abs{I})  = \min\left\{q :  \mathbb{P}\bigl(T_{I}(Z,\theta) > q \bigm\vert H_J, J \subseteq I\bigr)\leq \alpha \right\},
\end{equation}
which is also referred to as a local quantile on the segment $I$. Note that $\{W_{i}(Z_i,\theta)\,:\,x_i \in I\}$ are i.i.d.~Bernoulli distributed with mean $\beta$, under all $H_J$ for $J \subseteq I$. Thus, the distribution of $T_{I}(Z,\theta)$ conditioned on all $H_J$ for $J \subseteq I$ is known and equal to the maximum of dependent and transformed Bernoulli distributions. A simple and explicit form of $q_{\alpha}(\abs{I})$ is not yet available, but we can estimate it by Monte--Carlo simulations. Thus, the decision problem of whether $\{Z_i : x_i \in I \}$ have the same $\beta$-quantile equal to $\theta$ can be addressed by the multiscale test $\mathbbm{1}_{\{T_I(Z, \theta) > q_{\alpha}(\abs{I})\}}$ with the type I error no more than $\alpha$. 

\begin{remark}[Multiscale statistics]\label{r: penality and quantile}
The scale penalty term $\sqrt{2\log (e\abs{I}/\abs{J}})$ in \eqref{e: T_I} is motivated by the observation that $\max_{J: \abs{J} = m} \sqrt{2 L_J (Z, \theta)}$ asymptotically concentrates around $\sqrt{2 \log(\abs{I}/m)}$, see e.g.\ \citet{kabluchko2011extremes}. This penalty not only standardises the influence of scale $\abs{J}$ but also ensures the asymptotic tightness of $T_I$ as $\abs{I} \to \infty$. The form of scale penalty is though not unique, see e.g.\  \citet{Walther2022Calibrating} for alternative forms. Moreover, the distribution of $T_{I}(Z, \theta)$ conditioned on $\cap_{J\subseteq I} H_J$ converges weakly to an almost surely finite random variable as $\abs{I} \to \infty$ under a mild restriction on the smallest scale, namely, $\abs{J}\gg \log^3 \abs{I}$, see e.g.\ \cite{frick2014multiscale} and \citet{dumbgen2001multiscale}. It follows that  $\limsup_{\abs{I}\to \infty}q_{\alpha}(\abs{I}) <\infty$ for any fixed $\alpha \in (0,1)$. Further, we derive an explicit upper bound of the form $\limsup_{\abs{I}\to \infty}q_{\alpha}(\abs{I}) \lesssim \sqrt{-\log \alpha}$, see Lemma~\ref{l: ub q_n} in Appendix~\ref{A1}. This bound may serve as an estimate of $q_{\alpha}(\abs{I})$ particularly when $\abs{I}$ is so large that Monte--Carlo simulations become computationally unaffordable. Alternatively, we might be able to approximate $q_{\alpha}(\abs{I})$ in a deterministic manner, following ideas in \cite{fang2020segmentation}. Their results, however, do not directly apply due to the presence of the scale penalty. The exploration of this possibility is an interesting avenue of future research.
\end{remark}

\subsection{The proposed MUSCLE}
A reasonable candidate $g$ for the recovery of the quantile function $f$ in Model~\ref{QSR} should pass the multiscale tests on individual segments. Thus, we restrict ourselves to functions lying in the multiscale side-constraint\footnote{In multiscale segmentation methods (e.g.\ \citealp{frick2014multiscale,li2016fdr,pein2017heterogeneous,LiGM19,jula2022multiscale}), the side constraint takes the form of $T_I(Z, \theta) - q_{\alpha}(\abs{I}) \le 0$, whereas in~\eqref{e: C_k} we use $\mathring{I}$ instead of $I$. This subtle adjustment ensures the independence of test statistics on adjacent segments and allows improved bounds on the distributional behaviors. For instance, in later \cref{t: over}, with such an adjustment, we can improve the upper bound of probability $\mathbbm{P}(\widehat{\ncp}\geq \ncp +k)$ from $\alpha^{\floor{k/2}}$ to $\alpha^k$ for all $k\in \mathbbm{N}$.\label{f:tech}}
\begin{equation}\label{e: C_k}
    \mathcal{C}_k \coloneqq \biggl\{g= \sum_{j=0}^{k} \theta_j \mathbbm{1}_{{I}_j} \;:\;  \bigsqcup_{j = 0}^k I_j = [0,1), \,
    T_{\mathring{I}_j}(Z,\theta_j) - q_{\alpha}(\abs{\mathring{I}_j}) \leq 0, \, j=0,\dots,k\biggr\},
\end{equation}
where $\mathring{I}$ denotes the interior of interval $I$. Note that in this particular form of $\mathcal{C}_k$ we control the type I error \emph{locally} on every segment $I_i$. Further, to avoid overfitting to the data, we minimize the complexity of candidates, which is measured by the number of change points, and define 
\begin{equation}\label{e: hat_K}
    \widehat{\ncp} \coloneqq \min\bigl\{k \in \mathbbm{N} \cup \{0\}\, : \,\mathcal{C}_k \neq \emptyset\bigr\}.
\end{equation}
We introduce the \underline{m}ultiscale q\underline{u}antile \underline{s}egmentation \underline{c}ontrolling \underline{l}ocal \underline{e}rrors (MUSCLE) as the candidate in $C_{\hat{K}}$ that fits best to the data, namely, 
\begin{equation}\label{e: MUSCLE}
    \hat{f} \; \in \; \underset{g\in \mathcal{C}_{\widehat{\ncp}}}{\arg \min}\,\sum_{i = 1}^{n} \bigl(Z_i-g(x_i)\bigr)\bigl(\beta-\mathbbm{1}_{\{Z_i\le g(x_i)\}}\bigr).
\end{equation}
The fitness to the data is quantified with respect to the {asymmetric absolute deviation loss} (a.k.a.~check loss), which is typically employed in defining quantiles (see e.g.\ \citealp{koenker_2005}). 

Note that $\tilde{f} = \sum_{i=1}^{n} Z_i \mathbbm{1}_{[x_{i},x_{i+1})}$ clearly lies in $\mathcal{C}_{n-1}$, so $\hat{\ncp}$ is finite and takes values in $\{0, 1, \ldots, n-1\}$. Thus $\mathcal{C}_{\hat{K}} \neq \emptyset$ and consequently a solution $\hat{f}$ of \eqref{e: MUSCLE} always exists. Owing to the specific structure of the optimization problem, $\hat{f}$ can be efficiently computed in the dynamic programming framework (see \cref{S: implementation}).

\subsection{Comparison with global error control}
A distinctive feature of MUSCLE is the control of local errors on individual segments, which is built in the construction of the multiscale side-constraint $\mathcal{C}_k$ in \eqref{e: C_k}. To further understand the strength of local error control, we compare MUSCLE with MQS (Multiscale Quantile Segmentation; \citealp{jula2022multiscale}), a multiscale method but with the global error control. The distinction between these two methods lies in the construction of multiscale side-constraint. MQS searches for a candidate with minimum number of change points and minimum asymmetric absolute deviation loss over 
\begin{equation*}
    \mathcal{C}_k^0 \coloneqq \biggl\{g= \sum_{j=0}^{k} \theta_j \mathbbm{1}_{I_j} \;: \; \bigsqcup_{j = 0}^k I_j = [0,1)\text{ and }{T}^0_{I_j}(Z,\theta_j) - q_{\tilde{\alpha}}(n) \leq 0, \, j=0,\dots,k \biggr\},
\end{equation*}
where $T_I^0(Z,\theta)\coloneqq {\max}_{J \subseteq I}\sqrt{2L_J (W,\theta)} - \sqrt{2\log\left({en}/{\abs{J}}\right)}$ and $\tilde{\alpha} \in (0,1)$ is the tuning parameter. Compared to \eqref{e: C_k}, the difference in quantiles and penalties arises from the control of different errors. The global error control of MQS ensures that the true quantile function $f$ lies in $\mathcal{C}_K^0$ with probability no less than $1-\tilde\alpha$. By contrast, the local error control of MUSCLE guarantees that each segment of the true quantile function $f$ satisfies the side-constraint with probability no less than $1 - \alpha$, which further implies that $f$ lies in $\mathcal{C}_K$ with probability no less than~$(1 - \alpha)^{K+1}$. Then, unless $\tilde{\alpha} \ge 1 - (1-\alpha)^{K+1}\approx (K+1)\alpha$, the multiscale side-constraint $\mathcal{C}_K$ is generally smaller than~$\mathcal{C}^0_K$. As a consequence, MUSCLE is likely to report more change points than MQS. Thus, the local error control enables MUSCLE to overcome the conservativeness (i.e., the tendency to miss true change points) that is often observed for multiscale segmentation methods. 

As a simulation study, we consider the recovery of the teeth function that contains $K = 80$ change points under the additive noise model~\eqref{e:noise_signal}, where the random errors are scaled Student's $t_3$-distributed with standard deviation equal to one. The simulation results are given in \cref{f: MQS MUSCLE} and \cref{t: MQSE MUSCLE}. Clearly, MUSCLE ($\alpha = 0.3$) recovers nicely the number and the locations of change points, while MQS ($\tilde{\alpha}= 0.3$) severely underestimates the number of change points.  To enhance the detection power of MQS, we adjust $\tilde{\alpha} = 1- (1-0.3)^{81} \approx 0.999$. This adjustment significantly improves MQS in the estimation of $\ncp$, bringing its performance closer to MUSCLE, but at the expense of nearly no statistical guarantee. Still, MQS with such a large $\tilde{\alpha}$ underperforms MUSCLE in estimating the change point locations as well as the whole signal. See \cref{S: 4.1} (and also \cref{S: local error control}) for further comparisons. 

\begin{figure}[ht]
        \centering
        \includegraphics[width=\linewidth]{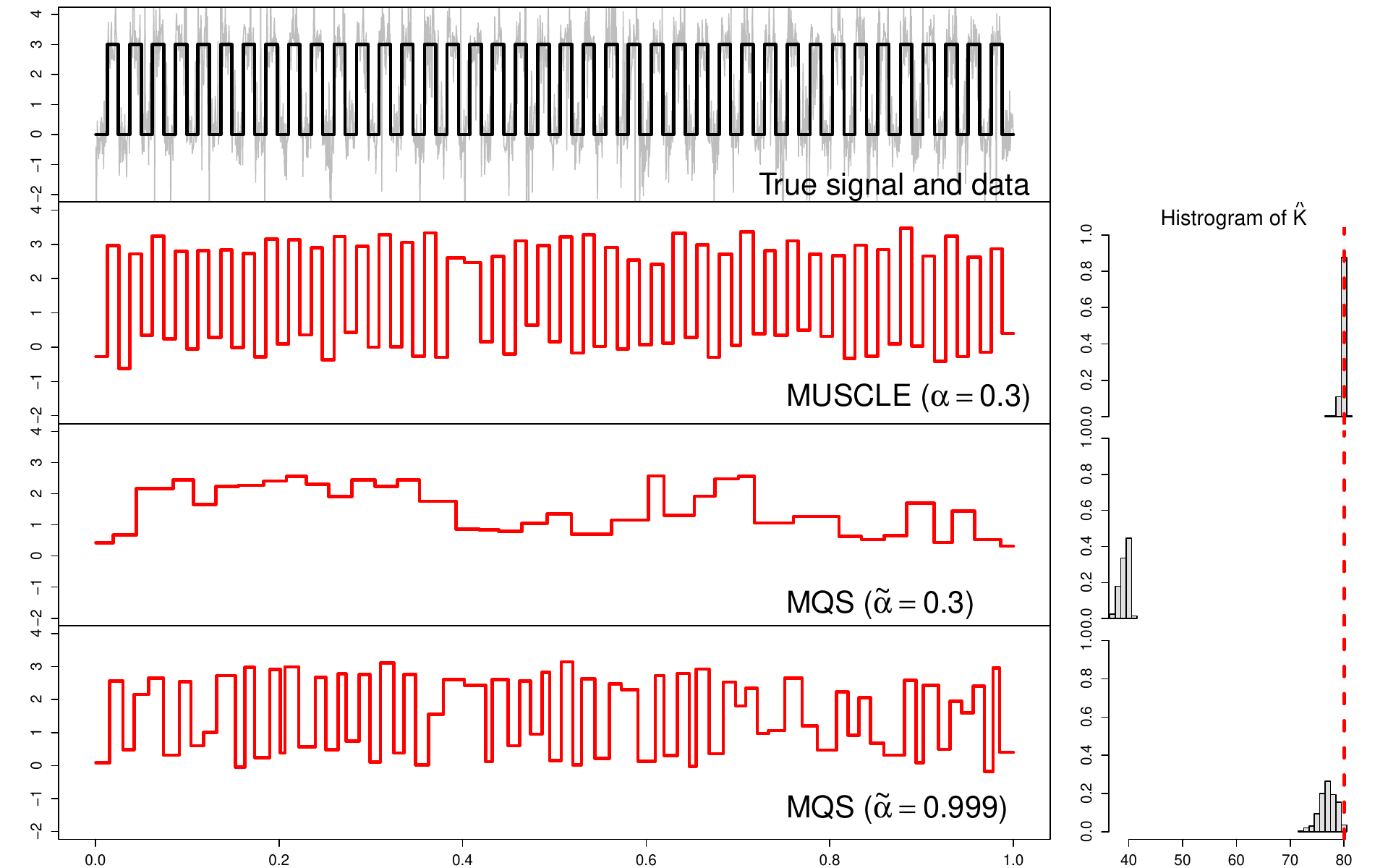}
        \caption{Comparison of the proposed MUSCLE and MQS \citep{jula2022multiscale}.  The true signal (black line), which is a teeth function, and data (gray line; $2000$ samples) are shown in the top panel. The estimates by MUSCLE ($\alpha = 0.3$), MQS ($\tilde{\alpha} = 0.3$) and MQS ($\tilde{\alpha} = 0.999$) are plotted in lower panels. The performance of each method, over $200$ repetitions, in estimating the number of change points is visualised by histograms on the right hand side. The true number of change points $K = 80$ is marked by a vertical dashed line.}
        \label{f: MQS MUSCLE}
\end{figure}

\begin{table}[htbp]
\centering
\caption{Comparison of MQS and MUSCLE in the setup of \cref{f: MQS MUSCLE}. For each criterion, the average value and the standard deviation (in brackets), over 200 repetitions, are presented. Here $d_H$ is the Hausdorff distance between estimated and true change points, MISE the mean integrated squared error of $\hat{f}$ and MIAE the mean integrated absolute error. Their formal definitions are given in \eqref{e: Hausdorff distance}, \eqref{e: MISE}  and \eqref{e: MIAE} in Appendix~\ref{A2}, respectively.}\label{t: MQSE MUSCLE}
\begin{tabular}{c|c|c|c}
\toprule
   &  MUSCLE ($\alpha = 0.3$) & MQS ($\tilde{\alpha} = 0.3$)  &  MQS  ($\tilde{\alpha} = 0.999$)      \\
\midrule
$\widehat{\ncp}$  & $80\;(0.4)$ & $40\;(0.5)$ & $77\;(1.6)$  \\
$d_{H}$ & $0.001\;(0.0030)$ & $0.015\;(0.0037)$& $0.008\;(0.0006)$\\
MISE & $0.1\;(0.041)$ & $2.7\;(0.065)$ & $1.8\;(0.108)$ \\
MIAE &  $0.2\;(0.020)$ &$1.4\;(0.008)$ & $0.9\;(0.057)$\\
\bottomrule
\end{tabular}
\end{table}

\section{Statistical Properties}\label{S: 3}
Here we provide statistical guarantees on the proposed MUSCLE under Model~\ref{QSR}. In all of the theoretical results, we allow the quantile function $f$ as well as distribution functions $F_i$ in Model~\ref{QSR} to change as sample size varies. We use subscript $n$ to emphasise such a dependence whenever needed. 

\subsection{Consistency and localization rates}
\label{S: 3.1}

Following \citet{jula2022multiscale}, we measure the changes in quantiles by the \emph{quantile jump function}. 
\begin{definition}[\!\!\citealp{jula2022multiscale}]
Let $F$ be a distribution function and $\beta \in (0,1)$. The $\beta$-quantile of $F$ is given by $F^{-1}(\beta) \coloneqq \inf\{x : F(x)\geq \beta\}.$
The \emph{quantile jump function} is defined by 
\begin{equation*}
    \xi_{F,\beta}(\delta) \coloneqq \left|F\bigl(F^{-1}(\beta)+\delta\bigr)-\beta\right |,\qquad\text{for}\quad \delta \in \R.
\end{equation*}
\end{definition}

\begin{example}[Distribution with continuous density]\label{ex: quantile jump}
    Let $F$ be a distribution function that is continuously differentiable. Then for $\beta\in (0,1)$ we have
        $\xi_{F, \beta}(\delta) = \abs{F(F^{-1}(\beta)+\delta)-\beta} = \abs{F(F^{-1}(\beta)+\delta)-F(F^{-1}(\beta))} = F'\bigl(F^{-1}(\beta)\bigr)\delta +o(\delta).$
    In particular, in the Gaussian model with changing means, the quantile jump function with $\beta = 0.5$ satisfies 
    \begin{equation*}
        \xi_{\Phi, 0.5}(\delta)\; \approx\;  \varphi(0)\delta\; = \;\frac{1}{\sqrt{2\pi}} \delta\; \approx \; \frac{2}{5}\delta,
    \end{equation*}
    where $\Phi$ and $\varphi$ denote the distribution function and the density of \emph{standard} normal distribution, respectively.
\end{example}
Recall the notation in Model~\ref{QSR}. For $k = 1,\dots,K$, let $\lambda_k = \min\{\tau_k-\tau_{k-1}, \tau_{k+1}-\tau_k\}$ and $I_k = (\tau_k-\lambda_k/2,\tau_k+\lambda_k/2)$, and define the minimum and maximum of $F_i$ on $I_k$ as
$$
F_{\min,k} = \underset{i: x_i \in I_k}{\min} F_i\quad\text{and}\quad
    F_{\max,k} = \underset{i : x_i\in I_k}{\max} F_i.
$$
For $k = 1,\ldots, K$, let $\delta_k = \theta_k - \theta_{k-1}$ and refer to its absolute value as \emph{jump size}. We use
\[ \xi_k = \min\left\{\xi_{F_{\min,k},\beta}(\delta_k/2),\xi_{F_{\max,k},\beta}(-\delta_k/2) \right\}\]
to characterize the quantile jump at $\tau_k$. In case of $F_i$ being continuously differentiable, $\xi_k$ is equivalent to $\abs{\delta_k}$ up to a constant if $\abs{\delta_k}$ is small, see \cref{ex: quantile jump}. Further, we define
\begin{equation}\label{e: LTO}
        \Lambda \coloneqq \underset{k=1,\dots,\ncp}{\min}\abs{\tau_k-\tau_{k-1}},\quad
        \Theta \coloneqq \underset{k=1,\dots,\ncp}{\min} \xi_k\; 
        \text{ and }\; 
        \Omega  \coloneqq \underset{k=1,\dots,\ncp}{\min} \sqrt{\lambda_k} \xi_k.
\end{equation}
Here $\Lambda$ is the minimal segment length, $\Theta$ is the minimal quantile jump, and $\Omega$ can be interpreted as the \emph{signal-to-noise ratio}, as it measures the difficulty in detection of quantile changes. It follows immediately that $\sqrt{\Lambda}\Theta  \leq \Omega$. 

The following exponential bound of underestimation probability ensures the detection power of MUSCLE.
\begin{theorem}[Underestimation bound]\label{t: under}
Assume Model~\ref{QSR} and let $\widehat{\ncp}$ in \eqref{e: hat_K} be the estimated number of change-points by MUSCLE with $\alpha \in (0,1)$. Then it holds
\begin{equation*}
    \mathbb{P}(\widehat{\ncp}\geq \ncp) \;\geq\; \prod_{k=1}^\ncp \gamma_{n,k}^2,
\end{equation*}
where 
\[
\gamma_{n,k} = 1-2\exp{\left(-n\lambda_k\xi_k^2\right)} -2\exp{\left[-\Bigl( \sqrt{n\lambda_k}\xi_k-\tfrac{1}{\sqrt{2}}\max_{m\leq n}q_{\alpha}(m)-\sqrt{\log\tfrac{2e}{\lambda_k}}\Bigr)_+^2 \right]}.
\]
In particular, it implies with \eqref{e: LTO} that 
\begin{equation}\label{e: under}
    \mathbb{P}(\widehat{\ncp}<\ncp)   
    \leq  4\ncp\Bigg\{\exp \left(-n\Omega^2\right) + 
    \exp\left[-\Bigl(\sqrt{n}\Omega -\tfrac{1}{\sqrt{2}}\max_{m\leq n}q_{\alpha}(m)-\sqrt{\log\tfrac{2e}{\Lambda}}\Bigr)_+^2\right]
      \Bigg\}.
\end{equation}
\end{theorem}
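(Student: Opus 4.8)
The plan is to prove the lower bound $\mathbb{P}(\widehat{\ncp}\ge \ncp)\ge \prod_{k=1}^\ncp \gamma_{n,k}^2$ by exhibiting a high-probability event on which \emph{no} piecewise constant candidate with fewer than $\ncp$ jumps can satisfy the side-constraint in \eqref{e: C_k}, and then to derive \eqref{e: under} from it by a union-type bound. The key combinatorial observation is that the windows $I_k = (\tau_k-\lambda_k/2,\tau_k+\lambda_k/2)$ are pairwise disjoint (each half-width $\lambda_k/2\le \min\{\tau_k-\tau_{k-1},\tau_{k+1}-\tau_k\}/2$) and each $I_k$ straddles only the single change point $\tau_k$, with $f\equiv\theta_{k-1}$ on its left half $J_k^-$ and $f\equiv\theta_k$ on its right half $J_k^+$. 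If $\widehat{\ncp}<\ncp$, the fewer than $\ncp$ estimated change points of $\hat f$ can meet at most $\ncp-1$ of these disjoint windows, so some $I_{k_0}$ contains none; hence $\hat f$ is constant, say $\equiv\theta$, on $I_{k_0}$, and $I_{k_0}\subseteq\mathring{\hat I}$ for the segment $\hat I$ of $\hat f$ containing it. Since $\hat f\in\mathcal{C}_{\widehat{\ncp}}$, the side-constraint forces $T_{\mathring{\hat I}}(Z,\theta)\le q_{\alpha}(\abs{\mathring{\hat I}})\le \max_{m\le n}q_{\alpha}(m)$, and the whole proof is about showing this is incompatible with a detection event built on $I_{k_0}$.

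\textbf{Deterministic detection step.} On the constant piece over $I_{k_0}$ I would compare $\theta$ with the midpoint $\theta^\ast=(\theta_{k_0-1}+\theta_{k_0})/2$: if $\theta\le\theta^\ast$ I detect on $J_{k_0}^+$ (true quantile $\theta_{k_0}$), and if $\theta>\theta^\ast$ on $J_{k_0}^-$. Writing $L_J(Z,\theta)=\abs{J}\,\mathrm{KL}(\overline W_J\|\beta)$ with the Bernoulli divergence $\mathrm{KL}(r\|\beta)=r\log(r/\beta)+(1-r)\log((1-r)/(1-\beta))$, the monotonicity of $r\mapsto\mathrm{KL}(r\|\beta)$ on $[0,\beta]$ gives $L_{J_{k_0}^+}(Z,\theta)\ge L_{J_{k_0}^+}(Z,\theta^\ast)$ for all $\theta\le\theta^\ast$, so it suffices to handle the worst case $\theta=\theta^\ast$. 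Since $\abs{\mathring{\hat I}}\le n$ while $\abs{J_{k_0}^\pm}$ is of order $n\lambda_{k_0}/2$, the scale penalty is bounded by $\sqrt{2\log(2e/\lambda_{k_0})}$ irrespective of how long $\hat I$ is; hence a violation of the side-constraint is guaranteed as soon as $\sqrt{2L_{J_{k_0}^+}(Z,\theta^\ast)}>\max_{m\le n}q_\alpha(m)+\sqrt{2\log(2e/\lambda_{k_0})}=:b_{k_0}$, and symmetrically for $J_{k_0}^-$. This reduces the argument to a one-sided tail bound for the likelihood-ratio statistic on a single window.

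\textbf{Single-window bound.} Using the $F_{\min,k},F_{\max,k}$ construction I would show that on $J_k^+$ the variables $W_i(\theta^\ast)=\mathbbm{1}_{\{Z_i\le\theta^\ast\}}$ are independent with means uniformly separated, $\mathbb{E}W_i(\theta^\ast)=F_i(\theta^\ast)\le \beta-\xi_k$, and symmetrically $\ge\beta+\xi_k$ on $J_k^-$. I would then lower-bound $\sqrt{2L_{J_k^+}(Z,\theta^\ast)}$ by splitting the divergence around the true mean $\bar p=\mathbb{E}\overline W_{J_k^+}$: a deterministic signal part of order $\sqrt{n\lambda_k}\,\xi_k$ (via Pinsker, $\mathrm{KL}(\bar p\|\beta)\ge 2(\beta-\bar p)^2\ge 2\xi_k^2$), and a stochastic fluctuation part governed by the deviation of $\overline W_{J_k^+}$ from $\bar p$. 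A Hoeffding bound on the empirical mean contributes $2\exp(-n\lambda_k\xi_k^2)$, while a sharp Bernoulli (Chernoff) tail on the residual divergence contributes $2\exp[-(\sqrt{n\lambda_k}\xi_k-\tfrac1{\sqrt2}\max_{m\le n}q_\alpha(m)-\sqrt{\log(2e/\lambda_k)})_+^2]$. Together these yield $\mathbb{P}(\sqrt{2L_{J_k^\pm}(Z,\theta^\ast)}\le b_k)\le 1-\gamma_{n,k}$, i.e.\ the good one-sided event has probability at least $\gamma_{n,k}$.

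\textbf{Assembling the product and \eqref{e: under}.} The half-windows $\{J_k^-,J_k^+\}_{k=1}^\ncp$ are pairwise disjoint, so by serial independence the $2\ncp$ good events are mutually independent; their intersection has probability at least $\prod_k\gamma_{n,k}^2$ and forces the side-constraint to fail for every configuration with fewer than $\ncp$ change points, giving $\mathbb{P}(\widehat{\ncp}\ge\ncp)\ge\prod_k\gamma_{n,k}^2$. For \eqref{e: under} I would use $1-\prod_k\gamma_{n,k}^2\le\sum_k(1-\gamma_{n,k}^2)\le 2\sum_k(1-\gamma_{n,k})$ and then replace each pair $(\lambda_k,\xi_k)$ by its worst case through $\sqrt{\lambda_k}\,\xi_k\ge\Omega$, $\lambda_k\ge\Lambda$ and monotonicity of the exponents, producing the stated $4\ncp\{\cdots\}$ bound. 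The main obstacle is the single-window estimate: extracting the clean rate $\sqrt{n\lambda_k}\,\xi_k$ from the self-normalised likelihood-ratio statistic while only assuming independence with non-identical $F_i$ (handled by the $F_{\min,k},F_{\max,k}$ device) and making the bound uniform over all admissible constant values $\theta$ on the straddling segment. The divergence splitting has to be arranged so that the deterministic part carries the full signal and the two exponential remainders match $\gamma_{n,k}$ exactly, which is where Pinsker's inequality and a sharp Bernoulli tail bound are essential.
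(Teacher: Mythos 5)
Your proposal is correct and follows essentially the same route as the paper's own proof: pigeonhole over the disjoint windows $I_k$ to find an undetected window, reduction of the surviving constant to the midpoint value tested on the two half-windows, a two-term bound per half-window (an empirical-quantile/Hoeffding deviation term $2e^{-n\lambda_k\xi_k^2}$ plus a likelihood-ratio tail term with the penalty bounded by $\sqrt{2\log(2e/\lambda_k)}$), independence of the $2K$ events on disjoint half-windows to obtain the product $\prod_k\gamma_{n,k}^2$, and a Bernoulli/union-type bound for \eqref{e: under}. The one point to state carefully is that the monotonicity reduction $L_{J^+}(Z,\theta)\ge L_{J^+}(Z,\theta^\ast)$ for $\theta\le\theta^\ast$ is valid only on the event that $\overline{W}_{J^+}(\theta^\ast)$ lies on the correct side of $\beta$ --- precisely what your Hoeffding term (the paper's empirical-quantile event $\zeta_{I_k^-}\le\theta_k^-+\delta_k/2$) guarantees, so that event must be intersected in before the reduction rather than added afterwards, exactly as the paper's decomposition of $\mathbb{P}(C_k^-)$ does.
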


Although it controls local errors on individual segments, MUSCLE will not overestimate the number of change points, due to its search for the candidate with a minimum number of change points.
\begin{theorem}[Overestimation bound]\label{t: over}
Assume Model~\ref{QSR} and let $\widehat{\ncp}$ in \eqref{e: hat_K} be the estimated number of change-points by MUSCLE with $\alpha \in (0,1)$. Then for any $k\geq 1$,
\begin{equation*}
    \mathbb{P}(\widehat{\ncp}\geq \ncp+k) \leq \alpha^k.
\end{equation*}
In particular, \(\mathbb{P}(\widehat{\ncp}> \ncp) \leq \alpha.\)
\end{theorem}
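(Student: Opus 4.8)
The plan is to exploit the variational definition of $\widehat{\ncp}$ as a minimal complexity. Since $\widehat{\ncp} = \min\{k : \mathcal{C}_k \ne \emptyset\}$, the event $\{\widehat{\ncp} \ge \ncp + k\}$ coincides with $\{\mathcal{C}_{\ncp + k - 1} = \emptyset\}$, i.e.\ with the event that \emph{no} piecewise constant candidate having at most $\ncp + k - 1$ change points can simultaneously pass all the local multiscale tests. Thus it suffices to produce, on an event of probability at least $1 - \alpha^k$, a valid tiling of $[0,1)$ into at most $\ncp + k$ segments, each satisfying its side constraint, and to upper bound by $\alpha^k$ the probability that no such low-complexity valid tiling exists.

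First I would set up the two structural ingredients that drive the bound. Let $I_0^\ast, \dots, I_\ncp^\ast$ denote the true segments $[\tau_j, \tau_{j+1})$. On each $I_j^\ast$ the transformed data $\{\mathbbm{1}_{\{Z_i \le \theta_j\}} : x_i \in \mathring{I}_j^\ast\}$ are i.i.d.\ Bernoulli with mean $\beta$, so by the very definition of the critical value $q_{\alpha}$ the local failure event $B_j \coloneqq \{ T_{\mathring{I}_j^\ast}(Z, \theta_j) > q_{\alpha}(\abs{\mathring{I}_j^\ast}) \}$ has probability at most $\alpha$. Crucially, because the side constraint defining $\mathcal{C}_k$ is stated on the \emph{open} intervals $\mathring{I}_j^\ast$, the statistics $T_{\mathring{I}_j^\ast}(Z,\theta_j)$ depend on pairwise disjoint index sets — the shared endpoints $\tau_j$ being excluded from both neighbours — so that $B_0, \dots, B_\ncp$ are mutually independent under the serial independence of Model~\ref{QSR}. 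This is exactly the gain encoded in footnote~\ref{f:tech}: with the closed intervals $I_j^\ast$ adjacent statistics would share one sampling point, only every second failure could be decoupled, and the exponent would degrade to $\floor{k/2}$.

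The combinatorial heart is to show that $\mathcal{C}_{\ncp + k - 1} = \emptyset$ cannot occur unless at least $k$ independent failures of this calibrated type are realised, and then to assemble the bound as a genuine product rather than a binomial sum. I would attempt this by induction on $k$ through a peeling argument: conditioning on the failure event attached to the rightmost true segment — which, by the open-interval construction, is independent of all data strictly to its left — its complement certifies $[\tau_\ncp,1)$ as a single valid piece and reduces the minimal tiling of the whole to that of the prefix $[0,\tau_\ncp)$, a problem carrying one fewer true change point. Since a certified cut renders the minimal number of valid pieces sub-additive, each peeled-off non-failure lowers the admissible overestimation by exactly one unit, so that accumulating an overestimation of $k$ must traverse $k$ independent events each of probability $\le \alpha$, whence $\mathbb{P}(\widehat{\ncp} \ge \ncp + k) \le \alpha^k$. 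The case $k = 1$ then specialises to the stated $\mathbb{P}(\widehat{\ncp} > \ncp) \le \alpha$.

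I expect the main obstacle to be precisely this last translation: converting the \emph{global} non-existence statement $\mathcal{C}_{\ncp + k - 1} = \emptyset$ into \emph{exactly} $k$ pre-specified, independent, calibrated failures. A naive union bound over which $k$ of the $\ncp + 1$ true segments fail would inject a $\binom{\ncp+1}{k}$ factor and ruin the clean exponent; this has to be avoided by letting the peeling select the relevant segments sequentially and by leaning on the independence bought by the open-interval adjustment. Compounding the difficulty, the alignment between estimated and true change points is not given — an estimated segment may straddle a true change point — so the reduction to the prefix problem must be carried out at the level of minimal tiling counts, using sub-additivity across a certified cut, rather than by matching segments one to one. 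Establishing this monotonicity of the tiling count and the conditional independence simultaneously, while keeping the recursion closed in $k$, is the crux of the argument.
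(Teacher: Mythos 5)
Your proposal has a genuine gap at its combinatorial core, and it appears already in the simplest cases. Your entire budget of calibrated failures is the family $B_0,\dots,B_\ncp$ indexed by the \emph{true} segments, so it contains only $\ncp+1$ events, while the theorem demands an exponent $k$ that can exceed $\ncp+1$. Take a constant signal ($\ncp=0$) and $k=2$: your scheme offers the single event $B_0$ of probability at most $\alpha$, and nothing in it explains why a \emph{second} extra change point inside that one segment costs another independent factor $\alpha$ — once the constant fit fails on $[0,1)$, the number of pieces required within the segment is a priori uncontrolled, and no fixed, pre-specified event governs it. The peeling recursion does not close either: on $B_\ncp^c$ the certified piece covers exactly one true segment, so the overestimation budget carried to the prefix is still $k$, not $k-1$ as you assert, and no factor $\alpha$ is collected; on $B_\ncp$ you collect one factor $\alpha$ but lose all control over how many pieces $[\tau_\ncp,1)$ needs. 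Even for $k=1$ your ingredients only give $\mathbb{P}(\widehat{\ncp}>\ncp)\le\mathbb{P}\bigl(\bigcup_j B_j\bigr)\le 1-(1-\alpha)^{\ncp+1}$, since infeasibility of $\mathcal{C}_\ncp$ merely implies that \emph{some} true segment fails — a union over segments, not the asserted $\alpha$. (A side inaccuracy: independence of the $B_j$ needs no interior adjustment, as the half-open true segments are already disjoint; footnote~\ref{f:tech} is about a different, data-driven independence, explained below.)

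The paper's proof supplies exactly the two ideas you are missing. First, a reduction that eliminates the across-segment bookkeeping: if every candidate with at most $\ncp+k-1$ change points is infeasible, then in particular every candidate whose change points \emph{refine} $T(f)$ is infeasible; subtracting $f$ turns these into candidates with at most $k-1$ change points for $\widetilde{Z}=Z-f$, whose indicators $\mathbbm{1}_{\{\widetilde{Z}_i\le 0\}}$ are i.i.d.\ Bernoulli($\beta$) across \emph{all} $i$. Hence $\{\widehat{\ncp}\ge\ncp+k\}$ is contained in the event that the zero-signal problem needs at least $k$ change points, and the general case collapses to the constant case on a single i.i.d.\ stream. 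Second, for that stream the exponent $k$ comes from a renewal argument with data-driven stopping times, not from fixed segments: let $T_s$ be the maximal index to which a constant fit extends feasibly from $T_{s-1}$ (Proposition~\ref{p: tail bound}); by the definition \eqref{e: q_m} of $q_\alpha$, each increment fails to reach $n$ with conditional probability at most $\alpha$ given the past, and the interior adjustment in \eqref{e: C_k} is precisely what makes the statistic of the $s$-th epoch depend only on data strictly beyond index $T_{s-1}+1$, while the event $\{T_{s-1}=t\}$ is determined by data up to index $t+1$ — so the sequential conditioning is exact and yields $\mathbb{P}(S\ge k)\le\alpha^k$ rather than $\alpha^{\lfloor k/2\rfloor}$; this is the true content of footnote~\ref{f:tech}. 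Finally, a greedy candidate with perturbed levels $\tilde{\theta}/(s+1)$ on successive pieces witnesses feasibility with exactly $S$ change points (Proposition~\ref{prop: over estimation of zero}), so minimality of $\widehat{\ncp}$ gives $\widehat{\ncp}\le S$ and the bound follows. Your instinct that independence must be manufactured sequentially is sound, but it has to operate \emph{within} segments via stopping times on one i.i.d.\ stream, not by peeling the fixed true segmentation.
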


By \cref{t: over,t: under}, we obtain that $\mathbb{P}(\hat{\ncp}\neq \ncp)$ is upper bounded by the sum of $\alpha$ and the exponential terms in~\eqref{e: under}. Next theorem gives a precise statement on model selection consistency.

\begin{theorem}[Model selection consistency]\label{t: consistency 1}
Assume Model~\ref{QSR} with the number $K = K_n$ of change points. Let $\alpha_n$ be a sequence in $(0,1)$ that converges to zero as $n\to \infty$, and $\Lambda = \Lambda_n$, $\Theta = \Theta_n$ and $\Omega = \Omega_n$ in \eqref{e: LTO}. Suppose that either of the following two conditions is fulfilled:
\begin{enumerate}[label=\emph{(\roman*)}]
    \item \label{c: consistency i}
    $\liminf \Lambda_n >0$ and $\sqrt{n}\Omega_n - \max_{m\leq n}q_{\alpha_n}(m)/\sqrt{2}\to \infty$. 
    \item \label{c: consistency ii}$\liminf \Lambda_n =0$ and for some constant $c>2$,
    \begin{equation*}
        \sqrt{n}\Omega_n \geq c\sqrt{-\log \Lambda_n}\; \text{ and }\;
        (c-2)\sqrt{-2\log \Lambda_n}-{\max_{m\leq n}q_{\alpha_n}(m)} \to \infty.
    \end{equation*}
\end{enumerate}
If $\hat{K}_n$ in \eqref{e: hat_K} is the estimated number of change points by MUSCLE with $\alpha = \alpha_n$, then
\begin{equation*}
    \lim_{n\to \infty}\mathbb{P}(\widehat{\ncp}_n = \ncp_n) =1.
\end{equation*}
Namely, the MUSCLE estimates the number of change point consistently.
\end{theorem}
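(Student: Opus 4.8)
The plan is to split the error event according to the sign of $\widehat{\ncp}_n - \ncp_n$ and to control each piece with the two preceding theorems. Since $\{\widehat{\ncp}_n \neq \ncp_n\} \subseteq \{\widehat{\ncp}_n > \ncp_n\} \cup \{\widehat{\ncp}_n < \ncp_n\}$, a union bound yields $\mathbb{P}(\widehat{\ncp}_n \neq \ncp_n) \le \mathbb{P}(\widehat{\ncp}_n > \ncp_n) + \mathbb{P}(\widehat{\ncp}_n < \ncp_n)$. The overestimation part is immediate: \cref{t: over} with $k=1$ gives $\mathbb{P}(\widehat{\ncp}_n > \ncp_n) \le \alpha_n$, which tends to $0$ by the standing assumption $\alpha_n \to 0$. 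It therefore remains to show that the underestimation probability vanishes under either condition \ref{c: consistency i} or \ref{c: consistency ii}.

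For the underestimation part I would invoke \eqref{e: under} from \cref{t: under}, namely $\mathbb{P}(\widehat{\ncp}_n < \ncp_n) \le 4\ncp_n\{E_{1,n} + E_{2,n}\}$ with $E_{1,n} = \exp(-n\Omega_n^2)$ and $E_{2,n} = \exp[-(\sqrt{n}\,\Omega_n - \tfrac{1}{\sqrt 2}\max_{m\le n}q_{\alpha_n}(m) - \sqrt{\log(2e/\Lambda_n)})_+^2]$. There are $\ncp_n+1$ segments, the first $\ncp_n$ of which have length at least $\Lambda_n$; summing over the unit interval gives $\ncp_n \Lambda_n \le 1$, i.e.\ $\ncp_n \le 1/\Lambda_n$. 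Hence it suffices to prove $\Lambda_n^{-1}(E_{1,n} + E_{2,n}) \to 0$. Under \ref{c: consistency i}, $\Lambda_n$ is bounded away from $0$, so $\ncp_n$ and the penalty $\sqrt{\log(2e/\Lambda_n)}$ are both bounded; since $\max_{m\le n}q_{\alpha_n}(m) \ge 0$, the hypothesis $\sqrt{n}\,\Omega_n - \max_{m\le n}q_{\alpha_n}(m)/\sqrt2 \to \infty$ forces $\sqrt{n}\,\Omega_n \to \infty$, whence $E_{1,n}\to 0$, and it also makes the argument of the positive part in $E_{2,n}$ diverge, so $E_{2,n}\to 0$.

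Condition \ref{c: consistency ii} is the delicate case. I would first observe that, because $\max_{m\le n}q_{\alpha_n}(m)\ge 0$, the hypothesis $(c-2)\sqrt{-2\log\Lambda_n} - \max_{m\le n}q_{\alpha_n}(m) \to \infty$ forces $-\log\Lambda_n \to \infty$, upgrading $\liminf\Lambda_n = 0$ to $\Lambda_n \to 0$ for the full sequence. Writing $L_n = \sqrt{-\log\Lambda_n}$, the SNR bound $\sqrt{n}\,\Omega_n \ge cL_n$ gives $E_{1,n} \le \Lambda_n^{c^2}$, so $\Lambda_n^{-1}E_{1,n} \le \Lambda_n^{c^2-1} \to 0$ since $c>2$. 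For $E_{2,n}$ I would feed the second hypothesis, in the form $\tfrac{1}{\sqrt2}\max_{m\le n}q_{\alpha_n}(m) \le (c-2)L_n - \rho_n$ with $\rho_n\to\infty$, together with $\sqrt{n}\,\Omega_n\ge cL_n$ and $\sqrt{\log(2e/\Lambda_n)} = L_n + o(1)$, into the exponent: this produces $\sqrt{n}\,\Omega_n - \tfrac{1}{\sqrt2}\max_{m\le n}q_{\alpha_n}(m) - \sqrt{\log(2e/\Lambda_n)} \ge L_n + \rho_n/\sqrt2 - o(1)$, whose square exceeds $L_n^2 = -\log\Lambda_n$ by $\asymp L_n\rho_n \to \infty$. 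Consequently $\Lambda_n^{-1}E_{2,n} = \exp(L_n^2 - (\cdot)_+^2) \to 0$.

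The main obstacle is exactly this case \ref{c: consistency ii}, where the number of change points $\ncp_n \asymp 1/\Lambda_n$ grows without bound and must be dominated by the exponential decay of the underestimation bound. The bookkeeping has to balance three competing scales at once — the signal strength $\sqrt{n}\,\Omega_n$, the critical-value level $\max_{m\le n}q_{\alpha_n}(m)$, and the multiscale penalty $\sqrt{\log(1/\Lambda_n)}$ — and the constant $c>2$ is calibrated to create a \emph{strict} surplus: $c^2 - 1 > 0$ absorbs the $1/\Lambda_n$ prefactor in the first term, while the factor $c-2$ in the second hypothesis supplies precisely the margin needed for the squared positive part to outrun $-\log\Lambda_n$ after the penalty $L_n$ has been subtracted. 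Ensuring that this surplus diverges (rather than being merely comparable), so that the $1/\Lambda_n$ factor is truly absorbed, is the step demanding the most care.
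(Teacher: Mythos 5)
Your proof is correct and follows essentially the same route as the paper's: both reduce to \cref{t: over} (with $k=1$) plus the bound \eqref{e: under} from \cref{t: under}, use $K_n \le \Lambda_n^{-1}$, handle case \ref{c: consistency i} by boundedness of $K_n$ and of the penalty, and in case \ref{c: consistency ii} show the squared positive part exceeds $-\log \Lambda_n \ge \log K_n$ by a diverging margin (the paper via the elementary inequalities $\sqrt{x+y}-\sqrt{x}\le y/(2\sqrt{x})$ and $(x+y)^2-x^2\ge y^2$, you via an equivalent explicit expansion). The only caveat is your appeal to $\max_{m\le n} q_{\alpha_n}(m)\ge 0$, which the paper never establishes; but since $T_I(Z,\theta)\ge \sqrt{2L_I(Z,\theta)}-\sqrt{2}\ge -\sqrt{2}$ (take $J=I$ in \eqref{e: T_I}), every local quantile is bounded below by $-\sqrt{2}$, so your deductions that $\sqrt{n}\Omega_n\to\infty$ and $\Lambda_n\to 0$ go through unchanged.
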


\begin{remark}
By Lemma~\ref{l: ub q_n} in Appendix~\ref{A1}, we have $\max_{m\leq n}q_{\alpha_n}(m) \lesssim \sqrt{-\log \alpha_n}.$ This allows us to select $\alpha_n$ approaching zero at certain explicit rates to satisfy the above conditions on $q_{\alpha_n}(m)$. We discuss the implication of \cref{t: consistency 1} under conditions \emph{\ref{c: consistency i}} and \emph{\ref{c: consistency ii}}, separately. 
\begin{unlist}
\item[Case \emph{\ref{c: consistency i}.} ] Note that $\liminf_n \Lambda_n >0$ implies that $\sup \ncp_n<\infty$, that is, the number of change points stays bounded. In particular, if the quantile function $f=f_n$ in Model~\ref{QSR} stays the same for different $n$,  MUSCLE with {any vanishing sequence $\alpha_n \searrow 0$} estimates consistently the number of change points. 
\item[Case \emph{\ref{c: consistency ii}.} ]   
It implies that $\ncp_n$ can be consistently estimated by MUSCLE if the function $f=f_n$ in Model~\ref{QSR} satisfies 
    \begin{equation}
        \label{e: limit condition}
        \sqrt{n}\Omega_n \geq c\sqrt{-\log \Lambda_n}.
    \end{equation}
Consider the example of $f_n = \Delta_n \mathbbm{1}_{[0.5,0.5+\Lambda_n)}$ with additive {standard} Gaussian noise and jump size $\Delta_n>0$, which corresponds to Model~\ref{QSR} with $\beta = 0.5$. By simple computation we have 
\begin{equation*}
\Theta_n = \min\{\xi_{1,n},\xi_{2,n}\} =  \Phi\left(\tfrac{\Delta_n}{2}\right) - \Phi(0) =
 \tfrac{\Delta_n }{2\sqrt{2\pi}}+ o(\Delta_n),
\end{equation*}
where $\Phi$ is the distribution function of standard Gaussian random variable. 
Since $\Omega_n \ge \sqrt{\Lambda_n}\Theta_n$, the two change points of $f_n$ can be consistently detected by MUSCLE provided that $\sqrt{n\Lambda_n}\Delta_n \geq 6 c \sqrt{-\log \Lambda_n}$. Note that no method can consistently detect change points if 
\begin{equation*}
   \limsup_{n\to \infty}\frac{\sqrt{n\Lambda_n}\Delta_n}{\sqrt{-2\log \Lambda_n}} \;<\;1,
\end{equation*}
see e.g.\ \citet{arias2011detection}.
Thus, up to a constant, the condition \eqref{e: limit condition} can not be improved.
\end{unlist}
\end{remark}

Let $T$ and $\widehat{T}$ denote the sets of true change points and estimated change points, respectively. The {localization error} between $T$ and $\widehat{T}$ is defined by 
\begin{equation}
    \label{e: local error}
    d(T;\widehat{T}) \coloneqq \underset{k = 1,\dots,\ncp}{\max}\,\underset{j = 1,\dots,\widehat{\ncp}}{\min}\left|\tau_k - \hat{\tau}_j\right|.
\end{equation}
The {Hausdorff distance} between $T$ and $\hat{T}$ is given by
\begin{equation}
    \label{e: Hausdorff distance}
    d_{H}\left(T,\hat{T}\right) \coloneqq \max\left\{d(T;\widehat{T}), d(\widehat{T};T)\right \}  = \max\left\{\underset{k}{\max} \, \underset{j}{\min}\, \left|\tau_k - \hat{\tau}_j\right|,\underset{j}{\max}\, \underset{k}{\min}\,\left|\tau_k - \hat{\tau}_j\right|\right\}.
\end{equation}

\begin{theorem}[Localization rate in Hausdorff distance] \label{t: Hausdorff}
Assume Model~\ref{QSR} with $T_n$ the set of change points and $K_n = \# T_n$. Let $\alpha_n$ be a sequence in $(0,1)$, and $\Lambda = \Lambda_n$, $\Theta = \Theta_n$ and $\Omega = \Omega_n$ in \eqref{e: LTO}. Let also $\widehat{T}_n$ be the set of estimated change points by MUSCLE with $\alpha = \alpha_n$, then
\begin{multline}
\label{e: Hausdorff}
    \mathbb{P}\left(d_H(T_n,\hat{T}_n)\geq \varepsilon_n\right) \leq \alpha_n +  4\ncp_n\exp \left(-n\varepsilon_n\Theta_n^2\right)   
    +4\ncp_n \exp\left[-\Bigl(\sqrt{n\varepsilon_n}\Theta_n  - \tfrac{\underset{m\leq n}{\max}q_{\alpha_n}(m)}{\sqrt{2}}-\sqrt{\log\tfrac{2e}{\varepsilon_n}}\Bigr)_+^2\right]  
\end{multline}
for any sequence $\varepsilon_n$ such that $\Lambda_n\geq 2\varepsilon_n >0$. 
\end{theorem}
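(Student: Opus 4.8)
The plan is to decompose the Hausdorff event along the two one-sided distances in \eqref{e: Hausdorff distance} and treat the over- and under-estimation directions separately. Writing $d_H(T_n,\widehat{T}_n)=\max\{d(T_n;\widehat{T}_n),\,d(\widehat{T}_n;T_n)\}$, I would first split on whether $\widehat{\ncp}>\ncp$. On $\{\widehat{\ncp}>\ncp\}$ I simply invoke \cref{t: over}, which contributes the term $\alpha_n$. The key reduction is that \emph{on the complementary event} $\{\widehat{\ncp}\le\ncp\}$ the reverse distance $d(\widehat{T}_n;T_n)$ cannot create new mass: using the hypothesis $\Lambda_n\ge 2\varepsilon_n$, the windows $U_k\coloneqq(\tau_k-\varepsilon_n,\tau_k+\varepsilon_n)$ are pairwise disjoint, so if some estimated change point were $\varepsilon_n$-isolated from every $\tau_k$, then at most $\widehat{\ncp}-1\le \ncp-1$ of the remaining estimates could meet the $\ncp$ windows $U_k$, and a pigeonhole argument forces some $U_{k_0}$ to contain no estimate, i.e.\ $d(T_n;\widehat{T}_n)\ge\varepsilon_n$. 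Hence on $\{\widehat{\ncp}\le\ncp\}$ one has $\{d_H\ge\varepsilon_n\}=\{d(T_n;\widehat{T}_n)\ge\varepsilon_n\}$, and altogether
\[
\mathbb{P}\bigl(d_H(T_n,\widehat{T}_n)\ge\varepsilon_n\bigr)\;\le\;\alpha_n+\mathbb{P}\bigl(d(T_n;\widehat{T}_n)\ge\varepsilon_n\bigr).
\]

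It then remains to bound the one-sided quantity. By a union bound,
\[
\mathbb{P}\bigl(d(T_n;\widehat{T}_n)\ge\varepsilon_n\bigr)\;\le\;\sum_{k=1}^{\ncp_n}\mathbb{P}\bigl(\widehat{T}_n\cap U_k=\emptyset\bigr),
\]
so I would fix $k$ and bound the probability that no estimated change point lies within $\varepsilon_n$ of $\tau_k$. On this event $\hat f$ is constant, say equal to $\widehat\theta$, on a segment $I$ of $\hat f$ containing $U_k$, and the defining side-constraint \eqref{e: C_k} forces $T_{\mathring I}(Z,\widehat\theta)\le q_{\alpha_n}(\abs{\mathring I})$. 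Because $\tau_k$ is a genuine change point with jump $\delta_k=\theta_k-\theta_{k-1}$, the single value $\widehat\theta$ is at distance at least $\abs{\delta_k}/2$ from the true quantile on at least one of the two half-windows $[\tau_k-\varepsilon_n,\tau_k)$ and $[\tau_k,\tau_k+\varepsilon_n)$; on that half-window the expectation of $\overline{W}$ is therefore separated from $\beta$ by at least $\xi_k\ge\Theta_n$, which is exactly what the quantile jump function $\xi_k$ measures. Testing against this specific half-window $J\subseteq\mathring I$ then makes $\sqrt{2L_J(Z,\widehat\theta)}$ large, so that the side-constraint is violated unless $\overline{W}_J$ deviates atypically from its mean.

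The remaining work is a concentration estimate that is essentially a rerun of the proof of \cref{t: under}, but carried out on a window of half-width $\varepsilon_n$ rather than $\lambda_k/2$: one bounds the probability that the likelihood-ratio statistic on $J$ fails to exceed $q_{\alpha_n}(\abs{\mathring I})+\sqrt{2\log(e\abs{\mathring I}/\abs{J})}$ despite the mean gap $\xi_k$. Substituting $\abs{J}\asymp n\varepsilon_n$ replaces $\sqrt{n\lambda_k}\,\xi_k$ by $\sqrt{n\varepsilon_n}\,\xi_k\ge\sqrt{n\varepsilon_n}\,\Theta_n$ and replaces $\Lambda$ by $\varepsilon_n$ inside the penalty $\sqrt{\log(2e/\cdot)}$, yielding for each $k$ a bound of the form $2\exp(-n\varepsilon_n\Theta_n^2)+2\exp[-(\sqrt{n\varepsilon_n}\Theta_n-\tfrac{1}{\sqrt2}\max_{m\le n}q_{\alpha_n}(m)-\sqrt{\log(2e/\varepsilon_n)})_+^2]$. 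Accounting for the two half-windows per change point and summing over $k=1,\dots,\ncp_n$ produces the factor $4\ncp_n$ and the two exponential terms in \eqref{e: Hausdorff}.

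The main obstacle, as in \cref{t: under}, is this last concentration step: converting the mean separation $\xi_k$ of the transformed Bernoulli variables into a sharp tail bound for $\sqrt{2L_J}$ while correctly carrying the scale penalty and the local quantile $q_{\alpha_n}$, uniformly over the short window. A secondary technical point is to ensure the chosen test half-window lies in the interior $\mathring I$ of the estimated segment (the distinction exploited in footnote~\ref{f:tech}), so that the side-constraint is genuinely in force; this is where the hypothesis $\Lambda_n\ge 2\varepsilon_n$, guaranteeing that $U_k$ contains no other true change point, is used a second time.
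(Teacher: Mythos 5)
Your proposal is correct and follows essentially the same route as the paper's proof: both use the disjointness of the $\varepsilon_n$-windows around true change points (guaranteed by $\Lambda_n \ge 2\varepsilon_n$) together with a pigeonhole argument to reduce the reverse distance $d(\widehat{T}_n;T_n)$ to the overestimation event $\{\widehat{\ncp}\ge \ncp+1\}$, controlled by \cref{t: over} at level $\alpha_n$, and both bound $d(T_n;\widehat{T}_n)$ by rerunning the concentration argument of \cref{t: under} on windows of half-width $\varepsilon_n$, yielding the two exponential terms with the factor $4\ncp_n$. The only difference is cosmetic: you split first on $\{\widehat{\ncp}>\ncp\}$ and apply the pigeonhole in contrapositive form, whereas the paper bounds the joint event $\{d(T_n;\widehat{T}_n)<\varepsilon_n,\ d(\widehat{T}_n;T_n)\ge\varepsilon_n\}$ directly by showing it forces $\widehat{\ncp}\ge \ncp+1$.
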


\begin{remark}
Setting $\alpha_n = n^{-\gamma}$ for some $\gamma>0$, we have $\max_{ m\leq n} q_{\alpha_n}(m) \lesssim \sqrt{\log n}$ by Lemma~\ref{l: ub q_n} in Appendix~\ref{A1}. Let 
\begin{equation*}
 \varepsilon_n = \frac{C \log n}{n\Theta_n^2}\quad\text{for some sufficiently large constant C.}   
\end{equation*}
Then the negative of logarithm of the last term in \eqref{e: Hausdorff} is
\begin{equation*}
\Gamma_n \coloneqq\Bigl(\sqrt{n\varepsilon_n}\Theta_n - \tfrac{\underset{m\leq n}{\max}q_{\alpha_n}(m)}{\sqrt{2}}-\sqrt{\log\tfrac{2e}{\varepsilon_n}} \Bigr)_+^2 - \log \ncp_n - \log 4 \gtrsim\log n\to\infty.
\end{equation*}
Note that the last term in \eqref{e: Hausdorff} is always greater than the second one. Thus, $\Gamma_n \to \infty$ implies that the right-hand side in \eqref{e: Hausdorff} asymptotically vanishes. That is, MUSCLE estimates change points in terms of Hausdorff distance with the rate $\varepsilon_n \asymp (\log n) / (n \Theta_n^2)$ with probability tending to $1$. This rate coincides (up to a log factor) with the minimax optimal rates in the Gaussian setup (see \citealp[Proposition~3.3]{verzelen2023optimal}).
\end{remark}

\subsection{Control of false positives}
\label{S: 3.2}
In order to quantify the possible false detection of change points incurred by controlling local errors, we consider \emph{false discovery rate} (FDR, introduced by \citealt{li2016fdr}) and \emph{overestimation rate} (OER, defined below). For the former, we treat the estimated change points $\{\hat{\tau}_1,\dots, \hat{\tau}_{\widehat{\ncp}}\}$ as discoveries.  For  $i \in \{1,\dots, \widehat{\ncp}\}$, the recovery $\hat{\tau}_i$ is classified as a \emph{true {discovery}} if there exists a true change point contained in 
\begin{equation*}
    \left[ \tfrac{\hat{\tau}_{i-1}+\hat{\tau}_{i}}{2}, \;\tfrac{\hat{\tau}_{i}+\hat{\tau}_{i+1}}{2}\right)
\end{equation*}
with $\hat{\tau}_0 = 0$ and $\hat{\tau}_{\widehat{\ncp}+1}=1$. Otherwise, it is a \emph{false discovery}. The FDR is defined as 
\begin{equation}\label{e: FDR}
    \text{FDR} \coloneqq \mathbb{E}\left[\frac{\rm FD}{\widehat{\ncp}+1}\right],
\end{equation}
where FD denotes the number of false discoveries. In a similar spirit, we introduce OER as
\begin{equation}
    \label{e: OER}
    {\rm OER} \coloneqq \mathbb{E}\left[\frac{(\widehat{\ncp}-\ncp)_+}{\max\{\widehat{\ncp},\, 1\}}\right]
\end{equation}
Both FDR and OER of MUSCLE can be upper bounded by the only tuning parameter $\alpha$.

\begin{theorem}[FDR and OER control]\label{t: FDR}
Assume Model~\ref{QSR}, and let  $\{\hat{\tau}_1,\dots, \hat{\tau}_{\widehat{\ncp}}\}$ be the estimated change points by MUSCLE with $\alpha \in (0,1)$. Then its FDR in \eqref{e: FDR} and OER in \eqref{e: OER} satisfy
$$    
{\rm FDR} \leq \alpha,\quad \text{and}\quad
    {\rm OER} \leq 
    \begin{cases}
    \frac{\alpha}{(1-\alpha )\ncp+\alpha}\,\leq\,\alpha ,& \ncp \geq 1,\\
    \alpha,& \ncp=0. 
    \end{cases}
$$
\end{theorem}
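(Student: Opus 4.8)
The plan is to prove the two bounds separately, since they rest on quite different ideas. The OER bound follows cleanly from the overestimation tail bound already available in \cref{t: over} together with a convexity argument, whereas the FDR bound genuinely requires a location-aware probabilistic analysis of the local tests. I would dispose of OER first and then turn to the harder FDR statement.

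For OER, set $N \coloneqq (\widehat{\ncp}-\ncp)_+$. For $\ncp\ge1$ I would note that the integrand equals $h(N)$, where $h(x)=x/(\ncp+x)$ is increasing and concave on $[0,\infty)$ with $h(0)=0$ (indeed $h'(x)=\ncp/(\ncp+x)^2>0$ and $h''(x)=-2\ncp/(\ncp+x)^3<0$): on $\{\widehat{\ncp}\ge\ncp\}$ one has $\max\{\widehat{\ncp},1\}=\widehat{\ncp}=\ncp+N$, and on $\{\widehat{\ncp}<\ncp\}$ both sides vanish. Jensen's inequality then gives $\mathrm{OER}=\E{h(N)}\le h(\E{N})$, and monotonicity of $h$ with $\E{N}=\sum_{j\ge1}\mathbb{P}(\widehat{\ncp}\ge \ncp+j)\le\sum_{j\ge1}\alpha^j=\alpha/(1-\alpha)$, where the tail estimate is exactly \cref{t: over}, yields $\mathrm{OER}\le h\!\left(\alpha/(1-\alpha)\right)=\alpha/\bigl((1-\alpha)\ncp+\alpha\bigr)$. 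Since $(1-\alpha)\ncp+\alpha\ge1$ for $\ncp\ge1$, this is at most $\alpha$. The case $\ncp=0$ is immediate, as the ratio reduces to $\mathbbm{1}_{\{\widehat{\ncp}\ge1\}}$, so $\mathrm{OER}=\mathbb{P}(\widehat{\ncp}\ge \ncp+1)\le\alpha$.

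For FDR I would start from the combinatorial observation that the neighborhoods $[\tfrac{\hat\tau_{i-1}+\hat\tau_i}{2},\tfrac{\hat\tau_i+\hat\tau_{i+1}}{2})$ attached to the discoveries are pairwise disjoint, so the number of them meeting a true change point — the number of true discoveries — is at most $\ncp$. The difficulty is that, unlike OER, $\mathrm{FD}/(\widehat{\ncp}+1)$ is \emph{not} a function of $\widehat{\ncp}-\ncp$ alone: false discoveries also arise from misplaced detections, for instance when several estimated change points cluster inside a region where $f$ is constant while genuine ones are missed, an event invisible to \cref{t: over}. I would therefore bound FDR probabilistically by charging every false discovery to a rejection of a local multiscale test on an interval contained in the interior of a true segment, where $f$ is constant and the transformed data $W_i$ are i.i.d.\ Bernoulli$(\beta)$. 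By the very definition of the critical value $q_\alpha$ in \eqref{e: q_m}, each such rejection has probability at most $\alpha$; crucially, the use of the \emph{interiors} $\mathring I_j$ in the side-constraint \eqref{e: C_k} (see the footnote on page~\pageref{f:tech}) renders the tests on disjoint segments independent, so these events combine into products rather than union bounds.

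The main obstacle will be assembling these local controls into the single clean bound $\mathrm{FDR}\le\alpha$ while dividing by the random, dependent denominator $\widehat{\ncp}+1$. A naive union bound over segments only gives a bound inflated by the (random) number of segments, which is precisely the source of the weaker constant $2\alpha/(1-\alpha)$ in \citet{li2016fdr}. I expect the decisive ingredient to be a sharp tail estimate for the multiscale statistic $T_I$, obtained via Talagrand's concentration inequality for convex distance (as used for Lemma~\ref{l: ub q_n}), which together with the independence across disjoint interiors should let one control the expected \emph{proportion} of false discoveries directly rather than term-by-term. Tracking the ratio so that the per-segment factor $\alpha$ survives undiluted — without incurring a $1/(1-\alpha)$ factor or a factor of two — is the crux of the argument.
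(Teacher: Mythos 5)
Your OER argument is correct and coincides with the paper's own: the case $\ncp=0$ reduces to $\mathbb{P}(\widehat{\ncp}\ge 1)\le\alpha$, and for $\ncp\ge 1$ the paper likewise combines $\mathbb{E}\bigl[(\widehat{\ncp}-\ncp)_+\bigr]\le\sum_{k\ge 1}\alpha^k=\alpha/(1-\alpha)$ from \cref{t: over} with Jensen's inequality for the concave increasing map $x\mapsto x/(x+\ncp)$ to obtain $\alpha/\bigl((1-\alpha)\ncp+\alpha\bigr)$.

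The FDR half, however, is a plan rather than a proof, and you concede as much (``assembling these local controls into the single clean bound \ldots is the crux''). Two concrete things are missing or misdirected. First, the decisive ingredient is not Talagrand's inequality: in the paper that inequality enters only through Lemma~\ref{l: ub q_n}, to bound the local quantiles $q_{\alpha}(\abs{I})$, and it plays no role in the FDR proof. Second, charging each false discovery to a single rejection of a local test ``with probability at most $\alpha$'' cannot work as stated, because false discoveries cluster: several spurious change points can occur inside one constant stretch, and the event that $s$ of them occur is controlled not by $s$ marginal level-$\alpha$ tests but by a sequential argument. What the paper actually does (Lemma~\ref{l: FD 0}, resting on \cref{p: tail bound,prop: over estimation of zero}) is to bound $\mathbb{E}[\mathrm{FD}\mid \mathrm{TD}=0]$ by exploiting the \emph{minimality} of $\widehat{\ncp}$: on a constant signal, $\mathrm{FD}$ is dominated by the number of breaks $\widetilde{K}$ of \emph{any} feasible competitor, and one is constructed greedily via stopping times $T_s$ (maximal stretches on which a constant passes the local multiscale test), with segment values perturbed to $\tilde{\theta}/(k+1)$ so that the competitor has exactly $\widetilde{K}$ change points. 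Each stretch terminates early with conditional probability at most $\alpha$ by the definition \eqref{e: q_m} of the local quantile, and the interior adjustment $\mathring{I}$ in \eqref{e: C_k} renders the pieces independent, so an induction yields the geometric tail $\mathbb{P}(\widetilde{K}\ge s)\le\alpha^s$ and hence $\mathbb{E}[\mathrm{FD}\mid\mathrm{TD}=0]\le\alpha/(1-\alpha)\eqqcolon G(\alpha)$. You never invoke the minimality trick or the renewal construction, and without them the clustering of false discoveries is uncontrolled.

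Finally, the random denominator $\widehat{\ncp}+1$ --- the difficulty you correctly isolate --- is handled not by a new concentration argument but by importing \citet[Lemma~A.2 and the proof technique of Theorem~2.2]{li2016fdr}, which converts the conditional bound $G(\alpha)$ into $\mathrm{FDR}\le G(\alpha)/\bigl(1+G(\alpha)\bigr)=\alpha$. The improvement over the constant $2\alpha/(1-\alpha)$ of \citet{li2016fdr} thus comes from the sharper geometric tail $\alpha^s$ (in place of $\alpha^{\floor{s/2}}$), which is precisely what the $\mathring{I}$ adjustment buys: your intuition about footnote~\ref{f:tech} was right, but its payoff is in the tail exponent feeding Lemma~\ref{l: FD 0}, not in a direct product bound on the ratio.
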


\cref{t: FDR} provides a statistical interpretation for the only tuning parameter $\alpha$ of MUSCLE in the sense of false positive control. In real world data analysis, one can choose $\alpha$ according to the tolerance to false positives. Besides, it is recommended to try several choices of $\alpha$. This would reveal which change points are most likely to be genuine and which are not, since the detected change points are often nested as $\alpha$ increases (cf.\ \cref{f: well-log}). The upper bound in \cref{t: FDR} is sharper than the result in \citet[Theorem~2.2]{li2016fdr} for FDRSeg.

\section{Algorithm and Implementation}
\label{S: implementation}
MUSCLE is defined as a solution to the non-convex optimization problem in \eqref{e: C_k}--\eqref{e: MUSCLE} with a large number of constraints. This optimization manifested by its $\ell_0$-pseudo norm objective, remains computationally challenging in general, despite the recent progress in mixed integer optimization \citep{BeVa20}. Nevertheless, thanks to its specific structure, MUSCLE can be efficiently computed within the dynamic programming framework. This is possible as we can decompose the optimization problem into $n$ subproblems that retain a recursive structure. For $j = 1, \ldots, n$, the $j$-th subproblem involves data $Z_1,\dots, Z_{j}$ and takes the form~of 
\begin{equation*}
    \widehat{K}[j] \coloneqq \min \Bigg\{k \,: \underset{0\leq i \leq k}{\max}T_{\mathring{I}_i}(Z,\theta_i)-q_{\alpha}(|\mathring{I}_i|)\leq  0,
     \text{for some } f = \sum_{i=0}^{k}\theta_i \mathbbm{1}_{I_i} \text{ with } \bigsqcup_{i=0}^{k}I_i = \bigl[0,\tfrac{j}{n}\bigr) \Bigg\}.
\end{equation*}
Introduce $\mathcal{I}\left(I\right)$ as an indicator on whether there is a feasible constant solution on interval $I$, i.e.,
\begin{equation*}
    \mathcal{I}  \left(I\right)=
    \begin{cases}
    1 & \text{if there is a $\theta$ such that } T_{\mathring{I}}(Z,\theta)\leq q_{\alpha}(\abs{\mathring{I}}),\\
    0 & \text{otherwise.}
    \end{cases}
\end{equation*}
It is easy to see that the following recursive relation holds
\begin{subequations}\label{e: bellman}
\begin{align*}
    \widehat{K}[0] &\coloneqq -1, \\
    \widehat{K}[j] &= \min\left\{\widehat{K}[i]+1\,: \mathcal{I}\left(\bigl[\tfrac{i}{n},\tfrac{j}{n}\bigr)\right) = 1, \, i = 0,\dots,j-1 \right\}
\end{align*}
\end{subequations}
for $j = 1,\ldots, n$, which is often referred to the \emph{Bellman equation} \citep{cormen2022introduction}. Then the estimated number $\widehat{K}$ of change points by MUSCLE, which is $\widehat{K}[n]$, can be computed via $\widehat{K}[j]$ with $j$ ranging from $1$ to $n$ based on the above recursive relation. 

The crucial part of computation lies in evaluating $\mathcal{I}  \bigl([{i}/{n},{j}/{n})\bigr)$. By the definition of  $T_{I}(Z,\theta)$ in \eqref{e: T_I}, we see that the constraints involved in $\mathcal{I}  \bigl([{i}/{n},{j}/{n})\bigr)$ are formulated in terms of  $g_{\beta}:(0,1)\to \mathbb{R}$, 
\begin{equation*}
    g_{\beta}:\quad x\;\;\mapsto\;\; x\log\left(\tfrac{x}{\beta}\right)+(1-x)\log\left(\tfrac{1-x}{1-\beta}\right).
\end{equation*}
The strict convexity of $g_{\beta}(\cdot)$ allows us to translate the evaluation of $\mathcal{I}  \bigl([{i}/{n},\;{j}/{n})\bigr)$ into the computation of $q$-quantiles of $(Z_i,\dots,Z_j)$ for different $q\in (0,1)$. A straightforward approach is to sort the data segment $(Z_i,\dots,Z_j)$ for every pair of $i$ and $j$, which requires a total computational complexity $O(n^3 \log n)$. Towards a computational speedup, a double heap data structure was employed in \citet{jula2022multiscale}, which unfortunately requires an $O(n^2)$ memory space in the worst scenario. The two approaches will become quickly infeasible as the scale of the problem increases. Instead, we adopt a recent advance in data structures, known as the \emph{wavelet tree} \citep{grossi2003high,Nav14}, which achieves an optimal balancing between computational and memory complexities \citep{brodal2011towards}. Utilising the wavelet tree, we are able to compute an arbitrary $q$-quantile of $\{Z_i,\dots,Z_j\}$ in $O(n \log n)$ runtime, with little expense that involves a  pre-computation  of $O(n \log n)$ and a memory space of $O(n \log n)$. 

Finally, combining the wavelet tree and the dynamic programming with pruning techniques, we obtain \cref{alg: MUSCLE} (in Appendix~\ref{A3}) for computing the global optimal solution to the non-convex optimization problem defining MUSCLE. 

\begin{theorem}\label{th: complexity}
MUSCLE, defined as the global optimal solution to the optimization problem in \eqref{e: C_k}--\eqref{e: MUSCLE}, can be computed by a pruned dynamic program with the wavelet tree data structure, which has a space complexity (i.e., amount of memory) of $O(n \log n)$ and a worst-case computational complexity (i.e., runtime) of $O(n^4\log n)$. Further, if we use the intervals of dyadic lengths in MUSCLE, the worst-case computational complexity is $O(n^3\log^2 n)$.
\end{theorem}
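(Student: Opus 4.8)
The plan is to establish three things in turn: that the pruned dynamic program returns a global optimizer of \eqref{e: C_k}--\eqref{e: MUSCLE}, that each feasibility test $\mathcal{I}(I)$ reduces to a handful of range-quantile queries answerable by the wavelet tree, and that summing the cost of these queries over all candidate intervals yields the two stated runtimes, while the wavelet tree alone dictates the memory. First I would argue optimal substructure. The MUSCLE objective is lexicographic: among feasible piecewise-constant candidates one first minimizes the number of change points and then, among those, the additive check loss $\sum_i (Z_i - g(x_i))(\beta - \mathbbm{1}_{\{Z_i \le g(x_i)\}})$. Both the cardinality of the change-point set and the check loss are additive over segments, and feasibility is \emph{local}, being certified segment by segment through $\mathcal{I}$. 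Hence any prefix of an optimal segmentation of $[0,j/n)$ is itself optimal, which validates the Bellman recursion \eqref{e: bellman} for $\widehat{K}[j]$ and, after augmenting each state with the minimal accumulated loss among cardinality-minimal prefixes, recovers $\hat{f}$. On each feasible segment the segment loss is convex in its constant value, so its minimum over the feasible range of $\theta$ is attained at the empirical $\beta$-quantile clipped to that range. I would then verify that the pruning rules only discard states dominated in both cardinality and loss, so a global optimum survives and the worst-case count is unchanged.

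Second I would make the evaluation of $\mathcal{I}(I)$ explicit. For each $J \subseteq \mathring{I}$ the constraint in \eqref{e: C_k} reads $\sqrt{2\abs{J}\,g_{\beta}(\overline{W}_J)} \le q_{\alpha}(\abs{\mathring{I}}) + \sqrt{2\log(e\abs{\mathring{I}}/\abs{J})}$, i.e.\ $g_{\beta}(\overline{W}_J) \le c_J$ for an explicit threshold $c_J$ depending only on $\abs{J}$ and $\abs{\mathring{I}}$. Since $g_{\beta}$ is strictly convex with its unique zero and minimum at $\beta$, this is equivalent to $\overline{W}_J \in [l_J,u_J]$ for some $l_J \le \beta \le u_J$. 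Because $\overline{W}_J = \abs{J}^{-1}\#\{i : x_i \in J,\, Z_i \le \theta\}$ is a nondecreasing step function of $\theta$, the constraint becomes $\theta \in [a_J,b_J)$ where $a_J$ and $b_J$ are prescribed order statistics (hence $q$-quantiles) of $\{Z_i : x_i \in J\}$. Therefore $\mathcal{I}(I)=1$ if and only if $\max_{J\subseteq \mathring{I}} a_J \le \min_{J \subseteq \mathring{I}} b_J$, and deciding this amounts to two range-quantile queries per sub-interval $J$.

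Third I would invoke the wavelet tree \citep{Nav14}: built once on $Z_1,\dots,Z_n$ with $O(n\log n)$ preprocessing and $O(n\log n)$ memory, it returns the order statistic of any contiguous block in logarithmic time. Evaluating $\mathcal{I}(I)$ for an interval of index-length $m$ thus uses $O(m^2)$ queries; summing over all candidate segments $[i/n,j/n)$ gives $\sum_{m=1}^{n}(n-m+1)m^2 = O(n^4)$ queries, hence runtime $O(n^4\log n)$. For the dyadic variant, $J$ ranges only over dyadic lengths and, using the sparse interval collection of \citet{Walther2022Calibrating}, only $O(m\log m)$ sub-intervals per segment are tested; the analogous sum $\sum_{m=1}^{n}(n-m+1)m\log m = O(n^3\log n)$ queries produces runtime $O(n^3\log^2 n)$. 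The memory is dominated by the wavelet tree's $O(n\log n)$, the $O(n)$ dynamic-programming table and the implicitly generated interval system being negligible, giving the $O(n\log n)$ space bound.

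I expect the main obstacle to be the reduction step rather than the arithmetic: one must verify carefully that the strict convexity of $g_{\beta}$ together with the monotonicity of $\overline{W}_J$ in $\theta$ turns the entire multiscale side-constraint into a single nonempty-intersection test, with correct rounding of $l_J,u_J$ into integer ranks and correct handling of the scale-dependent thresholds $c_J$ and of $q_{\alpha}(\abs{\mathring{I}})$ across scales. A secondary point of care is confirming that restricting to the sparse dyadic collection does not alter the feasibility decision for the dyadic variant of MUSCLE, and that the wavelet tree indeed supports the required order-statistic queries within the claimed time and memory budget. Given these, the dynamic-programming correctness and the summation bounds are then routine.
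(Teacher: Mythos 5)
Your proposal is correct and follows essentially the same route as the paper: the Bellman recursion \eqref{e: bellman} justified by optimal substructure of the lexicographic (count-then-check-loss) objective, the strict convexity of $g_{\beta}$ turning each feasibility test $\mathcal{I}(I)$ into a nonempty-intersection test over $O(\abs{I}^2)$ (resp.\ $O(\abs{I}\log \abs{I})$ in the dyadic variant) range-quantile intervals, each answered in $O(\log n)$ time by a wavelet tree built once with $O(n\log n)$ time and memory, and the same worst-case summation yielding $O(n^4\log n)$ and $O(n^3\log^2 n)$ runtime with $O(n\log n)$ space. The only element you treat more lightly is the paper's explicit pruning rule based on the relaxed statistic $T_I^0 \le T_I$ (needed because feasibility is not monotone in the interval, owing to the scale-dependent penalties and local quantiles), but since pruning can only reduce work this does not affect the stated worst-case bounds, and your direct count even recovers the theorem's dyadic bound $O(n^3\log^2 n)$ cleanly.
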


\begin{remark}[Runtime and speedup]
\label{r: compleixity}
We stress that $O(n^4\log n)$ or $O(n^3\log^2 n)$ is an upper bound on the computational complexity, see the proof of \cref{th: complexity} (in Appendix~\ref{A3}). In fact, the actual computational complexity depends on the unknown signal and the noise level. Roughly spoken, a large number of detected change points would lead to a low computational complexity. For instance, in case of a signal with $O(n)$ change points that are almost evenly spaced and a low noise level, the computational cost may even be $O(n\log n)$. 

The use of sparse collection of intervals is motivated by the observation that intervals with large overlaps encode similar information in the data (see \citealp{Walther2022Calibrating} for a thorough discussion). In practice, we recommend the use of intervals that are of dyadic lengths, which provides a balance between computational and statistical efficiencies (cf.\ \citealp{pein2018fully, kovacs2023seeded}). We stress that MUSCLE with this modification enjoys nearly the same statistical guarantees (in \cref{S: 3}) with some natural adjustments (cf.\ \citealp{Walther2022Calibrating}). Empirically, we find through simulations that the computational complexity is no more than $O(n^2)$ in nearly all scenarios.  

In addition, for time series of ultra-large scale, we may split the whole dataset into several pieces and compute MUSCLE on each piece separately. The resulting estimates are afterwards merged by refitting (again via MUSCLE) the two segments that are close to the splitting location of the dataset. Interestingly, this splitting-merging procedure often leads to an estimate that is fairly close to MUSCLE applied to the whole dataset. The reason behind is that MUSCLE controls the local error defined on individual segments, and it is thus robust to the choice of data window (cf.\ \cref{f: E1}). If we split the data into pieces of a fixed size (say $300$ up to rounding effects), this procedure leads to a computational complexity that depends linearly on the sample size~$n$. In addition, we can compute the estimates on pieces of data simultaneously via \emph{parallel computing} (see e.g.~\citealp{almasi1994highly}), which allows a linear speedup in the number of computational units. This splitting and merging variant of MUSCLE is denoted by \emph{MUSCLE-S}.
\end{remark}

\begin{remark}[Data structure]\label{r:dstr}
Besides the wavelet trees, there are alternative data structures that allow efficient computation of quantiles on intervals of data (known as range quantile queries in computer science), such as \emph{merge sort trees} and \emph{persistent segment trees} (see also \citealp{castro2016wavelet}). A detailed comparison of different data structures in terms of computational and memory complexities is in \cref{t: complexity} in Appendix~\ref{A3}. It reveals that the wavelet trees and persistent segment trees achieve the best efficiency in terms of both computation and memory. 
\end{remark}

\section{Simulations and Applications}
\label{S: 4}
\subsection{Simulation studies}
\label{S: 4.1}
We investigate the empirical performance of MUSCLE and its splitting-merging variant  MUSCLE-S (see \cref{r: compleixity}) under various situations. To benchmark the performance, we include PELT \citep{killick2012optimal}, WBS \citep{fryzlewicz2014wild}, SMUCE \citep{frick2014multiscale}, FDRSeg \citep{li2016fdr} and MQS \citep{jula2022multiscale} in Gaussian models, and R-FPOP \citep{fearnhead2019changepoint}, KSD \citep{madrid2021optimal}, MQS \citep{jula2022multiscale}, NOT-HT \citep[the heavy tailed version]{baranowski2019narrowest}, ED-PELT \citep{haynes2017computationally}, RNSP \citep[the CUSUM version]{fryzlewicz2024robust} in non-Gaussian models, as competitors. For implementation, we compute WBS from {R} package \texttt{wbs}, PELT from \texttt{changepoints}, SMUCE and FDRSeg from \texttt{FDRSeg}, MQS from \texttt{mqs}, NOT-HT from \texttt{not}, ED-PELT from \texttt{changepoint.np}, all available from CRAN. The implementations of R-FPOP, KSD and RNSP are available on Github (\url{https://github.com/guillemr/robust-fpop}; \url{https://github.com/hernanmp/NWBS}; \url{https://github.com/pfryz/nsp}). Default tuning parameters are used for PELT, R-FPOP, WBS, KSD and NOT-HT. For SMUCE, MQS, FDRSeg MUSCLE, MUSCLE-S and RNSP, we select $\alpha = 0.3$. The length of pieces in MUSCLE-S is set to $300$. We use the \texttt{max\_sign\_cusums} criterion (see \citealp{sen1975tests} and \citealp{fryzlewicz2024robust}) for estimating change points in RNSP.

As quantitative evaluation criteria of change points, we consider the estimated number $\widehat{K}$ of change points, the {localization error}  $d(\cdot;\cdot)$ in \eqref{e: local error}, the {Hausdorff distance} $d_{H}(\cdot,\cdot)$ in \eqref{e: Hausdorff distance}, and the FDR in \eqref{e: FDR}. We also include the standard criteria of function estimation, consisting of the {mean integrated square error} (MISE) and the {mean integrated absolute error} (MIAE), formally defined in \eqref{e: MISE} and \eqref{e: MIAE} in Appendix~\ref{A2}. 
Additionally, as change point problems can be viewed as clustering problems, we incorporate the V-measure \citep{rosenberg2007v}, which ranges from $0$ to $1$ with larger values indicating higher accuracy. 

\subsubsection{Changes in median}
\label{S: 5.1.1}
We consider the additive noise model in \eqref{e:noise_signal} with $\beta = 0.5$, i.e., noise terms $\varepsilon_i$ are independent random variables with median zero. The test signal in the Gaussian scenario \ref{i:gauss} has $K = 2$ change points at $986$ and $1016$ with sample size $n = 2000$. The values between change points are $-4$, $0$ and $4$. In other scenarios \ref{i:tdist}--\ref{i:hetemix} we use the \texttt{blocks} signal \citep{donoho1994ideal} of length $n = 2048$. Its change points are 205, 267, 308, 472, 512, 820, 902, 1332, 1557, 1598, 1659 and the values between change points are $0$, $14.64$, $-3.66$, $7.32$, $-7.32$, $10.98$, $-4.39$, $3.29$, $19.03$, $7.68$, $15.37$, $0$. The noise distributions are specified as follows.

\begin{enumerate}[wide, label = (E\arabic*)]
    \item\label{i:gauss} Normal distribution $\mathcal{N}(0,0.9)$. 
    \item\label{i:tdist} Scaled $t$-distribution $2^{-0.5}\cdot \bigl(8\cdot \mathbbm{1}_{\{1\leq i \leq 389\}} + 0.5 \cdot\mathbbm{1}_{\{390 \leq i \leq 666\}} +4 \cdot \mathbbm{1}_{\{667 \leq i \leq 1445\}} + \mathbbm{1}_{\{1446 \leq i \leq 2048\}}\bigr) \cdot t_3$.
    \item\label{i:cauchy} Scaled Cauchy distribution $\bigl(0.6\cdot \mathbbm{1}_{\{1\leq i \leq 389\}} + 0.05 \cdot\mathbbm{1}_{\{390 \leq i \leq 666\}} +0.6\cdot \mathbbm{1}_{\{667 \leq i \leq 1445\}} +0.2\cdot \mathbbm{1}_{\{1446 \leq i \leq 2048\}}\bigr) \cdot {\rm Cauchy(0,1)}.$
    \item\label{i:chi2} Scaled and centered $\chi^2$-distribution $6^{-0.5}\cdot\bigl(6\cdot \mathbbm{1}_{\{1\leq i \leq 389\}} + 0.5\cdot \mathbbm{1}_{\{390 \leq i \leq 666\}} +6\cdot \mathbbm{1}_{\{667 \leq i \leq 1445\}} +2\cdot \mathbbm{1}_{\{1446 \leq i \leq 2048\}}\bigr) \cdot \bigl(\chi_3^2 -  \text{median}(\chi_3^2)\bigr).$
   \item\label{i:hetemix} Mixed distribution \(\mathbbm{1}_{\{1\leq i \leq 389\}}\cdot \mathcal{N}(0,64) + 1/({2\sqrt{3}})\cdot \mathbbm{1}_{\{390 \leq i \leq 666\}} \cdot  t_3 + {4}/{\sqrt{6}}\cdot\mathbbm{1}_{\{667 \leq i \leq 1445\}}\cdot \bigl(\chi_3^2 -  \text{median}(\chi_3^2)\bigr) + 0.1 \cdot  \mathbbm{1}_{\{1445 \leq i \leq 2048\}}\cdot \text{Cauchy(0,1)}\).
\end{enumerate}

In all scenarios, the simulations are repeated 200 times. We display simulation results in scenarios \ref{i:gauss}--\ref{i:tdist} and \ref{i:hetemix} in \cref{f: blocks Gaussian,f: blocks t3,f: blocks Mix H}, respectively. Further detailed results as well as scenarios \ref{i:cauchy} and \ref{i:chi2} are given in Appendix~\ref{A2} (\cref{f: blocks Cauchy,f: blocks Chi_sq,t: E1-E3}). In the Gaussian scenario~\ref{i:gauss}, FDRSeg and WBS overestimate the number of change points, leading to a relatively large FDR. In contrast, PELT, SMUCE and MUSCLE provide more accurate estimation of $K$. Despite MUSCLE having a slightly larger MISE compared to SMUCE and FDRSeg, it outperforms PELT and WBS. Overall, the performance of MUSCLE is comparable to the procedures tailored for the Gaussian model. In non-Gaussian scenarios \ref{i:tdist}--\ref{i:hetemix}, there are 11 change points in median, and 14 change points in distribution. We compute localization error, Hausdorff distance, FDR and V-measure with respect to the change points in median, and MISE and MIAE with respect to the median function.  
We discuss the performance of each method in detail as follows. 
\begin{unlist}
    \item R-FPOP, while being competitive in \ref{i:cauchy}, exhibits noticeable underperformance in all heteroscedastic scenarios \ref{i:tdist}, \ref{i:chi2} and \ref{i:hetemix}. A possible reason is that R-FPOP estimates the standard deviation of noise using the entire data, which is not well-suited in case of heteroscedasticity. 
    \item NOT-HT performs closely to the best in \ref{i:tdist}, but shows an undesirable performance in the other scenarios. 
    \item MQS estimates well the number $K$ of change points in scenarios {\ref{i:gauss}}, \ref{i:tdist}, \ref{i:cauchy} and \ref{i:hetemix}, while significantly underestimates $K$ in \ref{i:chi2}. In all scenarios, the estimated locations of change points are relatively imprecise. 
    \item KSD aims at the detection of distributional changes, but often fails to detect all of $14$ change points. The detected change points contain only a subset of change points in median, as revealed by Hausdorff distance and MISE (as well as localization error, MIAE and V-measure in \cref{t: E1-E3}).
    \item ED-PELT is also designed to detect changes in distribution. It seriously overestimates the number of change points in all scenarios, and performs even worse than KSD.
    \item RNSP estimates well the number of change points, and yields competitive median regressions, in scenarios \ref{i:tdist}, \ref{i:cauchy} and \ref{i:hetemix}. However, in \ref{i:chi2}, it steadily underestimates the number of change points and reports relatively worse Hausdorff distance.
    \item MUSCLE is among the best ones in all scenarios, and in particular, in scenarios \ref{i:chi2} and \ref{i:hetemix}. The competitive performance of MUSCLE aligns with its statistical guarantees (\cref{S: 3}). As a speedup variant of MUSCLE, MUSCLE-S has nearly the same performance as MUSCLE, and tends to report slightly more change points. 
\end{unlist}

Regarding computation time (\cref{t: E1-E3}), PELT and R-FPOP are the fastest, followed closely by SMUCE, WBS, NOT-HT, MUSCLE-S, MQS and FDRSeg. MUSCLE is noticeably slower, and RNSP and KSD are the slowest. In all scenarios, MUSCLE-S is roughly ten times faster than MUSCLE, and takes about one~second for 2000 samples. All runtimes are measured on a standard laptop (Intel i5 CPU, 1.40\,GHz, four cores, 16\,GB RAM).

\begin{figure}[t]
        \centering
        \includegraphics[width=\linewidth]{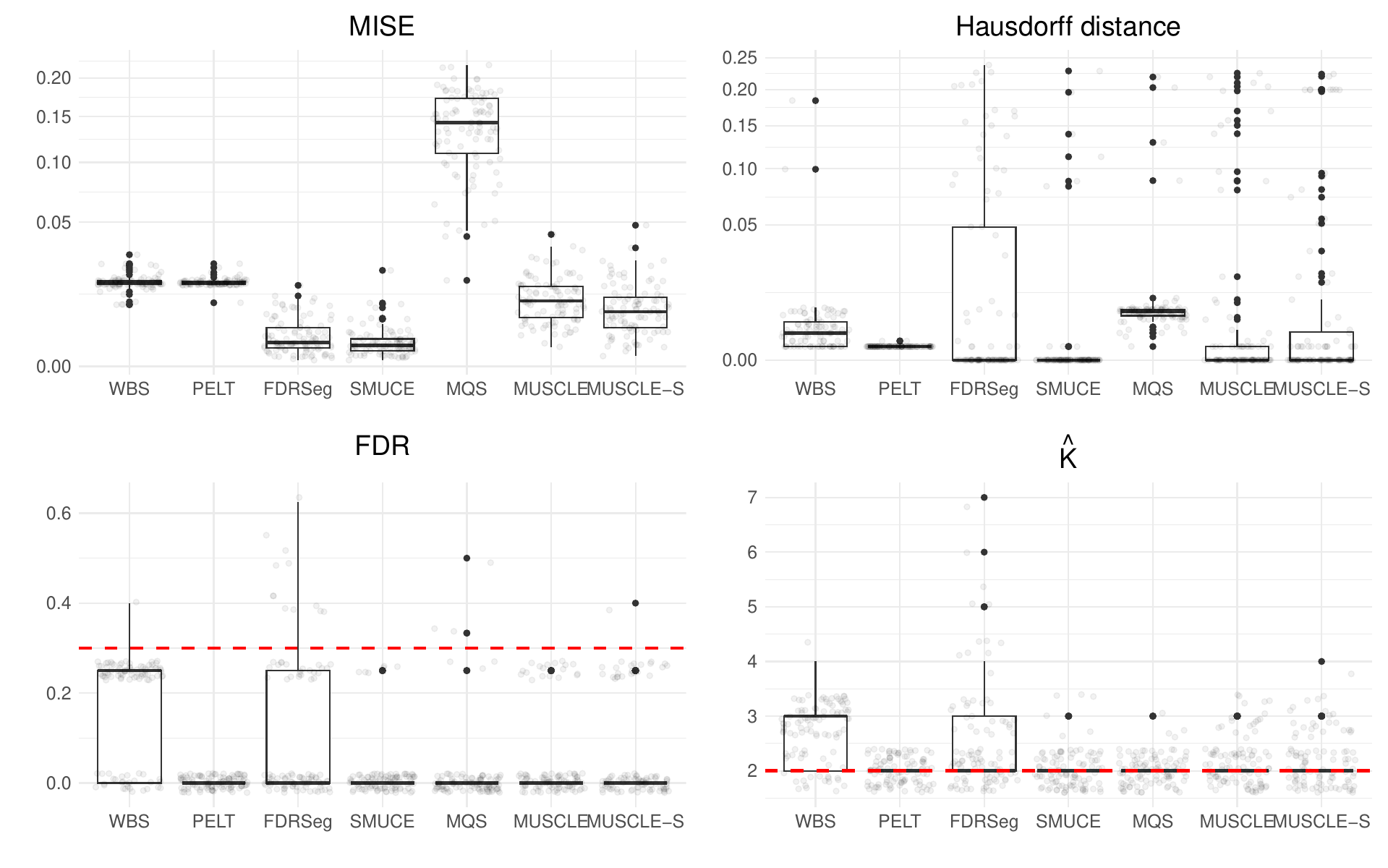}
        \caption{Recovery of the underlying signal in \ref{i:gauss}. In each panel, the overall performance  over 200 repetitions is summarized as a boxplot and individual repetitions are jittered in dots with a low intensity. In the bottom left panel, the theoretical upper bound $\alpha = 0.3 $ on the FDR of MUSCLE is marked by a red dashed line. In the bottom right panel, the true number of change points is $K = 2$ (marked by a red dashed line).}
        \label{f: blocks Gaussian}
\end{figure}

\begin{figure}[t]
        \centering
        \includegraphics[width=\linewidth]{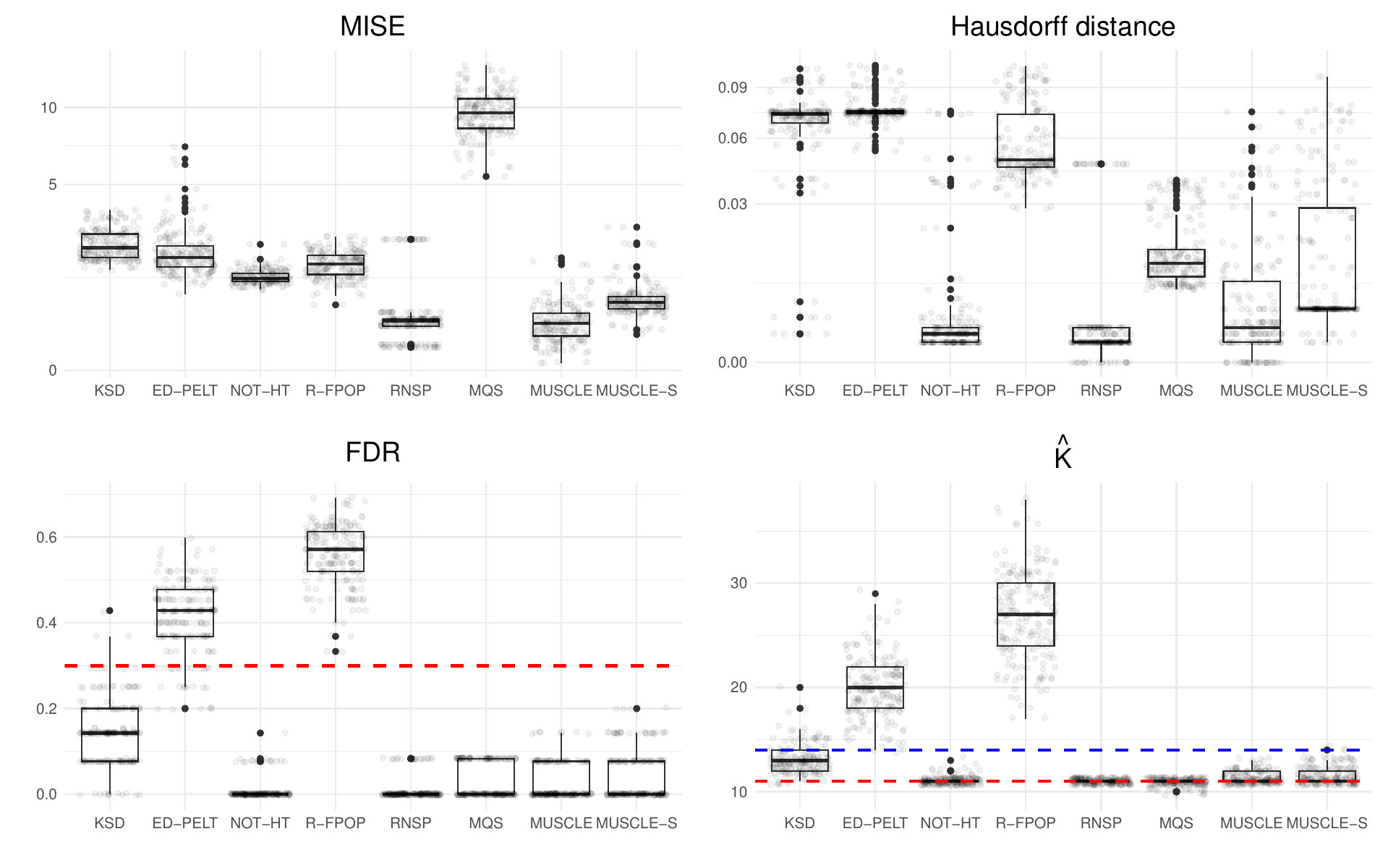}
        \caption{Recovery of the \texttt{blocks} signal in \ref{i:tdist}. In each panel, the overall performance  over 200 repetitions is summarized as a boxplot and individual repetitions are jittered in dots with a low intensity. In the bottom left panel, the theoretical upper bound $\alpha = 0.3 $ on the FDR of MUSCLE is marked by a red dashed line. In the bottom right panel, the true numbers of change points in median and in distribution are $K = 11$ (marked by a red dashed line) and $K = 14$ (marked by a blue dashed line), respectively. }
        \label{f: blocks t3}
\end{figure}

\begin{figure}[t]
    \centering
    \includegraphics[width=\linewidth]{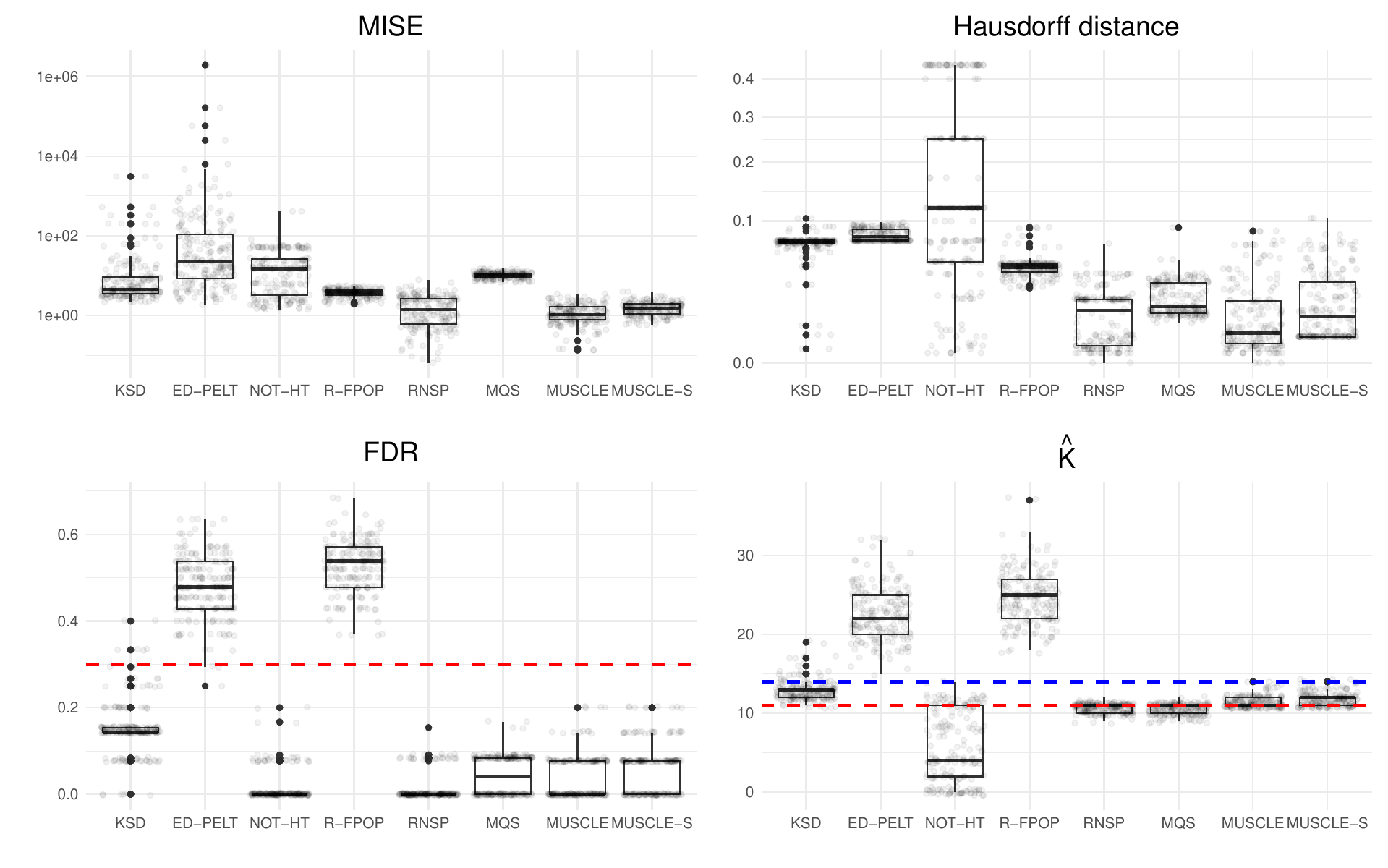}
      \caption{Recovery of the \texttt{blocks} signal in \ref{i:hetemix}. In each panel, the overall performance  over 200 repetitions is summarized as a boxplot and individual repetitions are jittered in dots with a low intensity. In the bottom left panel, the theoretical upper bound $\alpha = 0.3 $ on the FDR of MUSCLE is marked by a red dashed line. In the bottom right panel, the true numbers of change points in median and in distribution are $K = 11$ (marked by a red dashed line) and $K = 14$ (marked by a blue dashed line), respectively.}
    \label{f: blocks Mix H}
\end{figure}

\subsubsection{Changes in distribution}\label{ss:cp_box}
We now consider the detection of change points in distribution. If two random variables do not share the same distribution, at least one quantile of them must be different. Thus, a straightforward approach to detect change points in distribution is applying MUSCLE repeatedly to estimate various quantiles of observations, e.g., median ($\beta = 0.5$), lower $(\beta = 0.25)$ and upper ($\beta = 0.75$) quartiles. However, for a change point in distribution, this straightforward approach may report several slightly different estimates when using different quantiles. To address this issue, we modify the multiscale tests on individual segments (see \cref{ss:mtest}) by considering simultaneously multiple quantiles. Let $\beta = (\beta_1, \ldots, \beta_m) \in (0,1)^m$ be the considered levels of quantiles. For an interval $I \subseteq [0,\, 1)$, we perform a multiscale test on whether $\{Z_i : x_i \equiv (i-1)/n \in I \}$ have the same $\beta_r$-quantile for every $r = 1, \ldots, m$. Imitating the development in \cref{S: 2}, we may arrive at the multiscale side-constraint
\begin{multline*}
    \mathcal{C}_k^m \coloneqq \biggl\{g= \sum_{j=0}^{k} \theta_j \mathbbm{1}_{{I}_j} : \theta_j = (\theta_{j1},\ldots, \theta_{jm}), \bigsqcup_{j = 0}^k I_j = [0,1), \\
    \text{ and } T_{\mathring{I}_j}(Z,\theta_{jr}) \le q_{\alpha/m}(\abs{\mathring{I}_j}), \, j=0,\dots,k,\, r= 1,\ldots, m \biggr\}.
\end{multline*}
Note that we use the level $\alpha/m$ in local quantiles $q_{\alpha/m}$ in \eqref{e: q_m}, which comes from the Bonferroni correction on the multiplicity of considering $m$ quantiles. Based on $\mathcal{C}_k^m$, we introduce a variant of MUSCLE, referred to as M-MUSCLE (Multiple MUSCLE), by 
\begin{equation*}
    \hat{f}^m \; \in \; \underset{f\in \mathcal{C}_{\hat{K}_m}^m}{\arg \min}\, \sum_{r = 1}^{m}\sum_{i = 1}^{n} \bigl(Z_i-f_{\beta_r}(x_i)\bigr) \bigl(\beta_r-\mathbbm{1}_{\{Z_i\le f_{\beta_r}(x_i)\}}\bigr),
\end{equation*}
with $\widehat{K}_m \coloneqq \min\bigl\{k \in \mathbbm{N} \cup \{0\} : \mathcal{C}_k^m \neq \emptyset\bigr\}.$ This loss function is also used in composite quantile regression (see e.g.\ \citealp{zou2008composite}).

We demonstrate the performance of M-MUSCLE in scenario \ref{i:tdist} with $t_3$-noise, see \cref{f:MMUSCLE}. We select $\beta = (\beta_1,\beta_2,\beta_3) = (0.25,0.5,0.75)$ and set $\alpha = 0.3$. M-MUSCLE  estimates quite accurately the number of change points as well as their locations. In comparison with other nonparametric methods that aim at distributional changes, KSD often underestimates the number of change points, while ED-PELT largely overestimates the number of change points (see \cref{f: blocks t3}, the bottom right panel).

\begin{figure}[ht]
        \centering
        \includegraphics[width = \linewidth]{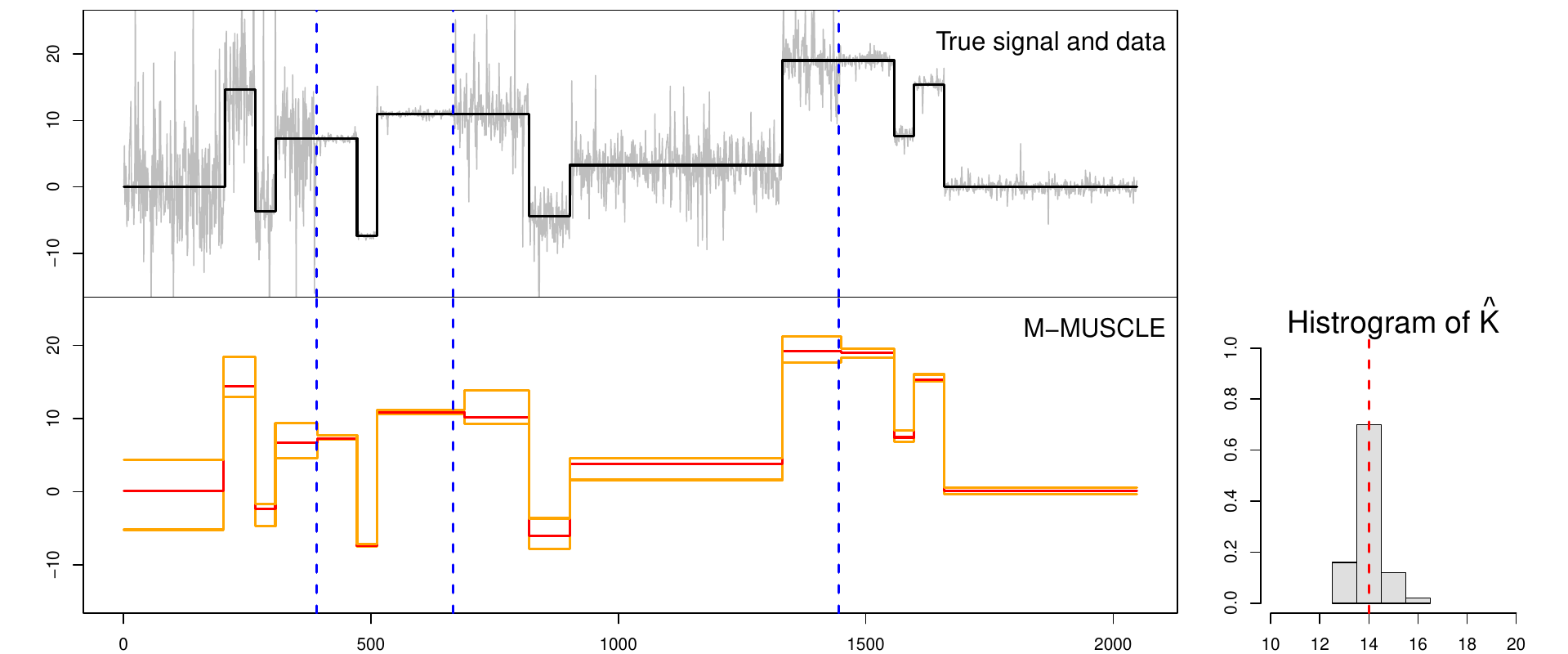}
        \caption{Recovery of the \texttt{blocks} signal in \ref{i:tdist} by M-MUSCLE ($\alpha = 0.3$, bottom). The estimated upper and lower quartiles are in orange lines and the estimated median in red line. The histogram on the right summarizes its performance in estimating the number of change points over 200 repetitions, where the vertical dashed red line indicates the true number of change points. On the top, the true signal (black line) and the data (grey line) are plotted. Three change points of variances are highlighted by vertical dashed blue lines.}
        \label{f:MMUSCLE}
\end{figure}

\subsection{Real data examples}
\label{S: 5.2}
\subsubsection{Well-log dataset}\label{ss:well-log}
The well-log dataset \citep{ruanaidh2012numerical} contains 4050 measurements of the nuclear magnetic response of underground rocks. Collected by inserting a probe into a borehole, the true signal exhibits approximate piecewise constancy, with each segment corresponding to a distinct rock type characterized by constant physical properties. Change points in the signal reveal changes between rock types, holding crucial importance in the context of oil drilling. We refer to \citet{fearnhead2006exact}, \citet{wyse2011approximate} and \citet{fearnhead2019changepoint} for further details.

The raw data exhibits several short but sharp oscillations, with some of these fluctuations significantly smaller in amplitude compared to others. Given these distinct patterns, a Gaussian noise assumption might not be justified, which is supported by Shapiro--Wilk's test, yielding a $p$-value of $2.2\times 10^{-16}$. Consequently, many studies analyzing well-log data rely on a {pre-cleaned} dataset, wherein outliers are manually removed in advance. In general, the detection and removal of outliers are challenging tasks due to the scarcity of labelled data, stemming from the infrequent occurrence of outlier instances. Typical outlier detection methods identify data points that deviate significantly from the rest. However, proposing a universal measure of deviation applicable to all datasets and scenarios is hard or even impossible (cf.\ \citealp{boukerche2020outlier}).

The robustness of MUSCLE eliminates the need for a preliminary {pre-cleaning} step. We employ MUSCLE to estimate the median ($\beta = 0.5$) of well-log signal with different choices of the tuning parameter, namely, $\alpha = 0.1, 0.3$ and $0.5$, and by comparison we also consider other robust or nonparametric segmentation methods, including KSD, MQS, RNSP, NOT-HT, R-FPOP and ED-PELT, see \cref{f: well-log}. As shown in the top panel, the raw data has five pronounced \emph{flickering} events (i.e., events on tiny temporal scales). The estimates provided by KSD, MQS, RNSP and MUSCLE with $\alpha=0.1$ or $0.3$ are very similar, and none of them detect any flickering events. Despite this, they all capture the main shape of the well-log data. We set $\alpha = 0.5$ for MQS and RNSP in order to improve their detection power. By contrast, NOT-HT identifies two and R-FPOP detects three flickering events. However, R-FPOP seems to overfit the data around the last flickering (see \cref{f: Well log RFPOP} in Appendix~\ref{A2}). ED-PELT reports the highest number of change points. However, as it constantly overestimates $K$ in simulation studies, its estimation may not be trustworthy. With $\alpha = 0.5$, MUSCLE successfully recovers all of the five flickering events. In general, it is challenging to verify whether these flickering events (or some of them) are genuine change points or outliers causing by mistakes in measurement. Nevertheless, MUSCLE with various choices of $\alpha$ reveals a nested sets of detected features in the well-log data. As a larger $\alpha$ corresponds a lower statistical confidence, such a nested sets of detected features shed light on how likely these features are genuine.

\begin{figure}[ht]
        \centering
        \includegraphics[width = \linewidth]{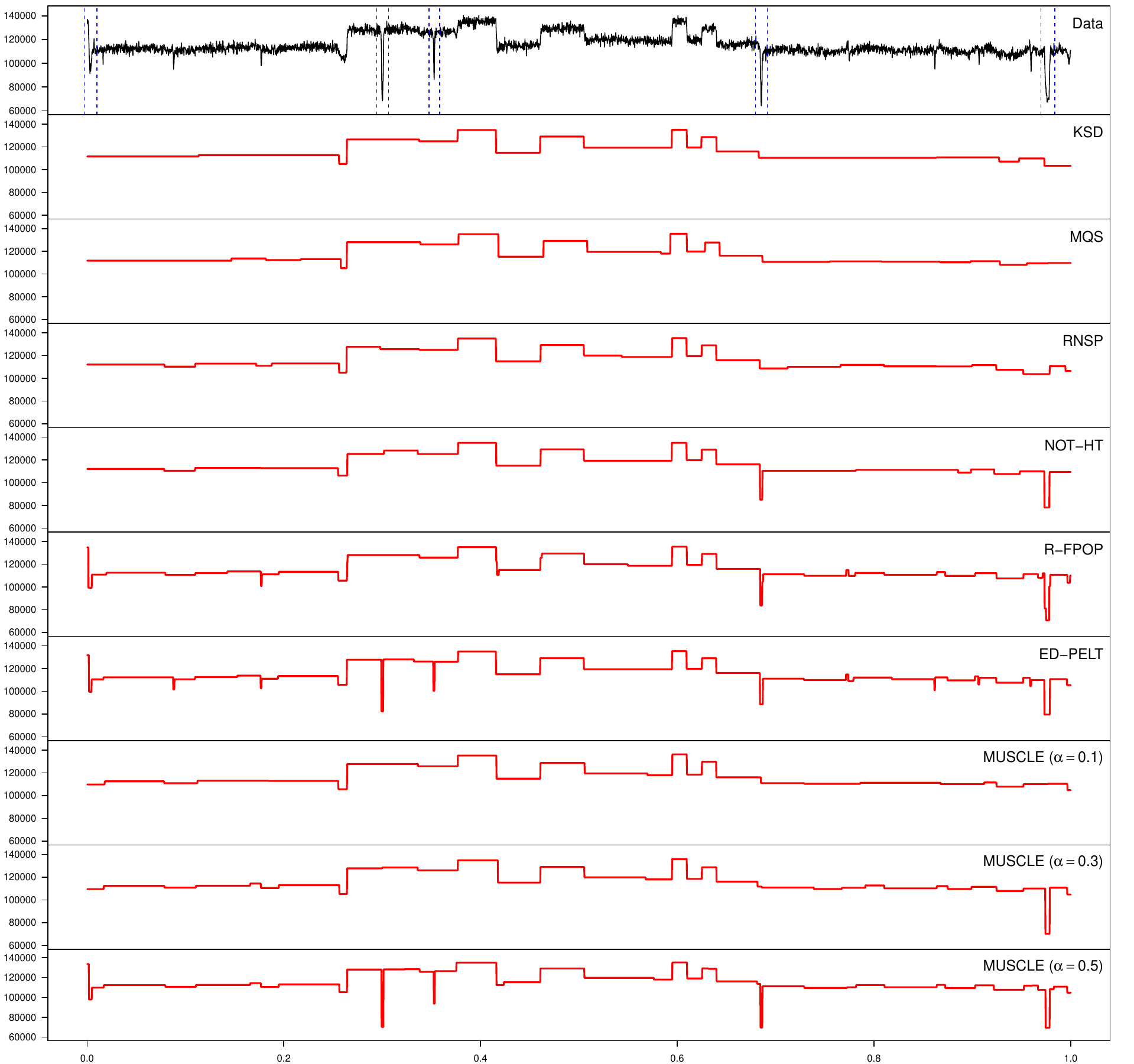}
\caption{Segmentation of the well-log dataset \citep{ruanaidh2012numerical}.  The raw data is in the top panel with five pronounced flickering events marked by vertical blue dashed lines. NOT-HT, KSD, R-FPOP and ED-PELT use default parameters. MQS and RNSP are tuned with $\alpha = 0.5$.}
        \label{f: well-log}
\end{figure}

\subsubsection{Ion channels recordings: Gramicidin D}
\label{S: 5.2.2}
Ion channels are proteins that control ion flows crossing the cell membrane by randomly opening and closing pores and such process is called \emph{gating}. A popular tool for the quantitative analysis of gating dynamics is the \emph{voltage-clamp} technique, which allows the measurement of electrical currents flowing through a single ion channel over time (cf.\ \citealp{sakmann2013single}). Since the measuring process involves an indispensable low-pass filter, the statistical model is 
\begin{equation}
\label{e: dependent noise}
Z_i = (\rho \ast f)\left(\tfrac{i}{\omega}\right) + \varepsilon_i, \quad i = 1,\dots, n.
\end{equation}
with $\omega$ the sampling rate, and $\rho$ the convolution kernel (corresponding to the low-pass filter). The underlying signal $f$ represents the conductance profile of ion channels, which is a piecewise constant function with change points caused by gating events. The random errors $\varepsilon_i$ are centered, heterogeneous, and typically non-Gaussian \citep{ven98ii}, and they are dependent due to the low-pass filter. 

We showcase an extension of MUSCLE to the dependent noise in \eqref{e: dependent noise}. To take into account the dependence structure, we need to adjust the local quantiles $q_{\alpha}$ in \eqref{e: q_m}. It is generally difficult to determine the dependence of $\mathbbm{1}_{\{{\varepsilon}_i\leq 0\}}$, even though the dependence of noise $\varepsilon_i$ arises from the known low-pass filter, unless specific assumptions on the noise distribution function are made. Note however that, as a crucial building block of MUSCLE, the multiscale test statistic converges weakly to an accessible distribution in form of supremum of Gaussian random variables under mild conditions (see e.g.\ \citealp{frick2014multiscale} and \citealp{DeEV20}). Inspired by this, we redefine  the local quantiles $q_{\alpha}$ in \eqref{e: q_m} by treating noise $\varepsilon_i$ as dependent Gaussian distributed, with the dependence structure derived from the convolution kernel $\rho$. We compute such local quantiles by Monte--Carlo simulations. Another challenge posed by model~\eqref{e: dependent noise} is that the convoluted signal $\rho * f$ is no longer piecewise constant. But the influence of convolution is only locally active, as the convolution kernel $\rho$ has a compact support. As a consequence, $\rho * f$ is still constant on $[s + \kappa, t)$, with $\kappa$ the kernel length (i.e.\ the size of the support of $\rho$), if $f$ is constant on $[s,t)$. Thus, we also modify the multiscale statistic $T_{I}$ in~\eqref{e: T_I} by restricting to intervals $J$ such that $J\subseteq I$ and $J - \kappa \subseteq I$. Based on these two adjustments, we obtain an extension of MUSCLE for dependent errors, denoted by {D-MUSCLE}. 

We compare the performance of D-MUSCLE with two state-of-the-art methods JULES \citep{pein2018fully} and HILDE \citep{pein2020heterogeneous}, which are designed especially for ion channel analysis. The implementation of JULES and HILDE is in {R} package \texttt{clampSeg} on CRAN. We use an ion channel recording of gramicidin~D (provided by the Steinem lab, Institute of Organic and Biomolecular Chemistry, University of Göttingen) with the sampling rate $\omega = 20$~kHz and the kernel length of 9.6~ms. The results are given in \cref{f: ion channel}. JULES performs relatively worse than the others as it misses a change point at 0.17~s (marked by the vertical dashed line). HILDE, MUSCLE and D-MUSCLE have a close performance, except that MUSCLE and D-MUSCLE report additionally a flickering event between 0.38~s and 0.40~s. Compared with MUSCLE, the detected change points by D-MUSCLE seem more reasonable by visual inspection. 

\begin{figure}[ht]
        \centering
        \includegraphics[width = \linewidth]{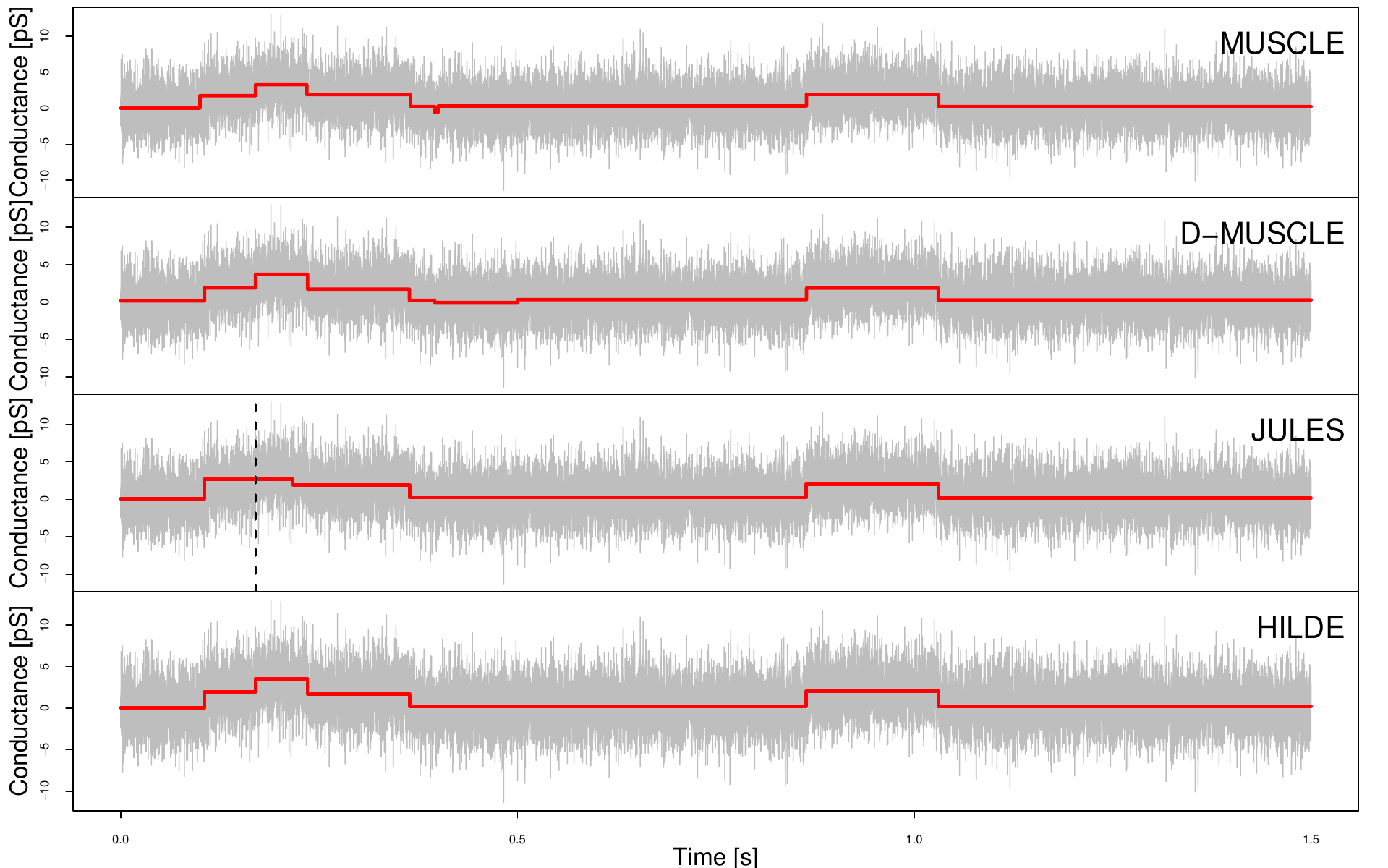}
        \caption{Segmentation (red lines) of gramicidin D recordings (grey lines; from the Steinem lab, University of Göttingen). We use the default parameters for JULES ($\alpha = 0.05$) and HILDE ($\alpha_1 = 0.01,\, \alpha_2 = 0.04$) and tune similarly for the proposed MUSCLE and D-MUSCLE with $\alpha = 0.05$. 
        }
        \label{f: ion channel}
\end{figure}

\section{Discussion}
\label{S: 6}
In this paper, we introduce a novel change point segmentation method MUSCLE that is robust and meanwhile has a high detection power. It combines the ideas of multiscale testing, quantile segmentation and variational estimation. As a distinctive feature, MUSCLE controls the local errors associated with individual segments. This local error control, as opposed to the commonly studied global error control, allows improved detection power, robustness to data windowing, as well as efficient computation. Under the quantile segmentation model with serial independence, we have established the consistency and the localization error rates for MUSCLE. Remarkably, these results match  (up to log factors) the minimax optimality results in Gaussian setups. Further, we obtain finite sample guarantees on its falsely detected change points in terms of FDR and OER. This provides a statistical interpretation on the unique tuning parameter $\alpha$ of MUSCLE in the sense that the expected proportion of falsely detected change points does not exceed $\alpha$. Moreover, we develop an efficient dynamic programming algorithm for MUSCLE, which leverages advanced wavelet tree data structures, modern pruning techniques, and sparse interval system proposals. We also introduce a speedup variant (i.e., MUSCLE-S) of MUSCLE, which scales linearly with sample size, particularly suitable for large scale datasets. The efficacy of MUSCLE(-S) is validated through comprehensive comparison studies using both simulated and real world data.   

We showcase two extensions of MUSCLE and demonstrate their empirical performances. The first extension, M-MUSCLE, focuses on detecting distributional changes by monitoring multiple quantiles simultaneously. The theoretical properties of MUSCLE would probably extend to M-MUSCLE, and there may be a link to segmentation methods using Wasserstein distances (e.g.\ \citealp{HoKW21}). The second extension, D-MUSCLE, addresses dependent data. Its statistical guarantees may be derived under the framework of functional dependence (introduced by \citealp{Wu05}) by additionally requiring the distributions of additive errors to be identical.  Further investigation into these extensions, as discussed above, represents intriguing directions for future research.

Finally, the concept of local error appears to be well-suited for online change point detection, as evidenced by its stability across various data windows. Consequently, developing an online version of MUSCLE represents another interesting direction of future research. 

\section*{Acknowledgments}
This work was supported by DFG (German Research Foundation) CRC~1456 {Mathematics of Experiment}, and by the DFG under Germany's Excellence Strategy, project EXC~2067 {Multiscale Bioimaging: from Molecular Machines to Networks of Excitable Cells} (MBExC).
The authors thank Manuel Fink and Claudia Steinem (Institute of Organic and Biomolecular Chemistry, University of Göttingen) for providing the ion channel data, and also Axel Munk and Robin Requadt for helpful discussions.



\begin{appendix}

\section{Proofs}\label{A1}
We note that the statistical guarantees of MUSCLE rely solely on the fact that 
\[
W_i\bigl(Z_i, f(x_i)\bigr) = \mathbbm{1}_{\{Z_i \leq f(x_i)\}}= \mathbbm{1}_{\{\varepsilon_i \le 0\}}, \quad i = 1, \ldots, n,
\]
are i.i.d.~Bernoulli random variables. 
Therefore, the assumption of statistical independence for the observations $Z_i$ can be relaxed to the independence of the indicators 
$\mathbbm{1}_{\{Z_i \leq f(x_i)\}} = \mathbbm{1}_{\{\varepsilon_i \leq 0\}}$.

\subsection{Technical results}
Let $W_1,W_2,\dots$ are i.i.d.\ Bernoulli random variables with success probability $\beta$. For any $k,m,i,j\in \mathbb{N}$, $x\in [0,k]$, $1\leq j-i\leq m$, we define
\begin{align*}
    A_k(x) &\coloneqq \frac{x}{k}\log\left(\frac{x}{k}\cdot\frac{1}{\beta}\right) +\left(1-\frac{x}{k}\right) \log\left(\left(1-\frac{x}{k}\right)\cdot\frac{1}{1-\beta}\right) \\
    B_{m,i,j} &\coloneqq \sqrt{2(j-i)A_{j-i}\left(\sum_{k=i+1}^{j}W_k\right)}-\sqrt{2\log\left(\frac{em}{j-i}\right)}.
\end{align*}
 Define a series of random times: $T_ 0 \coloneqq 0$ and
\[
    T_s \coloneqq \max \left\{l \,  :  \,\underset{T_{s-1}<i< j\leq l}{\max} B_{l-T_{s-1},i,j}-q_{\alpha}(l-T_{s-1})\leq 0\right\},
\]
for $s\geq 1$, where the quantiles $q_{\alpha}(\cdot)$ are given by \eqref{e: q_m}. Furthermore, let 
\begin{equation}\label{e: S}
    S \coloneqq \max \{s : T_s < n\}.
\end{equation}

\begin{proposition}\label[proposition]{p: tail bound}
Let $S$ be in \eqref{e: S} and $\alpha\in (0,1)$. Then for any $s\in \mathbb{N}$,
\(    \mathbb{P}(S\geq s) \leq \alpha^s.\)
Moreover, we have
\(    \mathbb{E}(S)\leq \frac{\alpha}{1-\alpha}.\)
\end{proposition}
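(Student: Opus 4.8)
The plan is to establish the geometric tail bound $\mathbb{P}(S \ge s) \le \alpha^s$ by a regeneration (strong-Markov-type) recursion on $s$, and then to obtain the bound on $\mathbb{E}(S)$ by summing the tail. First I would reduce the event $\{S \ge s\}$ to a single hitting event. Since each step satisfies $T_s \ge T_{s-1}+1$ (the acceptance condition defining $T_s$ holds vacuously at $l = T_{s-1}+1$, where no pair $T_{s-1} < i < j \le l$ exists), the sequence $T_0 < T_1 < \cdots$ is strictly increasing, so $\{S \ge s\} = \{T_s < n\}$ by \eqref{e: S}. It then suffices to prove $\mathbb{P}(T_s < n) \le \alpha^s$, which I would do by induction on $s$; the base case is immediate because $T_0 = 0 < n$ deterministically.

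The next step is a structural inclusion exploiting that $T_s = \max\{l : \cdots\}$ is a last-exit time. If $T_s < n$, then the acceptance condition must fail at $l = n$, i.e.\ on $\{T_s < n\}$ one has $\max_{T_{s-1} < i < j \le n} B_{n - T_{s-1}, i, j} > q_\alpha(n - T_{s-1})$. Writing $R_t$ for the event that the multiscale statistic on the block started at $t$ exceeds its critical value over $(t, n]$, this gives $\{T_s < n\} \subseteq \{T_{s-1} < n\} \cap R_{T_{s-1}}$. For each fixed $t < n$, the data $\{W_i : i > t\}$ are i.i.d.\ Bernoulli$(\beta)$, so the calibration \eqref{e: q_m} yields $\mathbb{P}(R_t) \le \alpha$; this is the source of the factor $\alpha$. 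I would then aim to convert this pointwise control into the recursion $\mathbb{P}(T_s < n) \le \alpha\,\mathbb{P}(T_{s-1} < n)$ by decomposing $\mathbb{P}(T_s < n) \le \sum_{t < n} \mathbb{P}(T_{s-1} = t,\, R_t)$ and factoring each summand as $\le \alpha\,\mathbb{P}(T_{s-1} = t)$, after which the induction hypothesis closes the argument.

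The hard part will be justifying that factorization, because $T_{s-1}$ is a \emph{last-exit} time rather than a stopping time: certifying $T_{s-1} = t$ requires verifying that no $l > t$ is accepting for the preceding block, which depends on the very data $\{W_i : i > t\}$ that also determines $R_t$. Worse, the two events are positively associated, since ``bad-looking'' data after $t$ simultaneously favours the previous block terminating at $t$ and favours rejection on $(t,n]$, so a naive product bound is unavailable. This is precisely where the interior adjustment of the side-constraint (footnote~\ref{f:tech}) is essential: by testing only on interiors $\mathring I$, the data governing successive acceptance blocks become independent, decoupling $\{T_{s-1} = t\}$ from $R_t$ at the regeneration point $t$ and delivering the genuine $\alpha^s$ (rather than $\alpha^{s/2}$) rate. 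I expect making this decoupling rigorous to be the principal obstacle of the proof.

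Finally, once $\mathbb{P}(S \ge s) \le \alpha^s$ is in hand, the moment bound follows by the standard tail-sum identity for the nonnegative integer-valued $S$, namely $\mathbb{E}(S) = \sum_{s \ge 1} \mathbb{P}(S \ge s) \le \sum_{s \ge 1} \alpha^s = \alpha/(1-\alpha)$, which completes the argument.
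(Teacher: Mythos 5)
Your skeleton coincides with the paper's: both reduce $\{S\ge s\}$ to $\{T_s<n\}$, observe that $T_s<n$ forces the acceptance condition to fail at horizon $n$, bound that failure by $\alpha$ via the calibration \eqref{e: q_m} (for a \emph{fixed} start $t$ the variables $W_{t+2},\dots,W_n$ are i.i.d.\ Bernoulli($\beta$), so the penalized multiscale statistic exceeds $q_\alpha(n-t)$ with probability at most $\alpha$), iterate to get $\alpha^s$, and obtain $\mathbb{E}(S)\le \alpha/(1-\alpha)$ by the tail sum. The paper runs the iteration as an induction after conditioning on the full history: it asserts, as following from the definitions of $q_\alpha$ and $T_s$, the conditional bound $\mathbb{P}[T_s-T_{s-1}\ge n\mid T_{s-1}=t_{s-1},\dots,T_1=t_1]\ge 1-\alpha$ (display \eqref{ieq:cond tail}), and then peels off one factor of $\alpha$ per step from $\mathbb{P}(T_i-T_{i-1}<n \text{ for all } i\le s)$; your one-step recursion $\mathbb{P}(T_s<n)\le \alpha\,\mathbb{P}(T_{s-1}<n)$ is the same idea in a slightly different packaging.

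The genuine problem is that your proposal never closes the step on which everything hinges, and you concede this yourself ("the principal obstacle"); moreover, the mechanism you offer for closing it is incorrect. The interior adjustment of footnote~\ref{f:tech} does \emph{not} decouple $\{T_{s-1}=t\}$ from $R_t$. What the convention $T_{s-1}<i<j\le l$ buys is that the statistics on \emph{adjacent blocks} use disjoint observations (block $s$ touches only $W_{T_{s-1}+2},\dots$, removing the boundary point shared in the FDRSeg-type construction); this is what upgrades $\alpha^{\floor{k/2}}$ to $\alpha^{k}$. But the obstruction you correctly diagnosed lives elsewhere: $\{T_{s-1}=t\}$ is a last-exit event, and its component "no $l>t$ is accepting for block $s-1$" is computed from the windows $(t_{s-2}+1,\,l]$ with $l>t$, which overlap $(t+1,\,n]$ — the very data on which $R_t$ is defined — under \emph{any} interior convention; deleting one boundary observation cannot remove this overlap, and by your own (correct) positive-association remark a plain product bound is then unavailable. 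So the factorization $\mathbb{P}(T_{s-1}=t,\,R_t)\le\alpha\,\mathbb{P}(T_{s-1}=t)$ remains unproven in your write-up. The paper's own proof does not resolve this via the interior trick either: in \eqref{ieq:cond tail} it simply treats the realized value $t_{s-1}$ as a fixed starting point for a fresh i.i.d.\ block, i.e.\ it asserts exactly the renewal-type decoupling you flagged as the hard part. In short: your reduction, calibration bound, inclusion $\{T_s<n\}\subseteq\{T_{s-1}<n\}\cap R_{T_{s-1}}$, and tail-sum step all match the paper, but the proposal contains a genuine gap at its central step, and the interior-adjustment rationale you give would not bridge it.
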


\begin{proof}
Let $s\geq 1$, $n \geq 1$, $1 \leq t_1<t_2< \cdots < t_{s-1}$. It follows from the definitions of $q_{\alpha}(m)$ in \eqref{e: q_m} and $T_{s}$ for $s\geq 1$ that 
\begin{align}
    \label{ieq:cond tail}
    &\mathbb{P}[T_s-T_{s-1}\geq n \mid T_{s-1} = t_{s-1},\dots, T_1 = t_1]\nonumber\\
    = \, &\mathbb{P}\left[\max_{t_{s-1}<i<j\leq l}B_{l-k,i,j}\le q_{\alpha}(l-t_{s-1}),\; \text{ for some } l \geq n+t_{s-1}\right] \geq 1-\alpha.
\end{align}
Since
\(
    \mathbb{P}(S\geq s) = \mathbb{P}(T_s<n) = \mathbb{P}\left(\sum_{i=1}^s T_s<n\right) 
    \leq \mathbb{P}(T_{i}-T_{i-1} <n \text{ for all } i= 1,\dots, s),
\)
we show by introduction over $s$ that for any $n\geq 1$, $\mathbb{P}(T_{i}-T_{i-1} <n \text{ for all } i= 1,\dots, s)\leq \alpha^s$.
For $s=1$, by definitions of $q_{\alpha}(m)$ in \eqref{e: q_m} we have $\mathbb{P}(T_1-T_0 <n) = \mathbb{P}(T_1<n) = 1- \mathbb{P}(T_1\geq n) \leq 1-(1-\alpha)= \alpha$.
Now assume that the statement holds for $s-1$, i.e., $\mathbb{P}(T_{i}-T_{i-1} <n \text{ for all } i= 1,\dots, s-1)\leq \alpha^{s-1}$. Then
\begin{align*}
    &\mathbb{P}(T_{i}-T_{i-1} <n \text{ for all } i= 1,\dots, s)\\
    =\,&\sum_{\substack{t_{i}-t_{i-1}<n\\ i = 1,\dots,s-1}}\mathbb{P}\Bigl\{T_s-T_{s-1}<n,T_{s-1}-T_{s-2}=t_{s-1}-t_{s-2},\dots,T_{2}-T_{1}= t_{2}-t_{1},T_{1}=t_1\Bigr\}\\
    =\,&\sum_{\substack{t_{i}-t_{i-1}<n\\ i = 1,\dots,s-1}}\mathbb{P}\Bigl\{T_s-T_{s-1}<n\, \Big\vert \, T_{s-1}-T_{s-2}=t_{s-1}-t_{s-2},\dots, T_{1}=t_1\Bigr\}\\
    &\mbox{} \qquad\qquad\qquad\qquad\times \mathbb{P}\Bigl\{T_{s-1}-T_{s-2}=t_{s-1}-t_{s-2},\dots,\, T_{1}=t_1\Bigr\}\\
    \leq\, & \sum_{\substack{t_{i}-t_{i-1}<n\\ i = 1,\dots,s-1}} \alpha \cdot \mathbb{P}\Bigl\{T_{s-1}-T_{s-2}=t_{s-1}-t_{s-2}, \dots, T_{1}=t_1\Bigr\}\\
    =\, & \alpha \cdot \mathbb{P}(T_{i}-T_{i-1} <n \text{ for all } i= 1,\dots, s-1) \leq \alpha^{s},
\end{align*}
where the first inequality follows from \eqref{ieq:cond tail} and the last inequality is implied by induction assumption. Therefore, we obtain $\mathbb{P}(T_{i}-T_{i-1} <n \text{ for all } i= 1,\dots, s)\leq \alpha^s$ for all $s\geq 1$, and hence $P(S\geq s)\leq \alpha^s$. In particular,
\begin{equation*}
    \mathbb{E}(S)= \sum_{s=1}^{\infty}\mathbb{P}(S\geq s) \leq \sum_{s=1}^{\infty} \alpha^s \leq \frac{\alpha}{1-\alpha},
\end{equation*}
which concludes the proof.
\end{proof}
\begin{proposition}
\label[proposition]{prop: over estimation of zero}    
Let $\beta \in (0,1)$ and suppose that $Z_1,\dots,Z_n$ have the common $\beta$-quantile, i.e., ${\mathbb{P}(Z_i \leq f(x_i))=\beta}$ with ${f} \equiv \theta$ for some $\theta \in \mathbb{R}$. Then, for any $k\geq 1$, the estimated number of change point $\widehat{K}$ by MUSCLE satisfies
\[
    \mathbb{P}\left(\widehat{K}\geq k\right) \leq \alpha^k\quad
 \text{and} \quad  \mathbb{E}\left(\widehat{K}\right) \leq \frac{\alpha}{1-\alpha}.
\]
\end{proposition}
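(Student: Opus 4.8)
The plan is to deduce both bounds from the tail estimate for $S$ in \cref{p: tail bound}, by exhibiting a single feasible MUSCLE candidate whose number of change points equals $S$, thereby forcing $\widehat{K}\le S$. Under the hypothesis of the proposition the transformed variables $W_i = \mathbbm{1}_{\{Z_i \le \theta\}}$ are i.i.d.\ Bernoulli$(\beta)$, as noted at the start of this appendix, so the random times $T_s$ and the count $S$ of \eqref{e: S} built from these $W_i$ are exactly the objects of \cref{p: tail bound}. It therefore remains only to relate the greedy construction behind $T_s$ to the feasible set $\mathcal{C}_k$ of \eqref{e: C_k}.

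First I would make the correspondence between $B_{m,i,j}$ and the multiscale statistic of \eqref{e: T_I} explicit. Since $(j-i)\,A_{j-i}\bigl(\sum_{k=i+1}^{j} W_k\bigr) = L_J(Z,\theta)$ for the subinterval $J$ indexed by $\{i+1,\dots,j\}$, the inner maximum $\max_{T_{s-1} < i < j \le l} B_{l-T_{s-1},i,j}$ is precisely the MUSCLE scan statistic on the segment $I$ spanned by the indices $T_{s-1}+1,\dots,l$: the constraint $T_{s-1} < i$ forces every admissible subinterval to omit the leftmost index $T_{s-1}+1$, i.e.\ to lie in the interior $\mathring{I}$, matching the interior convention of \eqref{e: C_k}. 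Hence the defining inequality of $T_s$ coincides (under the scale conventions of \eqref{e: C_k}) with the interior side-constraint $T_{\mathring{I}}(Z,\theta) \le q_{\alpha}(|\mathring{I}|)$ for the constant value $\theta$.

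Next I would verify that the greedy times yield a genuine feasible partition. Because $l$ ranges only up to $n$, the chain $T_0 = 0 < T_1 < \cdots$ satisfies $T_S < n$ and $T_{S+1} = n$, so the half-open segments attached to $(T_0,T_1], (T_1,T_2], \dots, (T_S, T_{S+1}]$ tile the index set $\{1,\dots,n\}$ into $S+1$ pieces with exactly $S$ change points. By the definition of each $T_s$ as the \emph{largest} feasible right endpoint, every such piece satisfies its interior multiscale constraint with value $\theta$, so the piecewise-constant function equal to $\theta$ on each piece belongs to $\mathcal{C}_S$. Since $\widehat{K}$ in \eqref{e: hat_K} minimizes the number of change points over all feasible candidates — and permitting a free level on each segment only enlarges the feasible set — we obtain $\widehat{K} \le S$ almost surely. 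The conclusion is then immediate: $\{\widehat{K} \ge k\} \subseteq \{S \ge k\}$ gives $\mathbb{P}(\widehat{K} \ge k) \le \mathbb{P}(S \ge k) \le \alpha^k$, and monotonicity of expectation gives $\mathbb{E}(\widehat{K}) \le \mathbb{E}(S) \le \alpha/(1-\alpha)$.

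The main obstacle is the off-by-one bookkeeping in the second step: one must carefully align the sample-point convention $x_i = (i-1)/n$ and the ambient scale used in the penalty of \eqref{e: T_I} with the index ranges $T_{s-1} < i < j \le l$ appearing in $B_{m,i,j}$, and confirm that the greedy pieces partition $\{1,\dots,n\}$ without gaps or overlaps so that $S$ counts precisely the change points of the constructed candidate. Everything else is a direct transcription of \cref{p: tail bound}.
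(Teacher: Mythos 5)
Your overall strategy is the paper's own: transform to i.i.d.\ Bernoulli indicators, exhibit a greedily constructed feasible candidate, invoke the minimality of $\widehat{K}$ in \eqref{e: hat_K}, and finish with \cref{p: tail bound}. There is, however, one concrete gap, located exactly where you flagged ``the main obstacle''. You assert $T_{S+1}=n$ and that every greedy piece, including the last, satisfies its interior constraint. But in \cref{p: tail bound} the times $T_s$ are built from an \emph{infinite} i.i.d.\ sequence $W_1,W_2,\dots$, so \eqref{e: S} only guarantees $T_{S+1}\geq n$. If $T_{S+1}>n$, your final piece must be truncated to the indices $T_S+1,\dots,n$, and feasibility of the constant $\theta$ there, at ambient size $n-T_S$, does \emph{not} follow from feasibility at $l=T_{S+1}$: the feasible set of right endpoints $l$ is not downward closed, since shrinking $l$ removes candidate subintervals but also shrinks every scale penalty $\sqrt{2\log\bigl(e(l-T_S)/\abs{J}\bigr)}$ (making each local statistic larger) and changes the threshold $q_{\alpha}(l-T_S)$, which by \eqref{e: q_m} has no monotonicity in the ambient size. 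Hence your candidate need not lie in $\mathcal{C}_S$, and the claimed almost-sure inequality $\widehat{K}\leq S$ is unproven as written. The paper sidesteps this by defining the greedy times on the data itself, $T_k=\max\{j\leq n:\cdots\}$: then the terminal stopping time equals $n$ and, being attained \emph{inside} the feasible set, the last segment automatically satisfies its constraint; the chaining argument of \cref{p: tail bound} applies verbatim to these capped times because, conditionally on the past, the constraint on the specific terminal segment holds with probability at least $1-\alpha$ by the definition of $q_{\alpha}$. The repair is routine, but it must be carried out rather than asserted.

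A second, non-fatal difference: you place the \emph{same} value $\theta$ on every piece and rely on \eqref{e: C_k} not requiring adjacent levels to differ; that does suffice to conclude $\mathcal{C}_S\neq\emptyset$, hence $\widehat{K}\leq S$, for this proposition (the paper itself reads \eqref{e: C_k} this way when noting $\sum_i Z_i\mathbbm{1}_{[x_i,x_{i+1})}\in\mathcal{C}_{n-1}$). The paper instead perturbs the levels to the pairwise distinct values $\tilde{\theta}/(k+1)$, chosen so that the indicators $\mathbbm{1}_{\{Z_i\leq\tilde{\theta}/(k+1)\}}$ coincide with $\mathbbm{1}_{\{Z_i\leq 0\}}$ and all test statistics are unchanged. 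That perturbation is unnecessary here, but it guarantees the candidate has exactly $\widetilde{K}$ genuine change points, which is what makes the identical construction reusable in \cref{l: FD 0}, where the candidate's breaks must count as false discoveries; your shortcut would not serve there.
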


\begin{proof}
With loss of generality, we assume that ${f} = 0$. Suppose that there exists an estimator $\widetilde{f} = \sum_{k=0}^{\widetilde{K}}\theta_k I_k$ with $\theta_{k-1}\neq \theta_{k}$ for $k = 1,\dots,K$, which has $\widetilde{K}$ change points and satisfies the multiscale side-constrains of $C_{\tilde{K}}$ in \eqref{e: C_k}. Then $\widetilde{f}$ has exactly $\widetilde{K}$ false discoveries. Since the MUSCLE minimizes the number of estimated change points fulfilling the multiscale side-constrains, we have $\widehat{K} \leq \widetilde{K}$. We now construct such an estimator $\widetilde{f}$ as follows.\\
Set $T_0 = 0$. We fit the observations $\{Z_i : i\geq 1 \}$ using the constant value $0$ up to $T_1$, defined as the largest index $j$ such that $0$ satisfies the multiscale constraint on the interval $[0, j/n)$. We then fit the subsequent observations $\{Z_{T_1 + i} : i \geq 1\}$ using $0$ until $T_2$ , the largest index $j$ such that $0$ satisfies the multiscale constraint on $[T_1/n, j/n)$. Repeating this procedure until all observations are fitted yields a series of segments. For ensuring the resulting estimator has exactly $\widetilde{K}$ change points, we slightly perturb the segment values so that the multiscale constraints remain satisfied on each segment. More precisely,
Let $W_i = \mathbbm{1}_{\{Z_i\leq 0\}}$ for $i=1,\dots,n$ and 
\begin{equation*}
    \tilde{\theta} =
    \begin{cases}
    \min\{Z_i  :   Z_i>0, \, i=1,\dots,n\}/2,&\quad \text{if there is a } \,Z_i >0,\\
    1, &\quad \text{otherwise.}
    \end{cases}
\end{equation*}
Then $\tilde{\theta}>0$. Moreover, by the definition of $\tilde{\theta}$, for any $l\in \mathbb{N}$ the events $\{Z_i\leq 0 \}$ and $\{Z_i \leq \tilde{\theta}/l\}$ are equivalent. 
Define the stopping time 
\(
    T_1 = \max \left\{j : T_{[0/n,j/n)}(W,0) \leq q_{\alpha}(j) \right\}. 
\)
Set $I_0 = [0/n, T_1/n)$. Then by definition of $T_1$ we have $T_{\mathring{I_0}}(W,0) \leq q_{\alpha}(T_1)$. We defined the modification $\widetilde{W}_i = \mathbbm{1}_{\{Z_i\leq \tilde{\theta}\}}$ for all $i/n\in I_0$, then $W_i = \widetilde{W}_i$.
Thus,
\(
    T_{\mathring{I_0}}(\widetilde{W},\tilde{\theta}) = T_{\mathring{I_0}}(W,0) \leq q_{\alpha}(T_1),
\)
which indicates that $\tilde{\theta}$ fulfils the multiscale side-constrain on $\mathring{I_0}$ and therefore, we can estimate ${f}|_{I_0}$ with $\tilde{\theta}$.
We continue this procedure until obtain an estimator for whole ${f}$. More precisely, for any $k\in \mathbb{N}$, let 
\begin{equation*}
    T_k = \max \left\{j : T_{[T_{k-1}/n,j/n)}(W,0) \leq q_{\alpha}(j-T_{k-1}) \right\}, 
\end{equation*}
and $I_{k-1} =[T_{k-1}/n,T_{k}/n)$. The restriction ${f}|_{I_{k-1}}$ can be estimated by $\tilde{\theta}/k$. Finally, let 
\(
    \widetilde{K} = \max\{k : T_k < n\},
\)
and estimation is given by
\(
    \widetilde{f} = \sum_{k=0}^{\tilde{K}}\frac{\tilde{\theta}}{k+1} \mathbbm{1}_{I_k}.
\)
Since $\tilde{f}$ is piecewise different, it has exactly $\widetilde{K}$ change points. Thus, applying \Cref{p: tail bound}, we have
\(
\mathbb{P}\left(\widehat{K}\geq k\right) \leq \mathbb{P}\left(\widetilde{K}\geq k\right) \leq \alpha^k, 
\)
and 
\(
\mathbb{E}\left(\widehat{K}\right) \leq \mathbb{E}\left(\widetilde{K}\right)\leq \frac{\alpha}{1-\alpha}.
\)
\end{proof}

\subsection{Proofs of Theorems \ref{t: under}--\ref{t: Hausdorff}}
\begin{lemma}\label[lemma]{l: ub of entropy}
For any $p,q\in(0,1)$,
\begin{equation*}
    2(p-q)^2 \leq p\log\frac{p}{q}+(1-p)\log\frac{1-p}{1-q}\leq \frac{(p-q)^2}{q(1-q)}.
\end{equation*}
\begin{proof}
The first inequality follows from Pinsker's inequality \citep[Lemma 2.5]{tsybakov2009introduction} and \citet[Theorem A.3]{jula2022multiscale}. For the second one, by  $x(x-1)\geq x\log(x)$ for $x>0$, we have 
\begin{equation*}
            \frac{p}{q}\log\frac{p}{q} \leq \frac{p}{q}\left(\frac{p}{q}-1\right) \quad\text{and}\quad
        \frac{1-p}{1-q}\log\frac{1-p}{1-q} \leq \frac{1-p}{1-q}\left(\frac{1-p}{1-q}-1\right).
\end{equation*}
Summing them up gives us the second inequality.
\end{proof}
\end{lemma}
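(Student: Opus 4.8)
The plan is to read the middle expression as the Kullback--Leibler divergence $D(p,q) \coloneqq p\log\frac{p}{q} + (1-p)\log\frac{1-p}{1-q}$ between the Bernoulli laws $\mathrm{Bern}(p)$ and $\mathrm{Bern}(q)$, and to treat each of the two inequalities as a one-dimensional comparison of $D(p,q)$ with a quadratic in $p-q$, with $q$ held fixed. Both then become elementary calculus facts, and the whole lemma splits into two short and independent arguments.

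For the lower bound $2(p-q)^2 \le D(p,q)$ I would invoke Pinsker's inequality: the total variation distance between $\mathrm{Bern}(p)$ and $\mathrm{Bern}(q)$ equals $\abs{p-q}$, so $D(p,q) \ge 2\abs{p-q}^2$. A self-contained alternative is to set $\varphi(p) \coloneqq D(p,q) - 2(p-q)^2$ and verify $\varphi(q)=0$, $\varphi'(q)=0$, and $\varphi''(p) = \frac{1}{p(1-p)} - 4 \ge 0$ since $p(1-p)\le \tfrac14$; convexity then forces $\varphi \ge \varphi(q) = 0$.

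For the upper bound I would use the tangent inequality $\log t \le t-1$ for $t>0$. Writing $p\log\frac pq = q\cdot\frac pq\log\frac pq$ and $(1-p)\log\frac{1-p}{1-q} = (1-q)\cdot\frac{1-p}{1-q}\log\frac{1-p}{1-q}$, and applying $\log t\le t-1$ to $t=p/q$ and $t=(1-p)/(1-q)$, gives $p\log\frac pq \le \frac{p(p-q)}{q}$ and $(1-p)\log\frac{1-p}{1-q}\le \frac{(1-p)(q-p)}{1-q}$; summing and simplifying collapses the linear terms into $(p-q)\big(\frac pq - \frac{1-p}{1-q}\big) = \frac{(p-q)^2}{q(1-q)}$. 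Equivalently, one may quote the standard inequality $D(p,q)\le \chi^2(\mathrm{Bern}(p)\,\|\,\mathrm{Bern}(q))$ and compute the $\chi^2$-divergence directly as $\frac{(p-q)^2}{q}+\frac{(p-q)^2}{1-q} = \frac{(p-q)^2}{q(1-q)}$.

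Neither direction poses a real obstacle; the only point that needs care is the weighting in the tangent-bound route. Applying $\log t\le t-1$ naively to $p\log\frac pq$ (instead of to $\frac pq\log\frac pq$ and then multiplying by $q$) fails to telescope and yields only a looser bound, so the multiplicative split by $q$ and $1-q$ is exactly what makes the linear cross-terms cancel and leaves the clean numerator $(p-q)^2$.
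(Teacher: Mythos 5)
Your proposal is correct and follows essentially the same route as the paper: the lower bound is Pinsker's inequality for the Bernoulli pair (exactly what the paper cites), and your tangent bound $\log t \le t-1$ applied to $t = p/q$ and $t = (1-p)/(1-q)$ is identical to the paper's use of $x\log x \le x(x-1)$, with your multiplicative split by $q$ and $1-q$ simply making explicit the weighting that the paper's terse ``summing them up'' leaves implicit. The $\chi^2$-divergence reformulation and the second-derivative alternative for the lower bound are nice self-contained variants but do not change the substance of the argument.
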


\begin{lemma}\label{l: ub q_n}
Let $\beta \in (0,1)$ be fixed. There exists a constant $C$ such that for all $\alpha \in (0,1)$,
\begin{equation*}
     \underset{\abs{I}\geq 1}{\sup} \,q_{\alpha}(\abs{I}) \leq C+ 2\sqrt{2\log \left(\frac{2}{\alpha}\right)},
\end{equation*}
where $q_{\alpha}(\abs{I})$ is given in \eqref{e: q_m}.
\end{lemma}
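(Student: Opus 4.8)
The plan is to establish a uniform-in-$\abs{I}$ sub-Gaussian tail bound for $T_I$ under the null law (where, as noted after \eqref{e: q_m}, the transformed $W_i$ are i.i.d.\ Bernoulli with mean $\beta$) and then invert it. Concretely, I aim to show that there are universal constants $C_1 \ge 1$ and $t_0$ with $\mathbb{P}(T_I(Z,\theta) > t \mid \cap_{J\subseteq I}H_J) \le C_1 e^{-t^2/8}$ for all $t \ge t_0$ and all $\abs{I}\ge 1$. Granting this, definition \eqref{e: q_m} forces $q_\alpha(\abs{I}) \le t$ whenever $C_1 e^{-t^2/8}\le \alpha$, i.e.\ whenever $t \ge \sqrt{8\log(C_1/\alpha)} = 2\sqrt{2\log(C_1/\alpha)}$. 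Using $\sqrt{a+b}\le\sqrt a+\sqrt b$ and $\log(1/\alpha)\le\log(2/\alpha)$ gives $2\sqrt{2\log(C_1/\alpha)} \le 2\sqrt{2\log C_1} + 2\sqrt{2\log(2/\alpha)}$, so with $C = t_0 + 2\sqrt{2\log C_1}$ the bound $q_\alpha(\abs{I})\le C + 2\sqrt{2\log(2/\alpha)}$ follows uniformly in $\abs{I}$. The leading coefficient $2$ (equivalently the exponent $1/8$ rather than the ideal $1/2$) is exactly the signature of a single factor-of-two split performed below, so this target exponent is what the argument should aim for.

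The building block is a per-interval Chernoff bound. Writing $L_J = \abs{J}\, g_\beta(\overline{W}_J)$ with $g_\beta$ the Bernoulli Kullback--Leibler divergence to $\beta$, the convexity of $g_\beta$ (minimized at $\beta$) together with Cram\'er's bound for the binomial sum $\sum_{i\in J}W_i$ gives $\mathbb{P}(L_J \ge u) \le 2e^{-u}$, hence $\mathbb{P}(\sqrt{2L_J} > s)\le 2e^{-s^2/2}$ for every fixed $J$. A naive union bound over all $J\subseteq I$ of a fixed scale $m=\abs{J}$ fails, however: there are up to $\abs{I}$ such intervals, whereas the penalty $\sqrt{2\log(e\abs{I}/m)}$ only "pays" for the factor $m/(e\abs{I})$, so the scalewise contribution behaves like $m/e$ and the sum over scales diverges like $\abs{I}^2$. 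This over-counting is the genuine obstacle, reflecting the strong correlation among overlapping length-$m$ intervals.

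I would resolve it by the standard sparse-grid-plus-oscillation decomposition (in the spirit of \citealp{dumbgen2001multiscale,Walther2022Calibrating}). For each dyadic scale band I select a grid $\mathcal{G}$ containing only $O(\abs{I}/m)$ equally spaced intervals of length $\asymp m$; for this reduced family the union bound becomes summable over scales precisely because the penalty now matches the cardinality, yielding $\mathbb{P}(\max_{J\in\mathcal G}[\sqrt{2L_J}-\sqrt{2\log(e\abs{I}/\abs{J})}] > t/2) \le C_2 e^{-t^2/8}$ uniformly in $\abs{I}$ (the smallest scales, where the penalty dominates the bounded $\sqrt{2L_J}$, are simply discarded). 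Separately I control the oscillation $\sqrt{2L_J}-\sqrt{2L_{J'}}$ between a general $J$ and its nearest grid interval $J'$ through a Bernstein-type increment bound for the centered partial-sum process of the $W_i$ over the short symmetric difference $J\triangle J'$, arranged so that $\mathbb{P}(\text{oscillation} > t/2) \le C_3 e^{-t^2/8}$, again uniformly. Combining the two events via $\mathbb{P}(A+B>t)\le\mathbb{P}(A>t/2)+\mathbb{P}(B>t/2)$ produces the target tail with $C_1 = C_2+C_3$ and the factor-of-two (hence exponent $1/8$) loss. The main difficulty is this oscillation step: the grid spacing and band-width must be chosen so that the increment variance is small enough to beat the $t/2$ deviation at every scale simultaneously while keeping the number of comparisons controlled; once the per-interval Chernoff bound and the grid-cardinality bookkeeping are in place, the rest is routine.
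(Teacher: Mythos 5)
Your overall strategy --- establish a tail bound $\mathbb{P}(T_I>t\mid \cap_J H_J)\le C_1 e^{-t^2/8}$ for $t\ge t_0$ uniformly in $\abs{I}$, then invert \eqref{e: q_m} --- is viable, and several of your steps are correct: the per-interval Chernoff bound $\mathbb{P}(L_J\ge u)\le 2e^{-u}$, the diagnosis that the naive union bound diverges like $\abs{I}^2$, the remark that scales $\abs{J}\lesssim \log\abs{I}$ are killed deterministically by the penalty (since $L_J\le c_\beta\abs{J}$ for Bernoulli data), the grid half of the decomposition (the cross term $e^{-(t/2)\sqrt{2\log(e\abs{I}/m)}}$ does make the sum over dyadic bands uniformly convergent for $t\ge t_0$), and the final inversion arithmetic. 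The genuine gap is the oscillation step as you describe it. With a single grid per band of fixed relative spacing $\epsilon$, each oscillation has standard deviation of order $\sqrt{\epsilon}$, but you must take the maximum over roughly $\abs{I}\,m$ intervals in the band, so the maximal oscillation is typically of size $\sqrt{\epsilon\log\abs{I}}$: it is \emph{not} uniformly tight in $\abs{I}$, and no fixed budget $t/2$ with fixed $t_0$ can beat it. Making $\epsilon$ scale- or $\abs{I}$-dependent (say $\epsilon_m\asymp 1/\log(e\abs{I}/m)$) inflates the grid cardinality by logarithmic factors that the cross term cannot absorb at the coarsest scales, so a one-step approximation fails there too; one genuinely needs a multi-level chaining argument in the style of \citet[Theorem~7.1]{dumbgen2001multiscale}, with deviation allocations $t\eta_k$ decaying geometrically against refinement levels. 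On top of that, chaining directly on $\sqrt{2L_J}$ is delicate because $p\mapsto\sqrt{2g_\beta(p)}$ is not Lipschitz at $p\in\{0,1\}$ and the Pinsker/$\chi^2$ sandwich constants (compare Lemma~\ref{l: ub of entropy}) do not match, while passing fully to the linearized process degrades the sub-Gaussian constant by a factor $\beta(1-\beta)$. Since the stated bound $C+2\sqrt{2\log(2/\alpha)}$ hinges on achieving the exponent $1/8$ exactly, this unfinished bookkeeping is load-bearing, not ``routine.''

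For context, the paper's proof avoids chaining entirely. It sandwiches $T_I$ between two linearized maxima $R_I\le T_I\le S_I$ via Lemma~\ref{l: ub of entropy}, exploits the binarity of the data to show that $T_I$ is $\sqrt{2}$-Lipschitz with respect to a weighted Hamming distance, and then applies Talagrand's convex distance inequality (via \citealp[Corollary~3.2.5]{gine2021mathematical}) to obtain concentration around the median, $\mathbb{P}(T_I-M_I\ge t)\le 2e^{-t^2/8}$; the uniform boundedness $\sup_{\abs{I}}M_I<\infty$ is then imported from known weak-convergence results for multiscale statistics \citep{dumbgen2001multiscale,dumbgen2006limit,frick2014multiscale}, and choosing $t=2\sqrt{2\log(2/\alpha)}$ finishes. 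This explains where the exponent $1/8$ really comes from (Talagrand with Lipschitz constant $\sqrt{2}$, not a factor-of-two split), and it trades your self-contained but heavy entropy computation for a citation of tightness. If you wish to salvage your route, note that completing the chaining would in effect reprove exactly that tightness; the efficient repair is instead to prove (or cite) only that the medians $M_I$ are uniformly bounded and let a ready-made concentration inequality do the rest, which is precisely the paper's argument.
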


\begin{proof}
Recall from \eqref{e: T_I} that 
\begin{equation*}
T_{I}(Z,\theta)\; = \; \underset{J\subseteq {I}}{\max}\sqrt{2L_J (Z,\theta)} - \sqrt{2\log\frac{e\abs{I} }{\abs{J}}},
\end{equation*}
with 
\(
    L_J(Z,\theta) =  \abs{J}\bigl[\overline{W}_J \log (\frac{\overline{W}_J}{\beta})+(1-\overline{W}_J)\log (\frac{1-\overline{W}_J}{1-\beta})\bigr],
\)
$\overline{W}_J = \left(\sum_{i\in J}W_i\right)/\abs{J}$ and $W_i = \mathbbm{1}_{\{Z_i\leq \theta \}}\overset{i.i.d.}{\sim} {\rm Ber}(\beta)$ under $H_I$. Since $T_I(Z,\theta)$ is determined by $W$, in this proof we use $T_I(W)$ instead.
For any $J\subseteq I$, define $a_J\in \mathbb{R}^{\abs{I}}$ by
\begin{equation*}
    a_{J,k} = 
    \begin{cases}
    \frac{1}{\sqrt{\abs{J}}}, &\quad \text{if } k\in J,\\
    0,&\quad \text{otherwise},
    \end{cases}
\end{equation*}
where $a_{J,k}$ denotes the $k$-th component of $a_{J}$. Set $A=\{a_{J}:J\subseteq I\}$ and $\lambda_a = \sqrt{2\log en \norm{a}_{\infty}^2}$. We further define $R_I(W)$ and  $S_I(W)$ by
\begin{equation*}
    R_I(W) = \underset{a\in A}{\max}\sqrt{2}\abs{a^{\top}(W-\beta)} - \lambda_a \quad\text{and}\quad
    S_I(W) = \underset{a\in A}{\max}\sqrt{\frac{2}{\beta(1-\beta)}}\abs{a^{\top}(W-\beta)} - \lambda_a.
\end{equation*}
Then, by Lemma \ref{l: ub of entropy}, we have 
\(
    R_I\leq T_I\leq S_I.
\)
For any $x,y\in \{0,1\}^{\abs{I}}$ and \emph{weight} $a\in \mathbbm{R}_{\geq 0}^{\abs{I}}$, the weighted Hamming distance $d_a$ is given by
\(
    d_a(x,y) = \sum_{i=1}^{\abs{I}} a_i \mathbbm{1}_{\{x_i \neq y_i\}}.
\)
We show that $T_I$ has the following Lipschitz property for the distance $d_a$ with Lipschitz constant $L = \sqrt{2}$: For any $x\in \{0,1\}^{\abs{I}}$, there exists a $a=a(x)\in \mathbbm{R}_{\geq 0}^{\abs{I}}$ with Euclidean norm $\norm{a}=1$ such that 
\(
    T_I(x)-T_I(y)\leq L d_a(x,y), \text{ for all } y\in\{0,1\}^{\abs{I}}.
\)
For any $x,y\in \{0,1\}^{\abs{I}}$, let $\tilde{a} = \tilde{a}(x)$ denote the maximizer of $S_I(x)$. Then
\begin{align*}
    &T_I(x)-T_I(y) 
    \leq S_I(x)-R_I(y) \\
    =\;&  \underset{a\in A}{\max}\left(\sqrt{\frac{2}{\beta(1-\beta)}}\,\abs{a^{\top}(x-\beta)}-\lambda_a\right) - \underset{a\in A}{\max}\left(\sqrt{2}\,\abs{a^{\top}(y-\beta)}-\lambda_a\right) \\
    =\;&  \sqrt{\frac{2}{\beta(1-\beta)}}\,\abs{\tilde{a}^{\top}(x-\beta)}-\lambda_{\tilde{a}}-\underset{a\in A}{\max}\left(\sqrt{2}\,\abs{\tilde{a}^{\top}(y-\beta)}-\lambda_a\right) \\
    \leq\;&  \sqrt{2}\,\abs{\tilde{a}^{\top}(x-\beta)}-\lambda_{\tilde{a}}-\sqrt{2}\,\abs{\tilde{a}^{\top}(y-\beta)}+\lambda_{\tilde{a}} \\
    =\;&  \sqrt{2}\,\abs{\tilde{a}^{\top}(x-\beta)}-\sqrt{2}\,\abs{\tilde{a}^{\top}(y-\beta)} \\
    \leq \; &  \sqrt{2}\, \abs{\tilde{a}^{\top}(x-y)} = \sqrt{2}\,\sum_{i=1}^{\abs{I}} \tilde{a}_i \mathbbm{1}_{\{x_i \neq y_i\}}= \sqrt{2}\,d_{\tilde{a}}(x,y),
\end{align*}
where the second last equality follows since $x,y\in \{0,1\}^{\abs{I}}$. Let $M_I$ denote the median of $T_I$. By \citet{frick2014multiscale}, \citet{dumbgen2001multiscale} and \citet{dumbgen2006limit}, the random sequence $T_I$ converges in distribution to an almost surely finite random variable $M$ as $\abs{I}\to \infty$. Thus, median of $M$ is finite and $C\coloneqq \sup_{\abs{I}\in \mathbb{N}}M_I < \infty$. Applying \citet[Corollary 3.2.5]{gine2021mathematical}, we have 
\(
    \mathbb{P}(T_I-M_I \geq t) \leq 2 \exp{\left(-t^2/8\right)}    
\)
for any $t>0$. By choosing $t = 2\sqrt{2\log (2/\alpha)}$, we have 
\begin{equation*}
    q_{\alpha}(\abs{I}) \leq M_I + 2\sqrt{2 \log\frac{2}{\alpha}} \leq C + 2\sqrt{2 \log\frac{2}{\alpha}}
\end{equation*}
for any $\abs{I}\in \mathbb{N}$.
\end{proof}

\begin{proof}[Proof of Theorem \ref{t: under}]
Recall from \cref{S: 3.1} that 
\begin{equation*}
    I_k = \left(\tau_k-\tfrac{\lambda_k}{2},\tau_k +\tfrac{\lambda_k}{2}\right),
\end{equation*}
where $\lambda_k = \min\{\tau_k-\tau_{k-1},\tau_{k+1}-\tau_k\}$. By definition, these intervals are disjoint. Let $\theta_k^+=\max\{\theta_{k-1},\theta_k\}$ and $\theta_k^-=\min\{\theta_{k-1},\theta_k\}$ and spilt $I_k$ accordingly, i.e.,
\begin{equation*}
    I_k^+ = \{t\in I_k : f(t) = \theta_k^+\},\text{ and } I_k^- = \{t\in I_k : f(t) = \theta_k^-\}.
\end{equation*}
Assume that $\widehat{K}<K$ and let $\widehat{I}_0,\dots,\widehat{I}_{\widehat{K}}$ denote the estimated segments. Since $\widehat{K}<K$, there exists a interval $I_k$ such that $I_k$ is completely contained in some estimated segment, say $\widehat{I}_*$. This can be verified by contradiction: Suppose that there is no such interval. Then any interval $I_k$ contains at least one estimated change point. Since $I_k$'s are disjoint, then we have $\widehat{K}\geq K$, which contradicts to the underestimation assumption. Thus, there exists a $I_k$ is contained in some estimated segment $\widehat{I}_*$, i.e., no change point is detected on $I_k$.

We show that the probability that no change point can be detected on $I_k$ is sufficiently small. To this end, we consider the event that there exists a constant on $\hat{I}_*$ fulfilling the multiscale side-constrains on both $I_k^+$ and $I_k^-$. Define
\begin{equation*}
    C_k = \Bigg\{ \exists c\in \mathbb{R} : \sqrt{2L_{I_k^+}(Z,c)}-\sqrt{2\log\frac{e\abs{\hat{I}_*}}{\abs{I_k^+}}} \leq q_{\alpha}(\abs{\hat{I}_*}) \text{ and }
    \sqrt{2L_{I_k^-}(Z,c)}-\sqrt{2\log\frac{e\abs{\hat{I}_*}}{\abs{I_k^-}}} \leq q_{\alpha}(\abs{\hat{I}_*}) \Bigg\} 
\end{equation*}
Note that either $c\geq \theta_k^+ -\delta_k/2$ or $c\geq \theta_k^-+\delta_k/2$, we define 
\begin{align*}
    &C_k^+ = \Bigl\{ c : c \geq \theta_k^+ -{\delta_k}/{2} \;\text{ and }\;  \sqrt{2L_{I_k^+}(Z,c)}-\sqrt{2\log\frac{e\abs{\hat{I}_*}}{\abs{I_k^+}}} \leq q_{\alpha}(\abs{\hat{I}_*}) \Bigr\} \\
    \text{and }\;  &C_k^- = \Bigl\{ c : c \leq \theta_k^- +{\delta_k}/{2} \;\text{ and }\;  \sqrt{2L_{I_k^-}(Z,c)}-\sqrt{2\log\frac{e\abs{\hat{I}_*}}{\abs{I_k^-}}} \leq q_{\alpha}(\abs{\hat{I}_*}) \Bigr\}.
\end{align*}
Obviously, we have $C_k \subseteq C_k^+ \cup C_k^-$. Note that $C_k^+$ and $C_k^-$ are independent. We obtain 
\(
    \mathbb{P}(C_k) \leq 1-(1-\mathbb{P}(C_k^+))(1-\mathbb{P}(C_k^-)).
\)
We provide an upper bound of $\mathbb{P}(C_k^-)$, then the probability $\mathbb{P}(C_k^+)$ can be upper bounded in a similar way. For any $i\in I_k^-$, we have $F_i^{-1}(\beta)=\theta_k^-$. Further set $F^-_{\min,k} = \min_{i\in I_k^-} F_i$ and $\bar{F}_k^- = \left(\sum_{i\in I_k^-}F_i \right)/\abs{I_k^-}$. Let 
\(
    \zeta_{I_k^-} = \inf \bigl\{x\in \mathbb{R} : \sum_{i\in I_k^-}\mathbbm{1}_{\{Z_i \leq x\}}\geq \abs{I_s^-}\beta\bigr\}
\)
be the \emph{empirical quantile} of the observations on $I_k^-$. For all $c\geq \theta_k^- + \delta_k/2$, if $\zeta_{I_k^-}\leq \theta_k^- + \delta_k/2$, then 
\begin{equation}\label{e: beta}
    \beta \leq \overline{W}(Z,\zeta_{I_k^-})\leq \overline{W}\left(Z,\theta_k^-+\frac{\delta}{2}\right)\leq \overline{W}(Z,c),
\end{equation}
where $\overline{W}(Z,x) =\sum_{i\in I_k^-}\mathbbm{1}_{\{Z_i \leq x\}}/\abs{I_k^-}$.
The function 
\begin{equation*}
    g:x\mapsto \abs{I_k^-}\left(x\log\frac{x}{\beta}+(1-x)\log\frac{1-x}{1-\beta}\right)
\end{equation*}
is strictly convex with minimum $g(\beta) = 0$, hence $g$ is strictly increasing on $(\beta,1)$. Thus, by \eqref{e: beta} 
\(
    L_{I_k^-}(Z,c) = g\left(\overline{W}(Z,c)\right) \geq g\left(\overline{W}\left(Z,\theta_k^-+\frac{\delta_k}{2}\right)\right) = L_{I_k^-}\left(Z,\theta_k^-+\frac{\delta_k}{2}\right).
\)
Therefore, 
\begin{equation}\label{e:splitC}
\mathbb{P}(C_k^-)
    \leq  \mathbb{P}\bigl(C_k^- \cap \bigl\{\zeta_{I_k^-}\leq \theta_k^- + \frac{\delta_k}{2}\bigr\}\bigr) \\
    + \mathbb{P}\bigl(\zeta_{I_k^-}> \theta_k^- + \frac{\delta_k}{2}\bigr). 
    \end{equation}
By \citet[Theorem A.2]{jula2022multiscale}, the first term is bounded as
\begin{align*}
 \mathbb{P}\left(C_k^- \cap \left\{\zeta_{I_k^-}\leq \theta_k^- + \frac{\delta_k}{2}\right\}\right) 
    & \leq \mathbb{P}\left(\sqrt{2L_{I_k^-}\left(Z,\theta_k^-+\frac{\delta_k}{2}\right)}-\sqrt{2\log\frac{e\abs{\hat{I}_*}}{\abs{I_k^-}}} \leq q_{\alpha}(\abs{\hat{I}_*})\right) \\
   & \leq \mathbb{P}\left(L_{I_k^-}\left(Z,\theta_k^-+\frac{\delta_k}{2}\right)\leq \frac{1}{2}\left(q_{\alpha}(\abs{\hat{I}_*})+\sqrt{2\log\frac{e\abs{\hat{I}_*}}{\abs{I_k^-}}}\right)^2\right) \\
   & \leq  2\exp \left(-\frac{1}{2}\left(\sqrt{2n\lambda_k}\xi_{\bar{F}^-_{\beta}}(\frac{\delta_k}{2})-q_{\alpha}(\abs{\hat{I}_*})-\sqrt{2\log\frac{e\abs{\hat{I}_*}}{\abs{I_k^-}}}\right)^2_+\right).
\end{align*}
The second term in \eqref{e:splitC} is bounded as (via \citealp[Theorem A.3]{jula2022multiscale})
\begin{align*}
  \mathbb{P}\left(\zeta_{I_k^-}> \theta_k^- + \frac{\delta_k}{2}\right)
    \leq 2\exp \left(-n\lambda_k \xi_{F^-_{\min,\beta}}\left(\frac{\delta_k}{2}\right)^2\right).
\end{align*}
Note that the facts $\xi_k \leq \xi_{F^-_{\min,k},\beta} (\delta_k/2)\leq \xi_{\bar{F}_k^-,{\beta}}(\delta_k/2)$, $q_{\alpha}(\abs{\hat{I}_*})\leq \max_{m\leq n}q_{\alpha}(m)$ and $\abs{\hat{I}_*}\leq n$. Thus, we have
\(
    \mathbb{P}(C_k) \leq 1-(1-\mathbb{P}(C_k^+))(1-\mathbb{P}(C_k^-)) \leq 1 -\gamma_{n,k}^2,
\)
where 
\begin{equation*}
    \gamma_{n,k} = 1-2\exp{\left(-n\lambda_k\xi_k^2\right)} 
    -2\exp{\left(-\left( \sqrt{n\lambda_k}\xi_k-\frac{1}{\sqrt{2}}\max_{m\leq n}q_{\alpha}(m)-\sqrt{\log\frac{2e}{\lambda_k}}\right)_+^2 \right) }. 
\end{equation*}
If all of these $K$ change-points are detected, then $\widehat{K}\geq K$,
\begin{equation*}
    \mathbb{P}(\widehat{K}\geq K) \geq \mathbb{P}\left(\cap_{k=1}^K C_k^c\right) = \prod_{k=1}^{K}(1-\mathbb{P}(C_k))\geq \prod_{k=1}^{K} \gamma_{n,k}^2,
\end{equation*}
which proves the first claim. 
By the definitions of $\Lambda$ and $\Omega$ in \eqref{e: LTO}, we have for all $k=1,\dots,K$,
\begin{equation*}
    \gamma_{n,k}^2 \geq \Bigg( 1 -2\exp\left(-\left(\sqrt{n}\Omega -\frac{1}{\sqrt{2}}\underset{m\leq n}{\max}q_{\alpha}(m)-\sqrt{\log\frac{2e}{\Lambda}}\right)_+^2\right)
     -2\exp \left(-n\Omega^2\right) \bigg)^2 \eqqcolon \gamma_n.
\end{equation*}
Thus, applying Bernoulli's inequality, we have
\begin{align*}
    \mathbb{P}(\widehat{K}<K) & \leq 1-\gamma_n^K \\
    & \leq 4K \exp \left(-n\Omega^2\right)  
    +4K\exp\left(-\left(\sqrt{n}\Omega -\frac{1}{\sqrt{2}}\underset{m\leq n}{\max}q_{\alpha}(m)-\sqrt{\log\frac{2e}{\Lambda}}\right)_+^2\right).\qedhere
\end{align*}
\end{proof}
\begin{proof}[Proof of Theorem \ref{t: over}]
Let $T(f)$ denote the set of all change points of $f$. Note that
\begin{align*}
        & \mathbb{P}(\widehat{K}\geq K + k) \\
    \leq \, &  \mathbb{P}\Bigl(
    \max \, T_{I_i}(Z,\hat{\theta}_i)-q_{\alpha}(\abs{I_i})\geq 0, \forall \hat{f}\in \Sigma \text{ with }\abs{T(\hat{f})}\leq K+k-1\Bigr) \\
    \leq\,& \mathbb{P}\Bigl(
    \max \, T_{I_i}(Z,\hat{\theta}_i)-q_{\alpha}(\abs{I_i})\geq 0, \forall \hat{f}\in \Sigma \text{ with } T(f)\subseteq T(\hat{f}) \text{ and }\abs{T(\hat{f})}\leq K+k-1\Bigr)\\
    \leq\,& \, \mathbb{P}\Bigl(
    \max T_{I_i}(Z-f,\hat{\theta}_i-f|_{I_i})-q_{\alpha}(\abs{I_i})\geq 0, \forall \hat{f}\in \Sigma \text{ with }\abs{T(\hat{f}-f)}\leq k-1\Bigr). 
\end{align*}
Let $\widetilde{Z} = Z -f$, and each $\widetilde{Z}_i$ has $\beta$-quantile zero. The event in last lines implies that the estimator for zero function by MUSCLE has at least $k$ change points. Thus, it follows from \Cref{prop: over estimation of zero} that 
\(
    \mathbb{P}(\widehat{K}\geq K + k)\leq \mathbb{P}(T_k < n) \leq \alpha^k. 
\)
\end{proof}

\begin{proof}[Proof of Theorem \ref{t: consistency 1}]
It follows from Theorems \ref{t: under} and \ref{t: over} that
\begin{align*}
    & \mathbb{P}(\widehat{K}_n \neq K_n) \\
    \leq \, & 4K_n\Bigg(\exp\left(-\left(\sqrt{n}\Omega_n -\frac{1}{\sqrt{2}}\underset{m\leq n}{\max}q_{\alpha}(n)-\sqrt{\log\frac{2e}{\Lambda_n}}\right)_+^2\right) + \exp \left(-n\Omega_n^2\right) \Bigg) +\alpha_n  \\
    = \, &  4\exp{(-\Gamma_{1,n})} + 4\exp{(-\Gamma_{2,n})} +\alpha_n,
\end{align*}
where $\Gamma_{2,n} \coloneqq n\Omega_n^2 -\log K_n$ and 
\begin{align*}
    \Gamma_{1,n} \coloneqq \left(\sqrt{n}\Omega_n -\frac{1}{\sqrt{2}}\underset{m\leq n}{\max}q_{\alpha}(n)-\sqrt{\log\frac{2e}{\Lambda_n}}\right)_+^2 - \log K_n.
\end{align*}
Note that $\Gamma_{1,n}\leq \Gamma_{2,n}$ for all large $n$ and $\alpha_n \to 0$. It is sufficient to show that $\Gamma_{1,n}\to \infty$ as $n \to \infty$. 
In the first case: $\liminf\Lambda_n >0$ implies that both $\sqrt{\log (2e/\Lambda_n)}$ and $K_n$ are finite, and thus the condition 
\(
    \sqrt{n}\Omega_n - \max_{m\leq n}q_{\alpha}(n)/{\sqrt{2}} \to \infty,
\)
ensures that $\Gamma_{1,n} \to \infty$.

For the second case: Note that $K_n \leq \Lambda_n^{-1}$ and $\sqrt{n}\Omega_n\geq c \sqrt{-\log \Lambda_n}$ for some $c>2$. Applying the basic inequality $\sqrt{x+y}-\sqrt{x}\geq y /(2\sqrt{x})$ for all $x,y\geq 0$, we then have 
\begin{align*}
    &\Gamma_{1,n} \geq \left(c\sqrt{-\log \Lambda_n} -\frac{\max_{m\leq n}q_{\alpha}(n)}{\sqrt{2}}-\sqrt{\log\frac{2e}{\Lambda_n}}\right)_+^2 - \left(\sqrt{-\log \Lambda_n}\right)^2\\
    \geq\, & \left((c-1)\sqrt{-\log \Lambda_n} -\frac{\max_{m\leq n}q_{\alpha}(n)}{\sqrt{2}}-\frac{\log 2e}{2\sqrt{-\log \Lambda_n}} \right)_+^2 - \left(\sqrt{-\log \Lambda_n}\right)^2 \\
    \geq \, &\left((c-2)\sqrt{-\log \Lambda_n} -\frac{\max_{m\leq n}q_{\alpha}(n)}{\sqrt{2}}-\frac{\log 2e}{2\sqrt{-\log \Lambda_n}} \right)_+^2
\end{align*}
which goes to infinity. The last inequality above comes from the basic inequality $(x+y)^2-x^2\geq y^2$ for all $x,y\geq 0$.
\end{proof}

\begin{proof}[Proof of Theorem \ref{t: Hausdorff}]
Note that
\begin{multline}
    \label{ieq: Hausdorff}
    \mathbb{P}\left(d_H(T_n,\widehat{T}_n) \geq \varepsilon_n \right)
    \leq \mathbb{P}\left(\max_{k}\min_{j}\abs{\tau_k -\hat{\tau}_j} \geq \varepsilon_n\right)  \\
   + \mathbb{P}\left(\max_{k}\min_{j}\abs{\tau_k -\hat{\tau}_j} < \varepsilon_n, \max_{j}\min_{k}\abs{\tau_k -\hat{\tau}_j}\geq\varepsilon_n\right).
\end{multline}
For the first probability, if $\max_k\min_j \abs{\tau_k -\hat{\tau}_j}\geq \varepsilon_n$, then there exists a $k\in\{1,\dots,K_n\}$ such that on $I_k = (\tau_k-\varepsilon_n,\tau_k+\varepsilon_n)$ no change point is detected. Since $2\varepsilon_n \leq \Lambda_n$, we have the interval $I_k\subseteq (\tau_{k}-\Lambda_n/2,\tau_k+\Lambda_n/2)$. Applying the same technique as in the proof of Theorem~\ref{t: under} and replacing $\Lambda_n$ by $\varepsilon_n$, we obtain the first upper bound.

For the second probability, we decompose $\widehat{T}_n$ into two disjoint sets $ A_n = \{\hat{\tau}\in \widehat{T}_n: \min_k \abs{\hat{\tau}-\tau_k}<\varepsilon_n\}$ and $B_n = \{\hat{\tau}\in \widehat{T}_n: \min_k\abs{\hat{\tau}-\tau_k}\geq\varepsilon_n\}.$ Since $\max_k\min_j \abs{\tau_k -\hat{\tau}_j}< \varepsilon_n$, we have for any true change point $\tau_k$, there exists an estimated change point $\hat{\tau}_{j(k)}$ such that $\abs{\tau_k-\hat{\tau}_j}< \varepsilon_n$, that is, $\hat{\tau}_{j(k)}\in I_k$ for all $k = 1, \dots,K_n$. Note that $I_k$'s are pairwise disjoint. Thus $\hat{\tau}_{j(k)}$'s are pairwise distinct. By definition, all $\hat{\tau}_{j(k)}$'s belong to $A_n$ and therefore $\abs{A_n}\geq K$. Furthermore, the event $\max_{j}\min_{k}\abs{\tau_k -\hat{\tau}_j}\geq\varepsilon_n$ implies that there exists a $j^*$ such that $\abs{\tau_k -\hat{\tau}_{j^*}}\geq \varepsilon_n$ for all $k = 1,\dots,K_n$, i.e., $\hat{\tau}_{j^*}\in B_n$ and $\abs{B_n}\geq 1$. Since $\widehat{T}_n = A_n\sqcup B_n$, we have $\abs{\widehat{T}_n}\geq K_n+1$. Therefore, by Theorem~\ref{t: over}, the second probability is upper bounded by $\mathbb{P}(\hat{K}_n \geq K_n+1)\leq \alpha_n$. 
\end{proof}

\subsection{Proof of Theorem \ref{t: FDR}}
Let $\alpha \in (0,1)$. We first construct an upper bound for expectation of the number of false discoveries $\rm{FD}(\alpha)$ given there is no true discovery. 
\begin{lemma}\label[lemma]{l: FD 0} Under above notations we have
\begin{equation*}
    \mathbb{E}(\rm{FD}(\alpha) \mid TD(\alpha) = 0)\leq \frac{\alpha}{1-\alpha} \eqqcolon G(\alpha)
\end{equation*}
\end{lemma}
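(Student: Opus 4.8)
The plan is to establish, on the conditioning event $\{TD(\alpha)=0\}$, the geometric tail bound $\mathbb{P}\bigl(\mathrm{FD}(\alpha)\ge k\mid TD(\alpha)=0\bigr)\le\alpha^{k}$ for every $k\ge 1$, and then to sum it:
\[
\mathbb{E}\bigl(\mathrm{FD}(\alpha)\mid TD(\alpha)=0\bigr)=\sum_{k\ge1}\mathbb{P}\bigl(\mathrm{FD}(\alpha)\ge k\mid TD(\alpha)=0\bigr)\le\sum_{k\ge1}\alpha^{k}=\frac{\alpha}{1-\alpha}.
\]
This mirrors Proposition~\ref{p: tail bound}, whose count $S$ already obeys $\mathbb{P}(S\ge s)\le\alpha^{s}$ and $\mathbb{E}(S)\le\alpha/(1-\alpha)=G(\alpha)$. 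As a first step, on $\{TD(\alpha)=0\}$ every estimated change point is by definition a false discovery, so $\mathrm{FD}(\alpha)=\widehat{K}$, and it suffices to control the number of MUSCLE segments when none of them captures a true change point.

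The geometric content of the conditioning is a reduction to a constant-quantile setting. The discovery windows $[(\hat{\tau}_{i-1}+\hat{\tau}_i)/2,(\hat{\tau}_i+\hat{\tau}_{i+1})/2)$ tile the interval $[\hat{\tau}_1/2,(\hat{\tau}_{\widehat{K}}+1)/2)$, which contains all the estimated change points $\hat{\tau}_1,\dots,\hat{\tau}_{\widehat{K}}$; since $\{TD(\alpha)=0\}$ places no true change point inside any window, the quantile function $f$ is constant on this interval. Hence every false discovery is a change point that MUSCLE inserts inside a region of constant $\beta$-quantile, where the transformed data $W_i=\mathbbm{1}_{\{\varepsilon_i\le 0\}}$ are i.i.d.\ $\mathrm{Ber}(\beta)$. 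As MUSCLE minimizes the number of change points, each such change point is \emph{forced} by the multiscale side-constraint, precisely the event quantified by the maximal-extension times $T_s$ of Proposition~\ref{p: tail bound}.

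The tail bound is then driven by the independence of the multiscale statistics on adjacent segments, which the open-interior side-constraint $\mathring{I}_j$ in~\eqref{e: C_k} is designed to guarantee (cf.\ footnote~\ref{f:tech}). I would run the same induction as in the proof of Proposition~\ref{p: tail bound}: conditioning on the first $k-1$ forced segments, the probability that yet another segment is forced is at most $\alpha$, directly from the definition~\eqref{e: q_m} of the local quantile $q_\alpha$; each additional false discovery therefore contributes one factor of $\alpha$, yielding $\mathbb{P}(\mathrm{FD}(\alpha)\ge k\mid TD(\alpha)=0)\le\alpha^{k}$.

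The main obstacle I anticipate is that the event $\{TD(\alpha)=0\}$, and the interval on which $f$ is seen to be constant, are both data-dependent, so the conditioning does not automatically preserve the i.i.d.\ Bernoulli structure nor commute with the segment-by-segment induction. The delicate work is to carry out the maximal-extension argument \emph{pathwise} on $\{TD(\alpha)=0\}$ — matching each false discovery to a maximal feasible extension and verifying that the per-segment failure probability remains bounded by $\alpha$ after conditioning — rather than conditioning first and losing the independence that makes the $\alpha^{k}$ rate (as opposed to a weaker $\alpha^{k/2}$) attainable.
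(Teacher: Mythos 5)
Your core engine is the right one --- the maximal-extension times of \cref{p: tail bound}, the per-segment failure probability $\alpha$ from the definition \eqref{e: q_m}, minimality of $\widehat{K}$, and the geometric series summing to $\alpha/(1-\alpha)$ --- and this matches the machinery the paper uses. But the route you take to deploy it leaves a genuine gap, and it is precisely the one you flag yourself in your last paragraph without resolving: the plan to ``carry out the maximal-extension argument pathwise on $\{TD(\alpha)=0\}$'' does not go through as sketched. The event $\{TD(\alpha)=0\}$ is determined by the entire sample \emph{and} by the estimator itself (the discovery windows are built from the $\hat{\tau}_i$), so when you condition on it, the step ``conditioning on the first $k-1$ forced segments, the probability that yet another segment is forced is at most $\alpha$, directly from the definition \eqref{e: q_m}'' is not licensed: the definition of $q_\alpha$ bounds an \emph{unconditional} probability for i.i.d.\ Bernoulli data on a deterministic segment, whereas your conditioning event constrains data beyond the segment currently being extended and selects the constant-quantile region $[\hat{\tau}_1/2,(\hat{\tau}_{\widehat{K}}+1)/2)$ in a data-dependent way. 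This breaks the induction of \cref{p: tail bound}, and your claimed conditional tail bound $\mathbb{P}(\mathrm{FD}(\alpha)\ge k\mid TD(\alpha)=0)\le\alpha^k$ is left unproved.

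The paper's proof sidesteps this entirely, and the missing idea is the reduction: it suffices to prove the lemma for a \emph{constant} signal (taken to be zero), in which case there are no true change points, $TD(\alpha)=0$ holds with probability one, and the conditioning is vacuous --- no conditional tail bound is ever needed. On a constant signal every estimated change point is a false discovery, so $\mathrm{FD}(\alpha)=\widehat{K}$, and by minimality $\widehat{K}\le\widetilde{K}$, where $\widetilde{K}$ is the number of change points of the explicit greedy estimator $\widetilde{f}$ constructed in the proof of \cref{prop: over estimation of zero} (note the technical device there: the segment values $\tilde{\theta}/(k+1)$ are perturbed so that $\widetilde{f}$ has \emph{exactly} $\widetilde{K}$ change points while the indicators $\mathbbm{1}_{\{Z_i\le\tilde{\theta}/l\}}$, and hence feasibility, are unchanged). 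Then \cref{p: tail bound} gives $\mathbb{E}(\widetilde{K})\le\alpha/(1-\alpha)$ \emph{unconditionally}, and the lemma follows; the task of handling the data-dependent conditioning in the general (non-constant) case is deferred to the proof of \cref{t: FDR}, which invokes Lemma~A.2 and the proof technique of Theorem~2.2 in \citet{li2016fdr}. To repair your proposal, replace the pathwise conditional induction by this constant-signal reduction and the comparison $\widehat{K}\le\widetilde{K}$.
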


\begin{proof}
Note that it is sufficient to prove the claim for constant signal and we may assume that it equals to zero. We study the estimator $\widetilde{f}$ with $\widetilde{K}$ change points, which is defined in the proof of \cref{prop: over estimation of zero}. Again, since MUSCLE minimizes the number of estimated change points, the FD of MUSCLE is upper bounded by $\widetilde{K}$. Thus, applying Proposition~\ref{prop: over estimation of zero}, we have
\(
    \mathbb{E}({\rm FD(\alpha)}\mid{\rm TD}(\alpha) = 0) \leq \mathbb{E}( \widetilde{K}) \leq \frac{\alpha}{1-\alpha}\eqqcolon G(\alpha).
\)
\end{proof}

\begin{proof}[Proof of Theorem \ref{t: FDR}]
Applying Lemma \ref{l: FD 0}, \citet[Lemma~A.2]{li2016fdr} and the proof technique in \citet[Theorem~2.2]{li2016fdr}, we have
\begin{equation*}
    {\rm FDR} \leq \frac{G(\alpha)}{1+G(\alpha)}= \alpha.
\end{equation*}
Next we prove the statement on OER. It follows from \Cref{t: over} that $\mathbb{P}(\widehat{K}-K\geq k) \leq \alpha^k$. Thus,
\(
    \mathbb{E}((\widehat{K}-K)_{+}) = \sum_{k=1}^{\infty}\mathbb{P}(\widehat{K}-K \geq k) \leq {\alpha}/({1-\alpha}).
\)
If $K = 0$, then 
\begin{equation*}
    \mathbb{E}\left(\frac{(\widehat{K}-K)_+}{\widehat{K}\vee 1}\right) =
    \mathbb{E}\left(\frac{\widehat{K}}{\widehat{K}}\mathbbm{1}_{\{\widehat{K}\geq 1\}}\right)=\mathbb{P}(\widehat{K}\geq 1) \leq \alpha. 
\end{equation*}
For $K\geq 1$, 
\(
    \frac{(\widehat{K}-K)_+}{\widehat{K}\vee 1} =  \frac{(\widehat{K}-K)_+}{(\widehat{K}-K)_+ + K}.
\)
The function $x\mapsto x/(x+K)$ is strictly concave and monotone increasing on $[1,\infty)$. By Jensen's inequality 
\begin{equation*}
    \mathbb{E}\left(\frac{(\widehat{K}-K)_+}{\widehat{K}+1}\right)  \leq \frac{\mathbb{E}((\widehat{K}-K)_+)}{\mathbb{E}((\widehat{K}-K)_+) +K} 
    \leq \frac{\frac{\alpha}{1-\alpha}}{\frac{\alpha}{1-\alpha}+K}=\frac{\alpha}{(1-\alpha)K+\alpha}.\qedhere
\end{equation*}
\end{proof}

\section{Additional Materials for \Cref{S: 4}}
\label{A2}
Suppose that $Z_1,\dots,Z_n$ are observations from Model \ref{QSR} with underlying signal ${f}$. Let $\hat{f}$ denote an estimate of ${f}$. Then the \emph{mean integrated squared error} (MISE) of $\hat{f}$ is given by
\begin{equation}
    \label{e: MISE}
    {\rm MISE}\left(\hat{f}\right) = \frac{1}{n}\sum_{i=1}^n \left(\hat{f}(x_i)-{f}(x_i)\right)^2,
\end{equation}
and the \emph{mean integrated absolute error} (MIAE) of $\hat{f}$ by
\begin{equation}
    \label{e: MIAE}
    {\rm MIAE}\left(\hat{f}\right) = \frac{1}{n}\sum_{i=1}^n \left|\hat{f}(x_i) -{f}(x_i)\right|.
\end{equation}
For RNSP, MQS and MUSCLE we apply empirical quantiles to fit data on each segments and for other procedures we use sample means instead.

\Cref{f: blocks Cauchy,f: blocks Chi_sq,t: E1-E3} present additional results for \Cref{S: 5.1.1}, while \Cref{f: Well log RFPOP} corresponds to \Cref{ss:well-log}.

\begin{figure}[t]
        \centering
        \includegraphics[width=\linewidth]{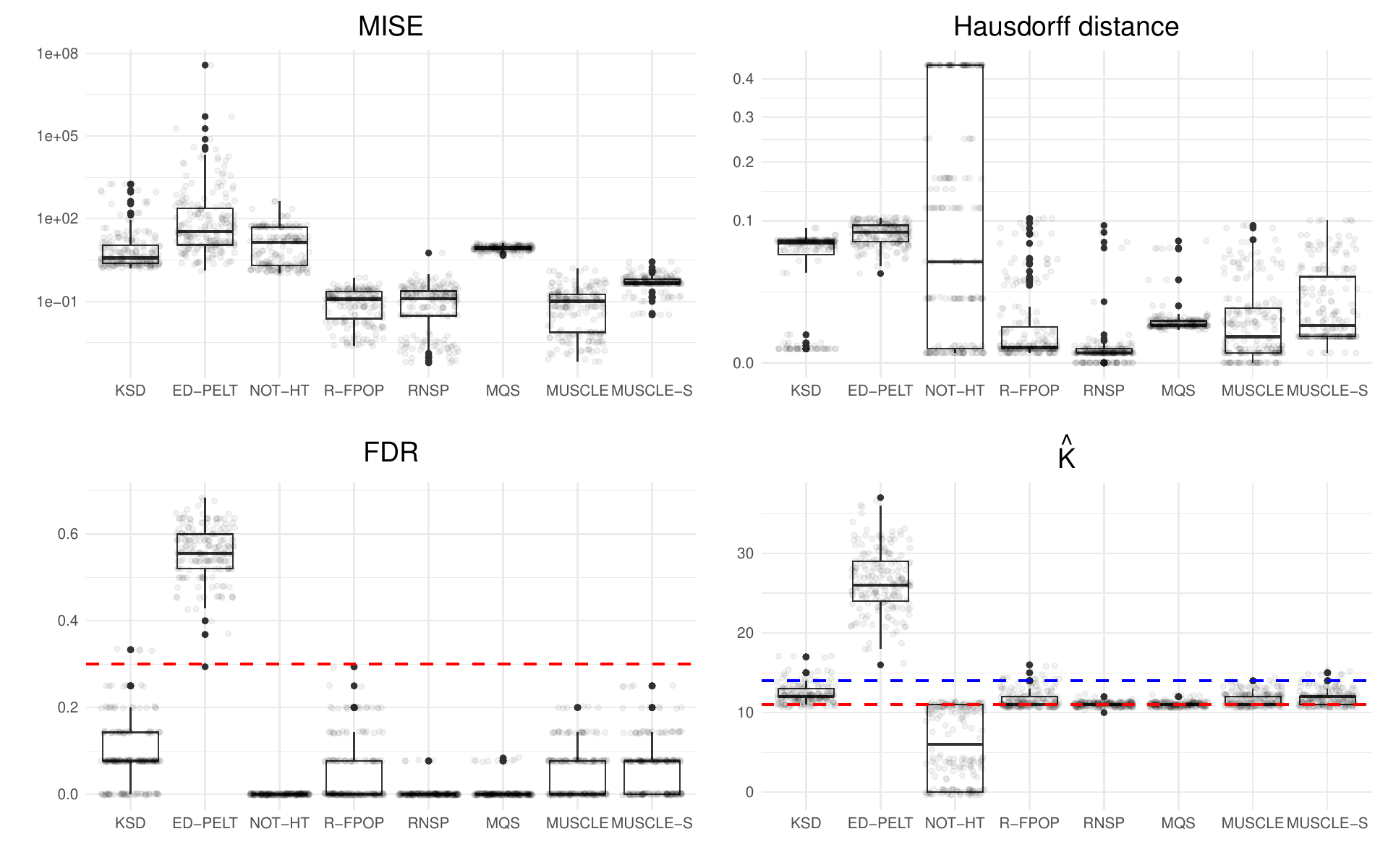}
        \caption{Recovery of the \texttt{blocks} signal in \ref{i:cauchy}. In each panel, the overall performance  over 200 repetitions is summarized as a boxplot and individual repetitions are jittered in dots with a low intensity. In the bottom left panel, the theoretical upper bound $\alpha = 0.3 $ on the FDR of MUSCLE is marked by a red dashed line. In the bottom right panel, the true numbers of change points in median and in distribution are $K = 11$ (marked by a red dashed line) and $K = 14$ (marked by a blue dashed line), respectively. }
        \label{f: blocks Cauchy}
\end{figure}

\begin{figure}[t]
        \centering
        \includegraphics[width=\linewidth]{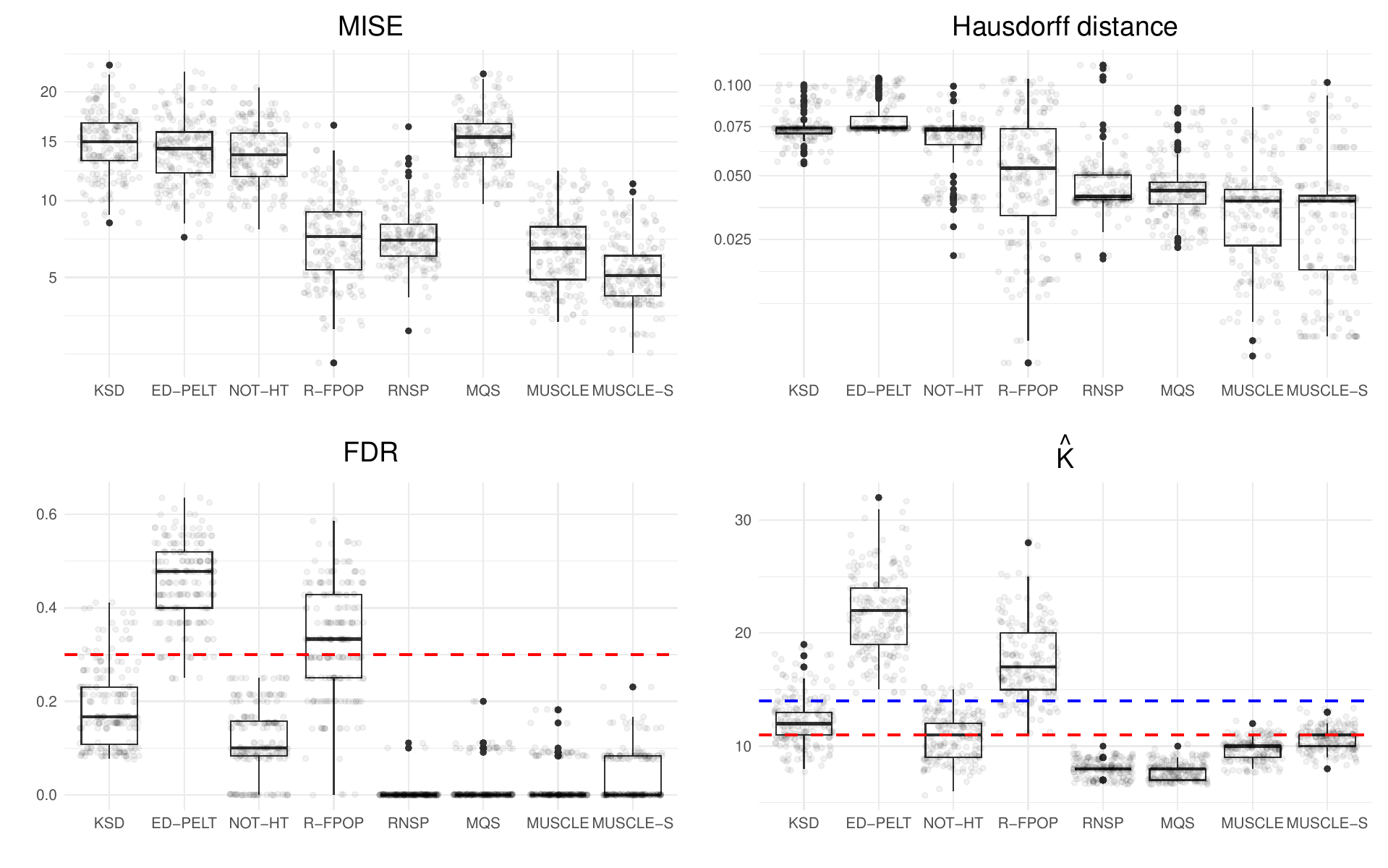}
        \caption{Recovery of the \texttt{blocks} signal in \ref{i:chi2}. In each panel, the overall performance  over 200 repetitions is summarized as a boxplot and individual repetitions are jittered in dots with a low intensity. In the bottom left panel, the theoretical upper bound $\alpha = 0.3 $ on the FDR of MUSCLE is marked by a red dashed line. In the bottom right panel, the true numbers of change points in median and in distribution are $K = 11$ (marked by a red dashed line) and $K = 14$ (marked by a blue dashed line), respectively. }
        \label{f: blocks Chi_sq}
\end{figure}

\begin{table*}[!ht]
\centering
\caption{Statistical performance of procedures in Section \ref{S: 5.1.1} in Gaussian \ref{i:gauss}, scaled $t_3$ \ref{i:tdist}, scaled Cauchy \ref{i:cauchy}, scaled and centered $\chi_3^2$-distribution \ref{i:chi2} and mixed distribution \ref{i:hetemix}. Medians of each criterion and their mean absolute deviations (in brackets) are presented, with the best performances highlighted in bold.}
\tabcolsep=0pt
\begin{tabular*}{1\textwidth}{@{\extracolsep{\fill}}cccccccccc@{\extracolsep{\fill}}}
\toprule%
Scenario & Method & MISE & MIAE & $L(T;\widehat{T})$ & $d_{H}(T,\widehat{T})$ & $\widehat{K}$ &FDR & V-measure & Time [s] \\
\midrule
\multirow{14}{*}{\ref{i:gauss}}  & \multirow{2}{*}{MUSCLE} & $0.010$   & $0.074$   & $\mathbf{0}$        & $\mathbf{0}$       & $\mathbf{2}$        & $\mathbf{0}$        &                                 $\mathbf{1}$       &  $15.871$ \\
                                &                         & ($0.006$) & ($0.001$) & ($0.0002$) & ($0.0350$)& ($0.29$) & ($0.07$) & ($0.036$)&($3.344$)\\
                                \cline{2-10}%
                                & \multirow{2}{*}{MUSCLE-S} & $0.007$  & $0.051$  & $\mathbf{0}$        & $\mathbf{0}$       & $\mathbf{2}$        & $\mathbf{0}$        & $\mathbf{1}$       & $0.697$\\
                                &                         & ($0.005$)  & ($0.024$)& ($0.0002$) & ($0.0442$)& ($0.38$) & ($0.09$) & ($0.047$)& ($0.309$)\\
                                \cline{2-10}%
                                & \multirow{2}{*}{{MQS}} & $0.143$  & $0.070$  & $0.006$        & $0.0065$       & $\mathbf{2}$        & $\mathbf{0}$        & $0.953$       & $2.879$\\
                                &                         & ($0.045$)  & ($0.021$)& ($0.0007$) & ($0.0007$)& ($0$) & ($0$) & ($0.014$)& ($0.082$)\\
                                \cline{2-10}%
                                & \multirow{2}{*}{SMUCE}  & $\mathbf{0.001}$   & $\mathbf{0.025}$   & $\mathbf{0}$        & $\mathbf{0}$       & $\mathbf{2}$        & $\mathbf{0}$        & $\mathbf{1}$       &          $0.026$\\
                                &                         & $(0.001)$ & $(0.010)$ & $(4.8\cdot 10^{-5})$  & $(0.0159)$& $(0.11)$ & $(0.03)$ & $(0.017)$& $(0.002)$ \\
                                \cline{2-10}%
                                & \multirow{2}{*}{FDRSeg} & $\mathbf{0.001}$  & $0.027$  & $\mathbf{0}$         & $\mathbf{0}$       & $\mathbf{2}$        & $\mathbf{0}$        & $\mathbf{1}$       & $1.068$\\
                                &                         & $(0.002)$& $(0.010)$& $(5.6\cdot 10^{-5})$  & $(0.0542)$& $(0.71)$ & $(0.15)$ & $(0.065)$& $(1.689)$\\
                                \cline{2-10}%
                                & \multirow{2}{*}{WBS}  & $0.017$  & $0.032$  & $5\cdot 10^{-4}$ & $0.0002$ & $3$     & $0.25$     & $0.988$ & $0.076$\\
                                &                       & $(0.001)$& $(0.009)$& $(7.4\cdot 10^{-4})$    & $(0.0058)$& $(0.41)$ & $(0.10)$ & $(0.006)$& $(0.004)$\\
                                \cline{2-10}%
                                & \multirow{2}{*}{PELT} & $0.017$  & $0.029$  & $5\cdot 10^{-4}$ & $5\cdot 10^{-4}$ & $\mathbf{2}$ & $\mathbf{0}$   & $0.999$ & $\mathbf{7\cdot 10^{-4}}$\\
                                &                       & $(0.001)$& $(0.009)$& $(1.9\cdot10^{-4})$& $(1.9\cdot10^{-4})$& $(0)$&$(0)$ & $(0.001)$ & $(9.9\cdot 10^{-5})$\\
\midrule
\multirow{12}{*}{\ref{i:tdist}} &\multirow{2}{*}{MUSCLE} & $\mathbf{0.315}$ & $0.188$  & $0.0009$ & $0.0014$ & $\mathbf{11}$& $\mathbf{0}$ & $\mathbf{0.996}$ & $6.278$\\
                                &                        & $(0.225)$& $(0.042)$& $(0.0010)$& $(0.0100)$ & $(0.46)$& $(0.03)$& $(0.005)$ & $(0.583)$\\
                                \cline{2-10}%
                                &\multirow{2}{*}{MUSCLE-S} & $0.665$ & $0.232$ & $0.0034$ & $0.0034$ & $\mathbf{11}$ & $\mathbf{0}$ & $0.990$ & $0.715$\\
                                &                          & $(0.221)$& $(0.043)$ & $(0.0004)$ & $(0.0197)$ & $(0.67)$ & $(0.04)$& $(0.010)$ & $(0.142)$\\
                                \cline{2-10}%
                                &\multirow{2}{*}{MQS} & $9.611$ & $0.900$ & $0.0112$ & $0.0117$ & $\mathbf{11}$ & $\mathbf{0}$& $0.923$ & $0.874$\\
                                &                      & $(1.277)$& $(0.105)$ & $(0.0021)$ & $(0.0077)$ & $(0.20)$ & $(0.04)$ & $(0.008)$& $(0.036)$\\
                                \cline{2-10}%
                                &\multirow{2}{*}{NOT-HT} & $1.219$ & $0.306$ & $0.0009$ & $0.0009$ & $\mathbf{11}$ & $\mathbf{0}$ & $0.985$ & $0.180$\\
                                &                        & $(0.148)$& $(0.048)$& $(0.0010)$& $(0.0080)$& $(0.16)$ & $(0.01)$& $(0.003)$& $(0.009)$\\
                                \cline{2-10}%
                                &\multirow{2}{*}{KSD} & $2.167$ & $0.438$ & $0.0029$& $0.0732$ & $13$ & $0.14$ & $0.942$ & $70.411$\\
                                &                      & $(0.469)$& $(0.072)$& $(0.0031)$ & $(0.0115)$ & $(0.97)$& $(0.05)$ & $(0.015)$& $(5.274)$\\
                                \cline{2-10}%
                                &\multirow{2}{*}{R-FPOP} & $1.632$& $0.495$& $0.0009$ & $0.0488$& $27$ & $0.57$ & $0.902$ & $\mathbf{0.002}$\\
                                &                        & $(0.330)$& $(0.066)$& $(0.0007)$ & $(0.0168)$& $(3.44)$& $(0.06)$& $(0.020)$& $(2 \cdot 10^{-4})$\\
                                \cline{2-10}%
                                &\multirow{2}{*}{ED-PELT} & $1.838$& $0.406$& $0.0009$ & $0.0747$& $20$ & $0.43$ & $0.905$ & $0.118$\\
                                &                        & $(0.544)$& $(0.066)$& $(0.0007)$ & $(0.0037)$& $(2.20)$& $(0.06)$& $(0.014)$& $(0.012)$\\
                                \cline{2-10}%
                                &\multirow{2}{*}{RNSP} & $0.348$& $\mathbf{0.169}$& $\mathbf{0.0004}$ & $\mathbf{0.0004}$& $\mathbf{11}$ & $\mathbf{0}$ & $\mathbf{0.996}$ & $32.587$\\
                                &                        & $(0.392)$& $(0.044)$& $(0.0034)$ & $(0.0083)$& $(0)$& $(0.02)$& $(0.004)$& $(0.708)$\\
\midrule
\multirow{12}{*}{\ref{i:cauchy}} &\multirow{2}{*}{MUSCLE} & $\mathbf{0.101}$& $0.049$& $\mathbf{0.0004}$& $0.0034$ &$\mathbf{11}$ & $\mathbf{0}$& $0.995$& $6.164$\\
                                & & $(0.158)$& $(0.015)$ & $(6.9\cdot 10^{-4})$& $(0.0162)$& $(0.07)$ & $(0.05)$& $(0.009)$& $(1.028)$\\
                                \cline{2-10}%
                                &\multirow{2}{*}{MUSCLE-S} & $0.463$& $0.090$ & $0.0034$& $0.0068$ & $12$ & $0.076$ & $0.987$ & $0.790$\\
                                & & $(0.204)$& $(0.017)$& $(5.7\cdot 10^{-4})$& $(0.0217)$& $(0.77)$& $(0.05)$& $(0.012)$& $(0.206)$\\
                                \cline{2-10}%
                                &\multirow{2}{*}{MQS} & $8.838$& $0.653$& $0.0068$& $0.0068$& $\mathbf{11}$ & $\mathbf{0}$ & $0.930$& $0.849$\\
                                & & $(1.291)$ & $(0.087)$ & $(0.0011)$ & $(0.0042)$& $(0.07)$& $(0.01)$ & $(0.008)$ & $(0.041)$\\
                                \cline{2-10}%
                                &\multirow{2}{*}{NOT-HT} & $14.023$& $2.153$ & $0.0507$& $0.0507$ & $6$ & $\mathbf{0}$ & $0.913$ & $0.173$\\
                                & & $(28.151)$& $(2.325)$ & $(0.1586)$ & $(0.1588)$ & $(4.39)$ & $(0)$ & $(0.366)$ & $(0.008)$\\
                                \cline{2-10}%
                                &\multirow{2}{*}{KSD} & $3.751$ & $1.023$ & $0.0009$ & $0.0717$ & $12$ & $0.08$ & $0.956$ & $69.789$\\
                                & & $(121.466)$ & $(1.421)$& $(7.1\cdot 10^{-4})$ & $(0.0252)$ & $(0.88)$& $(0.06)$ & $(0.014)$ & $(4.310)$ \\
                                \cline{2-10}%
                                &\multirow{2}{*}{R-FPOP} & $0.122$& $0.0565$ & $0.0009$ & $0.0012$ & $\mathbf{11}$ & $\mathbf{0}$ & $0.9839$ & $\mathbf{0.002}$\\
                                & & $(0.094)$& $(0.012)$ & $(2.6\cdot 10^{-4})$ & $(0.0170)$ & $(0.88)$ & $(0.06)$ & $(0.008)$ & $(1.3\cdot 10^{-4})$\\
                                \cline{2-10}%
                                &\multirow{2}{*}{ED-PELT} & $34.029$& $1.289$& $\mathbf{0.0004}$ & $0.0849$& $26$ & $0.55$ & $0.861$ & $0.094$\\
                                &                        & $(3.7\cdot10^5)$& $(3.967)$& $(3.1\cdot 10^{-4})$ & $(0.0114)$& $(2.91)$& $(0.05)$& $(0.019)$& $(0.014)$\\
                                \cline{2-10}%
                                &\multirow{2}{*}{RNSP} & $0.124$& $\mathbf{0.041}$& $\mathbf{0.0004}$ & $\mathbf{0.0004}$& $\mathbf{11}$ & $\mathbf{0}$ & $\mathbf{0.997}$ & $29.164$\\
                                &                        & $(0.164)$& $(0.014)$& $(4.9\cdot 10^{-4})$ & $(0.0032)$& $(0.04)$& $(3\cdot 10^{-3})$& $(0.002)$& $(0.477)$\\
\bottomrule
\end{tabular*}
\label{t: E1-E3}
\end{table*}

\begin{table*}[!ht]
\centering
\caption{Continuation of \Cref{t: E1-E3}.}
\tabcolsep=0pt
\begin{tabular*}{1\textwidth}{@{\extracolsep{\fill}}cccccccccc@{\extracolsep{\fill}}}
\toprule%
Scenario & Method & MISE & MIAE & $L(T;\widehat{T})$ & $d_{H}(T,\widehat{T})$ & $\widehat{K}$ &FDR & V-measure & Time [s] \\
\midrule
\multirow{12}{*}{\ref{i:chi2}} &\multirow{2}{*}{MUSCLE} & $6.694$ & $1.364$ & $0.0378$ & $\mathbf{0.0390}$ & $10$ & $\mathbf{0}$ & $0.942$ & $11.812$\\
                                &                       & $(1.849)$& $(0.222)$& $(0.0111)$& $(0.0131)$ & $(0.70)$ & $(0.03)$ & $(0.017)$ & $(3.139)$\\
                                \cline{2-10}%
                                &\multirow{2}{*}{MUSCLE-S} & $\mathbf{5.111}$ & $\mathbf{1.212}$& $0.0302$& $\mathbf{0.0390}$& $\mathbf{11}$& $\mathbf{0}$& $\mathbf{0.957}$& $1.121$\\
                                &                        & $(1.453)$& $(0.219)$& $(0.0145)$& $(0.0178)$& $(0.79)$& $(0.04)$& $(0.017)$& $(0.218)$\\
                                \cline{2-10}%
                                &\multirow{2}{*}{MQS} & $15.431$ & $1.830$& $0.0434$& $0.0434$& $8$& $\mathbf{0}$& $0.879$& $1.051$\\
                                &                      & $(1.942)$& $(0.215)$& $(0.0088)$& $(0.0094)$& $(0.62)$& $(0.03)$& $(0.013)$& $(0.042)$\\
                                \cline{2-10}%
                                &\multirow{2}{*}{NOT-HT} & $13.817$& $2.757$& $0.0398$& $0.0732$& $\mathbf{11}$ & $0.10$ & $0.919$& $0.185$\\
                                &                        & $(2.142)$& $(0.232)$& $(0.0136)$& $(0.0110)$& $(1.71)$& $(0.06)$& $(0.014)$& $(0.008)$\\
                                \cline{2-10}%
                                &\multirow{2}{*}{KSD} & $15.010$& $2.800$& $0.0306$& $0.0742$ & $12$& $0.17$& $0.896$& $76.534$\\
                                &                        & $(2.335)$& $(0.227)$& $(0.0145)$& $(0.0051)$& $(1.71)$& $(0.07)$& $(0.020)$ & $(5.460)$\\
                                \cline{2-10}%
                                &\multirow{2}{*}{R-FPOP} & $7.463$& $1.389$& $0.0048$& $0.0534$& $17$& $0.33$& $0.924$ & $\mathbf{0.002}$\\
                                &                        & $(2.246)$& $(0.233)$& $(0.0037)$& $(0.0226)$& $(2.65)$& $(0.09)$& $(0.024)$& $(1.5\cdot 10^{-4})$\\
                                \cline{2-10}%
                                &\multirow{2}{*}{ED-PELT} & $14.350$& $2.673$& $\mathbf{0.0026}$ & $0.0742$& $22$ & $0.47$ & $0.887$ & $0.106$\\
                                &                        & $(2.232)$& $(0.212)$& $(0.0019)$ & $(0.0082)$& $(2.83)$& $(0.07)$& $(0.018)$& $(0.009)$\\
                                \cline{2-10}%
                                &\multirow{2}{*}{RNSP} & $7.205$& $1.313$& $0.0410$ & $0.0410$& $8$ & $\mathbf{0}$ & $0.936$ & $31.684$\\
                                &                        & $(1.388)$& $(0.194)$& $(0.0087)$ & $(0.0089)$& $(0.48)$& $(3 \cdot 10^{-3})$ & $(0.013)$& $(1.205)$ \\
\midrule
\multirow{12}{*}{\ref{i:hetemix}} &\multirow{2}{*}{MUSCLE} & $\mathbf{1.064}$ & $0.461$ & $\mathbf{0.0019}$ & $\mathbf{0.0043}$ & $\mathbf{11}$ & $\mathbf{0}$ & $\mathbf{0.987}$ & $6.967$\\
                                & & $(0.531)$ & $(0.087)$ & $(0.0021)$ & $(0.0159)$ & $(0.57)$ & $(0.04)$& $(0.009)$ & $(0.6441)$\\
                                \cline{2-10}%
                                &\multirow{2}{*}{MUSCLE-S} & $1.536$& $0.457$ & $0.0034$ & $0.0107$ & $12$ & $0.076$ & $0.982$ & $0.790$\\
                                & & $(0.467)$ & $(0.091)$ & $(6.3\cdot 10^{-4})$ & $(0.0187)$ & $(0.67)$& $(0.05)$ & $(0.011)$ & $(0.151)$\\
                                \cline{2-10}%
                                &\multirow{2}{*}{MQS} & $0.520$& $1.102$& $0.0132$ & $0.0156$ & $\mathbf{11}$ & $0.042$& $0.917$ & $0.790$\\
                                & & $(1.446)$ & $(0.122)$ & $(0.0018)$ & $(0.0098)$ & $(0.59)$ & $(0.04)$& $(0.009)$& $(0.034)$\\
                                \cline{2-10}%
                                &\multirow{2}{*}{NOT-HT} & $15.060$ & $2.383$& $0.1191$& $0.1191$ & $4$ & $\mathbf{0}$ & $0.837$ & $0.170$\\
                                & & $(21.120)$ &$(1.964)$ & $(0.1443)$& $(0.1381)$& $(4.03)$ & $(0.02)$ & $(0.319)$ & $(0.007)$\\
                                \cline{2-10}%
                                &\multirow{2}{*}{KSD} & $4.549$& $1.260$ & $0.0051$& $0.0732$& $13$ & $0.143$ & $0.937$ & $68.366$ \\
                                & & $(82.669)$ & $(0.819)$ & $(0.0080)$ & $(0.0068)$ & $(0.92)$ & $(0.04)$& $(0.011)$ & $(4.208)$\\
                                \cline{2-10}%
                                &\multirow{2}{*}{R-FPOP} & $3.728$ & $0.825$ & $0.0020$ & $0.0454$ & $25$ & $0.538$ & $0.919 $ & $\mathbf{0.002} $\\
                                & & $(0.637)$ & $(0.090)$ & $(0.0013)$ & $(0.0059)$& $(2.46)$ & $(0.05)$ & $(0.010)$& $(7.3\cdot 10^{-5})$\\
                                \cline{2-10}%
                                &\multirow{2}{*}{ED-PELT} & $22.273$& $1.424$& $0.0014$ & $0.0793$& $22$ & $0.478$ & $0.884$ & $0.106$\\
                                &                        & $(2.1\cdot10^4)$& $(4.895)$& $(8.5\cdot10^{-4})$ & $(0.0070)$& $(2.63)$& $(0.06)$& $(0.014)$& $(0.009)$\\
                                \cline{2-10}%
                                &\multirow{2}{*}{RNSP} & $1.417$& $\mathbf{0.359}$& $0.0124$ & $0.0136$& $\mathbf{11}$ & $\mathbf{0}$ & $0.986$ & $29.996$\\
                                &                        & $(1.121)$& $(0.129)$& $(0.0091)$ & $(0.0115)$& $(0.50)$& $(0.02)$& $(0.007)$& $(0.686)$\\
\bottomrule
\end{tabular*}
\end{table*}

\begin{figure}[t]
        \centering
        \includegraphics[width =\linewidth]{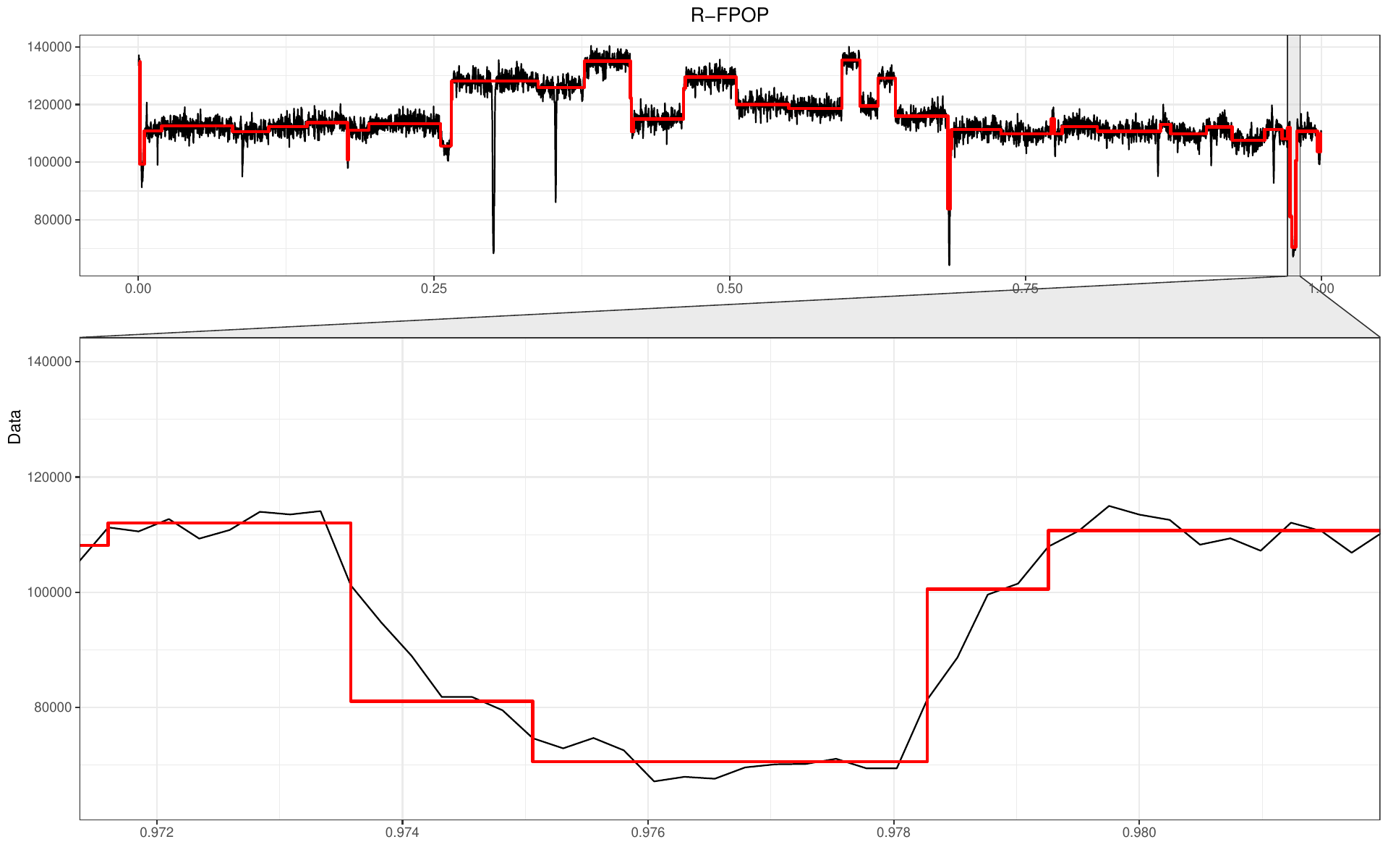}
        \caption{Segmentation of well-log data by R-FPOP. The zoom-in part (around the last flickering event) suggests that R-FPOP overfits the data.}
        \label{f: Well log RFPOP}
\end{figure}

\section{Implementation Details}
\label{A3}

The pseudocode of the proposed MUSCLE algorithm is provided in Algorithm~\ref{alg: MUSCLE}.

\begin{algorithm*}
\caption{MUSCLE}\label{alg: MUSCLE}
\begin{algorithmic}[1]
\Require Observation vector $Z = (Z_1,\dots,Z_n)$, quantile vector $q_{\alpha} = (q_{\alpha}(1),\dots,q_{\alpha}(n))$ and $\beta \in (0,1)$
\Ensure MUSCLE estimator $\hat{f}$
\State Construct a \emph{wavelet tree} of $Z$
\State $j_{\min} \gets 1$, $J \gets 0$, $C_{1,n} = \infty$, $\hat{f}_{1,1} \gets Z_1$, $C_{1,1} = 0$ and $J_1 \gets 0$
\While{ $J<n$ and $C_{1,n}=\infty$}
    \State $i_{\min} \gets n$ and  $m \gets n$
    \For{$j = j_{\min},\dots,n$}
        \If{$J_j = J$}
            \State $i_{\min} \gets \min \{i_{\min},j\}$
        \Else
            \State $S \gets 0$
            \For{$i = j-1,\dots, i_{\min}$}
                \If{$J_i = J$}
                    \State $L = -\infty$ and $U = \infty$ 
                    \For{$l = j-i,\dots, 1$} \Comment{For dyadic intervals, $l \in\{ 2^p,2^{p-1},\dots,1\}$, $p = \floor{{\log_2(j-i)}}$}
                        \For{$ k = i,\dots, j-i$}
                            \State Compute $Z_{k,u_{j-i+1,l}}^{k+l-1}$ and $Z_{k,v_{j-i+1,l}}^{k+l-1}$
                            \State $L = \max \left\{L, Z_{k,u_{j-i+1,l}}^{k+l-1}\right\}$ and $U = \min \left\{U, Z_{k,v_{j-i+1,l}}^{k+l-1}\right\}$ 
                        \EndFor
                    \EndFor
                    \If{$L\leq U$}
                         \State Compute $\hat{\theta}_{i,j}$ 
                         \State Compute $C_{i,j}$, the cost of fitting $Z_i,\dots,Z_j$ with $\hat{\theta}_{i,j}$
                         \State $m = \min\{m,j\}$
                         \If{$C_{1,i-1} + C_{i,j}<C_{1,j}$}
                            \State $\hat{f}_{1,j} \gets \hat{f}_{1,i-1}\mathbbm{1}_{\{1,\dots,i-1\}} + \hat{\theta}_{i,j}\mathbbm{1}_{\{i,\dots,j\}}$
                            \State $C_{1,j} \gets C_{1,i-1} + C_{i,j}$
                            \State $J_j \gets J + 1$
                         \EndIf
                    \Else
                        \If{$T_{[i/n,j/n]}^0(Z,\theta)> \max_{s\leq n-i_{\min}}q_{\alpha}(s)$ for all $\theta\in \mathbb{R}$} 
                            \State $S \gets 1$ \Comment{Pruning 1: drop all indices $j'>j$, as no $j'>j$ with $\mathcal{I}([i/n,j'/n])= 1$}
                        \EndIf
                        \If{$T_{[i/n,j/n]}^0(Z,\theta)> \max_{s\leq j-i_{\min}}q_{\alpha}(s)$ for all $\theta\in \mathbb{R}$}
                            \State \textbf{break} \Comment{Pruning 2: drop all indices $i'<i$, as no $ i'<i$ with $\mathcal{I}([i'/n,j/n])= 1$}
                        \EndIf
                    \EndIf
                \EndIf
            \EndFor
        \EndIf
        \If{$S=1$}
            \State \textbf{break}
      \EndIf
    \EndFor
    \State $J \gets J + 1$ and  $j_{\min} \gets m$
\EndWhile
\State \Return $\hat{f}_{1,n}$
\end{algorithmic}
\end{algorithm*}

\subsection{Computation of local tests}
We give an efficient way to the multiscale test statistic on $I\subseteq [0,1)$. Let $\theta\in \mathbb{R}$. The involved test statistic is 
\begin{equation*}
    T_{I}(Z,\theta) = \underset{J \subseteq I}{\max}\sqrt{2L_J(Z,\theta)} - \sqrt{2\log\frac{e\abs{I} }{\abs{J}}},
\end{equation*}
with $L_J(Z,\theta) = \abs{J}\bigl[\overline{W}_J \log \frac{\overline{W}_J}{\beta}+(1-\overline{W}_J)\log \frac{1-\overline{W}_J}{1-\beta}\bigr]$, $ \overline{W}_J = \frac{1}{\abs{J}}\sum_{i \in J}W_i$ and $W_i = W_{i}(Z_i,\theta) = \mathbbm{1}_{\{Z_i\leq \theta\}}.$
The multiscale constraint $T_I(Z,\theta)\leq q_{\alpha}(\abs{I})$ is equivalent to for all intervals $J\subseteq I$,
\begin{equation}\label{e: tilde q}
    \overline{W}_J \log \frac{\overline{W}_J}{\beta}+(1-\overline{W}_J)\log \frac{1-\overline{W}_J}{1-\beta} \leq \frac{1}{2\abs{J}}\,\left(q_{\alpha}(\abs{I})+\sqrt{2 \log \frac{e\abs{I}}{\abs{J}}}\right)^2\;\eqqcolon\; \tilde{q}_{\abs{I},\abs{J}}.
\end{equation}
Here $\tilde{q}_{\abs{I},\abs{J}} = \tilde{q}_{k,l}$ depends only on $k = \abs{I}$ and $l = \abs{J}$. The function $g_{\beta}:(0,1)\to \mathbb{R}$, 
\(
    x\mapsto x\log\frac{x}{\beta}+(1-x)\log\frac{1-x}{1-\beta}
\)
is strictly convex on $(0,1)$ and has a unique minimizer at $x=\beta$. Let $u_{k,l}$ and $v_{k,l}$ with $u_{k,l} \le v_{k,l}$ be the solutions of $g_{\beta}(x) = \tilde{q}_{k,l}$. Then the inequality in \eqref{e: tilde q} becomes
\(
    \overline{W}_J \in [u_{k,l},v_{k,l}], \text{ for all intervals }  J\subseteq I, \,k = \abs{I}, l = \abs{J}.
\)
By the definition of $W_i$'s, the above displayed relation holds if and only if 
\(
    \theta \in \bigcap_{J\subseteq I}\left[Z_{J,u_{k,l}},Z_{J,v_{k,l}}\right]
\)
where $Z_{J,q}$ is the $q$-quantile of $Z_J = (Z_i)_{i\in J}$ for any $q\in (0,1)$. This intersection  is empty if and only if there is no $\theta\in \mathbb{R}$ fulfils the multiscale constraints on $I$, in which case at least one additional change point is required. The subinterval $J$ is varying in $I$, and the corresponding quantities $u_{k,l}$ and $v_{k,l}$ are changing as well. It amounts to computing the so-called \emph{range quantiles}, which is formulated as the \emph{range quantile problem} below.

\begin{enumerate}[align=left, leftmargin=*,labelsep=.1cm]
    \item[{Input}:] 
    A vector $Y = (Y_1,\dots, Y_n) \in \mathbb{R}^n$ of possibly unsorted elements and a sequence of $m$ queries $Q_1,\dots,Q_m$, where $Q_i = (L_i, R_i, q_i)$ with $1\leq  L_i \leq R_i \leq n$ and $q_i \in [0,1]$ for $i=1,\dots,m$.
    \item[{Output}:]A sequence of $x_1,\dots,x_m$, where $x_i$ is the $q_i$-quantile of the elements in $Y[L_i,R_i] = (Y_{L_i},\dots,Y_{R_i})$.
\end{enumerate}

In computer science, the range quantile problem is solved in two steps: a preprocessing step and a query step. The preprocessing step transforms the original input $Y$ into a certain data structure, and in the query step each query is answered efficiently based on the data structure created in the preprocessing step. We focus on the data structure of \emph{wavelet tree} and refer to \cref{r:dstr} and Table~\ref{t: complexity} for alternative data structures.

\begin{table}[!t]
\centering
\caption{Comparison of different solutions to range quantile problem: By Naive, DH, MST, PST, and WT, we denote the solutions provided by naive implementation, double-heaps, merge sort tree, persistent segment tree, and wavelet tree, respectively. The table includes their space requirements, computational complexities of the preprocessing step, each query, and the total complexity for $m$ queries, where $m \leq n^3$. All results are from \citep{castro2016wavelet}.}%
\begin{tabular*}{0.9\columnwidth}{@{\extracolsep\fill}l|l|l|l|l@{\extracolsep\fill}}
\toprule
Method   & Space & Preprocessing & Query  & Total  \\
\midrule
Naive       &  $O(n^3)$ & $O(n^3\log n)$ & $O(1)$ &  $ O(n^3\log n +m)$\\
DH      & $O(n^3)$ & $O(n^3\log n)$ & $O(\log n)$ & $O(n^3\log n + m\log n)$  \\
MST      & $O(n\log n)$ & $O(n\log n)$ & $O(\log^2 n)$ & $O(n\log n + m\log^2 n)$ \\
PST      & $O(n\log n)$ & $O(n\log n)$ & $O(\log n)$ & $O(n\log n + m\log n)$ \\
WT       & $O(n\log n)$ & $O(n\log n)$ & $O(\log n)$ & $O(n\log n + m\log n)$\\
\bottomrule
\end{tabular*}
\label{t: complexity}
\end{table}

A wavelet tree is a binary tree whose nodes consist of a \emph{value vector} and a \emph{rank vector}. The value vector of root is exactly the whole input and it is spitted into two children by comparing with the median of the value vector. More precisely, its left child contains all elements smaller than median while its right child consists of all elements larger or equal than median (for simplicity we assume that all $n$ data are distinct). For $i \in\{1, \dots, n\}$, the $i$-th entry of rank vector in root indicates the number of elements belonging to the left child between the first and $i$-th value. A wavelet tree can be constructed recursively until its leaves (i.e., nodes with only one element). Note that it is possible to compute all splitting required medians in $O(n\log n)$ steps and each layer can be constructed in $O(n)$ time. Thus, a wavelet tree is created in $O(n\log n)$ steps. Each single query $Q = (L,R,q)$ is answered in $O(\log n)$ steps as follows: Let $k = \ceil{(R-L+1)\cdot q}$ denote order of the $q$-th quantile of $(Y_L,\dots,Y_R)$. At each node we compute the difference between the rank of $R$ and rank of $L-1$, which equals to the number of elements in $(Y_L,\dots,Y_R)$ belonging to its left child. If it is greater or equal than $k$, then the target value is in the left child; otherwise, the target value is in the right child. We continue this process until the last layer and return value in the last leaf. The computation  cost of at each node is $O(1)$, since all ranks are precomputed. Note the depth of a wavelet tree is $O(\log n)$, so the computational complexity of each query is $O(\log n)$. 

This leads to the following proposition.
\begin{proposition}\label[proposition]{prop: I}
   For any subinterval $I\subseteq [0,1)$, the quantity $\mathcal{I}(I)$ can be computed in $O(\abs{I}^2\log n)$ steps provided that the \emph{wavelet tree} of $Z_1,\dots,Z_n$ is constructed.
\end{proposition}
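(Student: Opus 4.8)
The plan is to reduce the evaluation of $\mathcal{I}(I)$ to a bounded number of range quantile queries and then simply count them. First I would invoke the equivalence established immediately above: by strict convexity of $g_\beta$, there exists a $\theta$ with $T_{\mathring I}(Z,\theta)\le q_\alpha(|\mathring I|)$ if and only if $\theta$ lies in the intersection $\bigcap_{J\subseteq \mathring I}[Z_{J,u_{k,l}},Z_{J,v_{k,l}}]$, where $k=|\mathring I|$, $l=|J|$, and $u_{k,l}\le v_{k,l}$ solve $g_\beta(x)=\tilde q_{k,l}$ as in \eqref{e: tilde q}. Hence $\mathcal{I}(I)=1$ precisely when this intersection is nonempty. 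Since each factor is a closed interval, the intersection is nonempty iff
\[
L \coloneqq \max_{J\subseteq \mathring I} Z_{J,u_{k,l}} \;\le\; \min_{J\subseteq \mathring I} Z_{J,v_{k,l}} \eqqcolon U,
\]
so the entire task is to compute the scalars $L$ and $U$ and to compare them.

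Next I would count the work. The half-open contiguous subintervals $J\subseteq\mathring I$ are determined by their endpoints among the indices inside $I$, so there are $O(|I|^2)$ of them; this is exactly the double loop over scale $l$ and position $k$ appearing in Algorithm~\ref{alg: MUSCLE}. The thresholds $u_{k,l},v_{k,l}$ depend only on the integers $k=|\mathring I|$ and $l=|J|$, so there are only $O(|I|)$ distinct pairs; each is the root of a one-dimensional strictly convex equation, obtainable to fixed precision in $O(1)$ time, and all of them can therefore be precomputed in $O(|I|)$ time, negligible against the final bound. For each of the $O(|I|^2)$ subintervals $J$ I would then issue two range quantile queries, for $Z_{J,u_{k,l}}$ and $Z_{J,v_{k,l}}$, maintaining a running maximum for $L$ and a running minimum for $U$. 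By the query complexity recorded in \cref{t: complexity}, each such query costs $O(\log n)$ once the wavelet tree of $Z_1,\dots,Z_n$ is built. Consequently the full pass costs $O(|I|^2\log n)$, and the closing comparison $L\le U$ is $O(1)$, giving the claimed bound.

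The genuine content is essentially contained in the reduction rather than in the asymptotic bookkeeping, so the main point requiring care is that the pointwise membership $\overline W_J\in[u_{k,l},v_{k,l}]$ must translate \emph{simultaneously} across all scales into a single scalar $\theta\in[Z_{J,u_{k,l}},Z_{J,v_{k,l}}]$ for every $J$ at once; this joint feasibility is exactly the nonemptiness of the intersection, which the $\max/\min$ reformulation above renders testable in one pass. A secondary subtlety worth stating explicitly is the consistent matching of quantile orders: the wavelet-tree query returns the $\lceil(R-L+1)q\rceil$-th order statistic, so one must check that this discrete rounding realizes the intended empirical-quantile endpoints $Z_{J,u_{k,l}}$ and $Z_{J,v_{k,l}}$, keeping the equivalence exact. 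Both are routine once isolated, and neither affects the $O(|I|^2\log n)$ count.
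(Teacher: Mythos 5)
Your proof is correct and follows essentially the same route as the paper: the convexity of $g_\beta$ reduces $\mathcal{I}(I)$ to testing nonemptiness of $\bigcap_{J\subseteq \mathring{I}}[Z_{J,u_{k,l}},Z_{J,v_{k,l}}]$, which the paper's Algorithm~\ref{alg: MUSCLE} (lines 12--18) implements exactly as your running max/min comparison $L\le U$, with $O(\abs{I}^2)$ subintervals each costing two $O(\log n)$ wavelet-tree quantile queries. Your two flagged subtleties (joint feasibility across scales and the discrete rounding of quantile orders) are handled the same way in the paper's construction, so nothing is missing.
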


\subsection{Proof of \Cref{th: complexity}}

We proceed in the following steps. 

\emph{Recursion.} Consider the recursive relation in \eqref{e: bellman}. The subproblem with $j=1$ is solved immediately (cf. line 1 in \cref{alg: MUSCLE}). For any $j\geq 2$, the $j$-th subproblem can be solved recursively by \eqref{e: bellman} via checking whether there is a constant function fulfilling the multiscale on $[i/n,j/n)$, i.e., $\mathcal{I}([i/n,j/n)) = 1$ (lines 12--27 in \cref{alg: MUSCLE}). One should keep in mind that even if we knew $\mathcal{I}\left(\left[i/n,j/n\right)\right) = 0$ for some $i$, we can not conclude $\mathcal{I}\left(\left[i'/n,j/n\right)\right)=0$ for $i'< i$, due to the varying penalties and quantiles. Since $\widehat{K}[j]$ requires the minimization among \emph{all} indices $i\leq j-1$ with $\mathcal{I}([i/n,j/n))= 1$, all indices should be taken into account. 

\emph{Pruning.}
Nevertheless, the checking of some indices can be avoided if we employ some pruning techniques similarly as in \citet{li2016fdr}. Towards this, we consider the test statistic $T_I^0(Z,\theta)= {\max}_{J \subseteq I}\sqrt{2L_J (W,\theta)} - \sqrt{2\log\left({en}/{\abs{J}}\right)}$, where  the largest penalty with size $n$ is subtracted. Then it holds for all $Z$, $\theta$ and $I$ that $T_I(Z,\theta) \geq T_I^0(Z,\theta)$. During the minimization over indices ranging from $i = j-1$ to $i = 0$, if we find an index $i$ with $\mathcal{I}\left(\left[i/n,j/n\right)\right) = 0$, then we check whether there exists a $\theta \in \mathbb{R}$ such that 
$
T_{[i/n,j/n]}^0(Z,\theta) \leq \max_{m\leq n}q_{\alpha}(m),
$ 
which is the most relaxed multiscale constraint. If no constant $\theta$ fulfills this constraint displayed above, then $\mathcal{I}(I)=0$ for all $I$ containing $[i/n,j/n)$ and thus the indices $1,\dots,i-1$ can be excluded from further consideration (cf. lines 29--33 in \cref{alg: MUSCLE}). 

\emph{Final dynamic program.}
The final procedure is as follows. Let $\mathcal{A}_0 =\{0\}$ and $\mathcal{B}_0 = \{1,\dots,n\}$ and for $k = 1\dots n$,
\begin{align*}
    r_k &\coloneqq \left\{j: \exists i \in \mathcal{A}_{k-1}, \theta\in \mathbb{R} \text{ s.t. }  T_{[\frac{i}{n},\frac{j}{n}]}^0(Z,\theta)\leq \underset{m}{\max}q_{\alpha}(m) 
    \right\},\\
    \mathcal{A}_k & \coloneqq \{j\in \mathcal{B}_{k-1} \cap [1,r_k]:\,\mathcal{I}([\tfrac{i}{n},\tfrac{j}{n}))=1 \text{ for some } i \in \mathcal{A}_{k-1}\},\\
    \mathcal{B}_k & \coloneqq \mathcal{B}_{k-1}\setminus \mathcal{A}_k.
\end{align*}
That is, $r_k$ serves as an upper bound of the indices of subproblems that can be solved with $k$ change points, and $\mathcal{A}_k$ contains all possible candidates of $\hat{\tau}_k$. If $k^*$ is the index such that $n\in \mathcal{A}_{k^*}$, then the  number of change points estimated by MUSCLE is $\widehat{K} = k^*-1$. 

\emph{Complexity analysis.} 
We first preprocess data $(Z_1,\dots,Z_n)$ by constructing a wavelet tree, which requires $O(n\log n)$ steps and $O(n\log n)$ memory (line 1 in \cref{alg: MUSCLE}). Then, by \cref{prop: I}, $\mathcal{I}([i/n,j/n))$ can be computed in $O((j-i)^2\log n)$ steps (lines 13--18 in \cref{alg: MUSCLE}). Thus, the overall complexity for computing $\widehat{K}$ is, up to a constant factor,  
\begin{equation}\label{e: complexity}
   \sum_{k=0}^{\widehat{K}} \sum_{i\in\mathcal{A}_k } \sum_{j = i+1}^{r_{k+1}}(j-i)^2 \log n
    \lesssim \log n\sum_{k=0}^{\widehat{K}} \abs{\mathcal{A}_k}\left(r_{k+1}-\min\mathcal{A}_k
    \right)^3 
    \lesssim n \log n\cdot \underset{0 \leq k \leq \widehat{K}}{\max}\left(r_{k+1}-\min\mathcal{A}_k
    \right)^3.
\end{equation}
The MUSCLE $\hat{f}$ can be computed long with $\widehat{K}$ in \eqref{e: MUSCLE}. When checking each index in the dynamic program, we need only $O(1)$ additional steps for $\hat{f}$. Thus, MUSCLE has a computation complexity as in \eqref{e: complexity}. The term $\max_{0 \leq k \leq \widehat{K}}\left(r_{k+1}-\min\mathcal{A}_k \right)^3$ in \eqref{e: complexity} depends on the underlying signal and the noise. For instance, if the signal consists of segments of bounded lengths and the noise is mild, then it happens most likely that $\widehat{K} = O(n)$ and also $\max_{0 \leq k \leq \widehat{K}}\left(r_{k+1}-\min\mathcal{A}_k \right)^3 = O(1)$. In this scenario, MUSCLE attains its best computational complexity of  $O(n\log n)$. In general, as $r_k\le n$, it holds always that $\max_{0 \leq k \leq \widehat{K}}\left(r_{k+1}-\min\mathcal{A}_k \right)^3 \le n^3$. By \eqref{e: complexity} we obtain the asserted upper bound $O(n^4\log n)$ on the computational complexity, which is reduced to $O(n^3\log n)$ if only intervals of dyadic lengths are considered.  The overall memory complexity is dominated by the storage of the wavelet tree, which is  $O(n\log n)$.

\end{appendix}
\end{document}